\PassOptionsToPackage{table}{xcolor}
\documentclass[runningheads]{llncs}
\usepackage[T1]{fontenc}
\usepackage{graphicx}

\newcommand{\red}[1]{{\color{red} #1}}

\newcommand{\myparagraph}[1]{{\noindent \bf #1}}
\usepackage{color}
\usepackage{amsmath}
\usepackage{amsfonts}
\usepackage{mathtools}
\usepackage{bbm}
\usepackage{algorithm}
\usepackage{algpseudocode}
\usepackage{booktabs}      %
\usepackage{pgfplotstable} %
\usetikzlibrary{arrows.meta}
\usepackage{pdflscape}     %
\usepackage{siunitx}       %
\pgfplotsset{compat=1.18}
\algrenewcommand\algorithmicrequire{\textbf{Input:}}
\algrenewcommand\algorithmicensure{\textbf{Output:}}

\newcommand{\sk}{\ensuremath{\mathtt{sk}}}

\newcommand{\ppt}{\mathsf{PPT}}

\newcommand{\setup}{\mathsf{Setup}}

\newcommand{\NN}{\mathbb{N}}

\newcommand{\vk}{\mathsf{vk}}

\newcommand{\adv}{\mathcal{A}}

\newcommand{\prover}{\mathcal{P}}
\newcommand{\verifier}{\mathcal{V}}

\newcommand{\ignore}[1]{}

\newcommand{\simu}{\mathsf{Sim}}

\newcommand{\NP}{NP}

\newcommand{\negligible}{\nu}

\newcommand{\remove}[1]{}

\newcounter{itemcount4}

\newcommand{\crs}{\mathtt{CRS}}

\newcommand{\relation}[1]{\mathcal{R}_{#1}}

\newcommand{\sample}{\overset{\$}{\gets}}

\let\llncsproof\proof
\let\llncsendproof\endproof
\let\proof\relax
\let\endproof\relax
\usepackage{booktabs}
\usepackage{enumitem}
\usepackage{latexsym}
\usepackage{mathtools}
\usepackage{multirow}
\usepackage{multicol}
\usepackage{mwe}
\usepackage{makecell}
\usepackage{pgfplots}
\usepackage{pifont}
\usepackage{ragged2e}
\usepackage{rotating}
\usepackage{subcaption}
\usepackage{tabu}
\usepackage{tabularx}
\usepackage{threeparttable}
\usepackage{tikz}
\usepackage[normalem]{ulem}
\usepackage{url}
\usepackage{verbatim}
\usepackage{wrapfig}
\usepackage{xcolor}
\usepackage{xspace}
\usepackage[framemethod=tikz]{mdframed}
\usepackage{algpseudocode}
\usepackage[most]{tcolorbox}

\usepackage{amsfonts,amsmath,amssymb,amsthm}
\usepackage{multirow}
\usepackage{hyperref}
\usepackage{svg}
\usepackage{float}
\usepackage[
    n,
    operators,
    advantage,
    sets,
    adversary,
    landau,
    probability,
    notions,    
    logic,
    ff,
    mm,
    primitives,
    events,
    complexity,
    asymptotics,
    keys]{cryptocode}
\usepackage{algorithm}
\algrenewcommand\algorithmicrequire{\textbf{Input:}}
\algrenewcommand\algorithmicensure{\textbf{Output:}}

\usepackage{breakcites}

\expandafter\def\csname ver@etex.sty\endcsname{}

\makeatletter
\def\th@plain{%
  \thm@notefont{}%
  \itshape %
}
\def\th@definition{%
  \thm@notefont{}%
  \normalfont %
}
\makeatother

\usepackage{mdframed}
\newmdenv[%
  backgroundcolor=gray!10,
  linecolor=black,
  linewidth=1pt,
  nobreak=true,
  innerleftmargin=5pt,
  innerrightmargin=10pt,
  innertopmargin=1pt,
  innerbottommargin=10pt,
  font=\normalfont,
]{theorembox}

\let\proof\llncsproof
\let\endproof\llncsendproof
\input{symbols_env}

\newtheorem{assumption}{Assumption}
\usepackage{booktabs}
\usepackage[table]{xcolor}
\usepackage{tabularx}
\definecolor{SplitA}{HTML}{E3F2FD} %
\definecolor{SplitB}{HTML}{FFF3E0} %
\definecolor{SplitC}{HTML}{E8F5E9} %

\providecommand{\Description}[2][]{ }

\begin{document}

\title{Proof-of-Machine-Learning-Inference: \\
Mining Knowledge, Not Hashes}
\titlerunning{PoML: Mining Knowledge, Not Hashes}
\newif\ifanonymous
\anonymousfalse
\ifanonymous
\author{}
\authorrunning{}
\institute{}
\else

\author{Wan Ki Wong\inst{1} \and Michele Ciampi\inst{1} \and Rik Sarkar\inst{1}}
\authorrunning{W. K. Wong et al.}
\institute{University of Edinburgh, Edinburgh, United Kingdom\\
\email{\{thomas.wong,michele.ciampi,rik.sarkar\}@ed.ac.uk}}
\fi

\maketitle

\begin{abstract}
Bitcoin-style Proof-of-Work (PoW) secures permissionless \\blockchains but spends substantial computation on hash puzzles which have no utility other than extending the chain. 
At the same time, we have seen that the rise of generative AI has created significant demand for computational resources that are unfortunately available only to a few players. We propose a solution that solves both problems at the same time. In particular, we present Proof-of-ML-Inference (PoML), a consensus protocol in which miners perform user-requested machine-learning inference as part of their consensus algorithm. 
We identify exactly what computational assumptions are needed for ML inference to safely replace hash work, formalize those assumptions, 
and give evidence that LLMs and diffusion models satisfy them.
We then formally analyse the security of our protocol by reducing the security our protocol to the security of the  Bitcoin Backbone model. Finally, we empirically evaluate block-time stability, showing that, in a simulated environment, PoML maintains a block production rate close to its calibrated block-time target.

\keywords{Proof of useful work \and Generative AI \and diffusion models \and large language models \and blockchain consensus }
\end{abstract}

\section{Introduction}
\label{sec:introduction}
\renewcommand{\red}[1]{#1}

Public generative models with open weights are powerful tools for democratizing the use of AI, with model hubs such as Hugging Face~\cite{hf_state_of_os_spring26} supporting a large ecosystem of publicly available models for text analysis, image processing, scientific research and more~\cite{longpre2025economies}. Public models also support independent auditing and public evaluation of model safeguards in response to growing AI-safety concerns~\cite{zhong2026watch,internationalAISafetyReport2026}. However, the execution of these models is still expensive~\cite{luccioni2024power} and is impractical for a small or medium sized organisation. A natural approach to crowdsource and govern a shared model execution system is to use blockchains, where users can freely submit large batches of queries and model executors can be paid for their contributions in answering these queries.

Proof-of-work-based blockchains are particularly suitable for a democratic machine learning platform as they enable truly permissionless environments,  which allow anyone to participate. One well-known drawback of traditional proof-of-work systems, such as Bitcoin, is that they are computationally wasteful, as they consume immense amounts of energy to run a hash mining protocol for consensus~\cite{polytechnique2024bitcoin,de2022revisiting}. In this paper, we propose to instead use the generative AI computation itself as proof-of-work.

We call our proof-of-work Proof-of-Machine-Learning (PoML) and design a blockchain system based on it. PoML rechannels the computation required to solve standard PoW puzzles into computation useful for answering inference queries. This aims to make large-scale AI compute available more broadly (to anyone willing to pay a fee). Designing a PoML blockchain protocol that provides a similar level of security as Bitcoin, while replacing standard PoW with inference queries computed using current generative AI technologies, is the major challenge faced in this paper. Below we summarize our contributions in more detail.

\subsection{Our Contributions}

The high level description of our system is quite simple. Users submit their queries $q$ to a common pool. Miners, acting as model executors, pick up queries at random and compute the corresponding answers $y$ using a public model. A miner successfully generates a block if it finds a solution $y$ such that $H(y||h)<T$ for a target value $T$, where $h$ is the hash of the latest block. The more inference queries a miner answers, the more likely it is to generate a new block and receive the associated rewards.

Making this approach work faces various challenges. 
The first is that the adversary could fabricate and submit queries queries that are all \emph{easy} to solve. For example, the adversary could issue a set of malicious queries that can be easily solved based on some pre-computation, or if part of the computation done to answer a certain query can be cached and re-used. These attacks give a leverage to the adversary that honest parties would not have, as they are supposed to select queries randomly and execute the model using the standard procedure.

To solve this problem, we consider models that take randomness as an additional input, and guarantee that the only way to compute an inference query (even on a maliciously chosen query) is via a circuit of depth at least $d$.  Hence, this property guarantees that, as long as the auxiliary random string is uniformly randomly sampled, there is a certain amount of computation that a node needs to do (i.e., there are no shortcuts that an adversary can take to compute an output of a model).
We formally define this special property that a model should have, calling it \emph{Computational Independence (CI)}. As one would expect, in our application we use such a model using randomness taken from the blockchain, and assuming the computational independence of the model, we reduce the security of our protocol to the security of the Bitcoin Backbone~\cite{GKL20}.

The other main result of the paper is to show that existing models do satisfy the property of CI. In particular, we formally argue that a standard class of stochastic diffusion models naturally satisfies the notion. Hence, our protocol can be instantiated to, for example, answer queries related to image generation \cite{rombach2022high}, video editing \cite{ho2022video}, and audio synthesis \cite{kong2021diffwave}.

We also argue, although only empirically, that under certain conditions large language models (LLM) also seem to satisfy CI. We show that on the same inputs (but different randomness), LLMs produce different outputs with a probability rapidly increasing with length. This suggests that, even when computing inference of similar queries, the traces of the computation differ significantly for large outputs. It remains an interesting open question to prove that there are LLM which are also CI.
We finally implement and evaluate our protocol, showing that PoML can achieve stable block production rates with lower observed block-time variance than the PoW baseline.

\noindent{\bf Additional subtleties to be solved. }
At the beginning of the section we said that to win a lottery it is sufficient to find a solution $y$ such that $H(y)<T$. This would clearly waste a lot of work, as the miner would publish only $y$, discarding all the solutions to the other queries used to run the lottery. A simple solution to this problem would be to change the winning condition to be $H(y_1,\dots,y_n)<T$ (where $y_i$ is a solution to a query $q_i$). This would require a lottery winner to disclose $n$ solutions. Unfortunately this does not work, as an adversary could query $H$ using any subset of the solutions found so far, and increase his chance of winning the lottery. Making this the honest strategy also would not work, as it could be that $H(y_n)<T$, but $H(y_1,\dots,y_n)\geq T$. Hence, only the solution for $y_n$ would be disclosed. To solve this problem we instead ask the miner to use as the randomness needed to solve the $j$-th query, the hash of the previous solution found. This create a sub-chain of solutions, and the miner wins if the head $h$ of such a chain is that $H(h)<T$. Changing the position of an element in the chain, would require recomputing all the inference queries whose randomness depend on it.
An unrelated problem, is that only one miner can win the lottery for each block and the work performed by the other miners would be lost (recall that the inference query is implicitly tied to a specific block and as such cannot be \emph{recycled}). We mitigate this problem by allowing miners that did not win the block-lottery to submit the inference queries they solved in the form of a standard transaction, and still get a (smaller) reward. This reward is issued only for inference queries for which a solution has not been posted yet, and only for solutions related to queries that are not too old. While this method does not fully solve the problem, as multiple miners could be working on exactly the same queries, we show that the the wasted work is minimal: for example, where number of pending query is 100x of the number of miners, the fraction of deuplicated work remains below 5$\%$ in our experiments.  

Finally, in the sketched protocol the solutions to the queries are made public. We note that this is not preferable, as a user paying to get a certain query to be solved, may not want other users to have free access to that solution. We solve this problem by encrypting the solutions with the public-key of the user that issued the related query. A zero-knowledge proof will be send along with the encryption, to prove that the encryption contains a correct answer to a given query.

\subsection{Related works}

Several proposals replace energy-intensive Proof-of-Work hashing with Proof of Useful Work (PoUW) including machine learning.  Proof of Learning (PoLe)~\cite{pole} uses training of neural networks as the useful work. The protocol binds training to previous blocks to prevent precomputation, but at the cost of requiring a modified model architecture. Other ML-training based methods aim for different objectives. Lihu et al.~\cite{lihu2020pouw} limit nonce grinding by deriving nonces from training byproducts such as gradient updates. Proof of Training (PoT)~\cite{pot} combines elements of Proof-of-Stake and Proof-of-Activity. Proof-of-Collaborative-Learning (PoCL)~\cite{pocl}  focuses on federated learning with peer-evaluations and reward the top-ranked contributors. Various other proof of useful work proposals use heavy computational tasks such as Matrix multiplication~\cite{cryptoeprint:2025/685}, DNA alignment~\cite{ileri2016coinami} and finding prime numbers~\cite{king2013primecoin}. While these tasks are of high value, their demand is restricted to scientists and engineers. In contrast, generative AI inference used in PoML has wide applicability among non-specialists and it works with standard types of generative models, making the protocol more widely useful.

Formal security analyses of PoUW are provided in~\cite{FKPR22} in the context of solving combinatorial optimization problems. Like PoML, it adopts the Bitcoin Backbone's flat computation model and proves chain growth, chain quality, and common prefix. To prevent easy-instance selection and pre-computation, Ofelimos places the useful work between a pre-hash and a post-hash, with the pre-hash serving as a traditional PoW puzzle. A formal ledger-property proof for a deep-learning-based PoUW is shown in D-PoDL,~\cite{SLT23}, using a similar ``hash-training-hash'' structure. Both protocols retain wasteful pre-hash nonce grinding, while PoML avoids this by feeding a block-derived seed into a Verifiable Random Function (VRF) and hashing encrypted inference output directly for the mining lottery.

Finally, Oleksak et al.~\cite{pouw} propose a consensus-layer marketplace for general-purpose SNARK generation rather than an ML task. They bind each proof to the block being produced by injecting an integrity parameter into the arithmetic circuit, and chain successive proofs to prevent proof theft and proof replay. Their main useful work is the proof generation while PoML's is machine learning inference, with ZKPs used only for verification.

\section{Preliminaries}
\label{sec:preliminaries}

We assume familiarity with standard cryptographic notions,  including Non-interactive zero-knowledge proofs, Public-Key encryption scheme, Verifiable Random Functions (VRFs) and Digital Signatures. Formal definitions are provided in Appendix~\ref{app:zk}--\ref{app:sig} for reference. Below we provide short introduction to blockchain and generative models.

\subsection{Blockchain and Proof-of-Work Consensus}

A \emph{blockchain} is a decentralized, append-only ledger of ordered blocks, each linking to its predecessor by hash and carrying transactions~\cite{nakamoto2008bitcoin,bonneau2015sok}. In Proof-of-Work (PoW), miners search for a nonce $\mathit{ctr}$ such that $H(\mathit{ctr},G(s,\mathsf{tx}))<T$, where $H,G$ are hash functions, $s$ is the predecessor hash, and $\mathsf{tx}$ is the block content. Modeling $H$ as a random oracle gives independent success probability $p=T/2^\kappa$ per query, so the expected work is $1/p$. $T$ is adjusted dynamically to maintain the target block rate.

For security analysis, we use the Bitcoin backbone model of Garay, Kiayias, and Leonardos~\cite{GKL20}, which bounds each party to $q$ random-oracle queries per round and assumes a strict honest computational majority. We recall that the three fundamental properties that a secure blockchain system should satisfy are  common prefix (CP), chain quality (CQ), and chain growth (CG); the formal formulations can be found in~\cite{GKL20}.

\subsection{Generative models}
\label{subsec:ml-background}

\subsubsection{Diffusion Models}
\label{subsec:diffusion-models}

\emph{Diffusion models}~\cite{ho2020ddpm} generate outputs such as images by repeatedly denoising high dimensional Gaussian noise. Starting from $x_T\sim\mathcal{N}(0,I_d)$, a learned denoiser $f_\theta$ -- typically a U-Net structured neural network -- sequentially produces $x_T,\ldots$ upto $x_0$ -- the final model output. Text prompt based models use an embedding of the prompt to steer the denoising process toward relevant outputs. 

Many widely used diffusion models are stochastic models~\cite{ho2020ddpm} that add Gaussian noise at each step of denoising to improve generation quality and diversity. In notation: $x_{t-1}=f_\theta(x_t,x_\mathsf{u},t)+\sigma_tz_t$ for $t=T,T{-}1,\ldots,2$, with fresh $z_t\sim\mathcal{N}(0,I_d)$. This per-step noise makes each denoising trajectory unique with high probability and is the basis of our computational independence argument in Section~\ref{sec:our-model}.

\subsubsection{Large Language Models}
\label{subsec:large-language-models}

An \emph{autoregressive large language model (LLM)}, typically a causal Transformer~\cite{vaswani2017attention} takes as input a string of tokens and output the next token. Given prompt tokens $x=(x_1,\ldots,x_{t-1})$, the main function of an LLM is to generate a probability distribution $p_{\theta,t}(\cdot\mid x,y_{<t})$ for the next token. A random sample from this distribution is the output, which is appended to $x$ for the next iteration. 
Thus a $k$-token response $y$ has probability $\Pr[Y=y\mid x]=\prod_{t=1}^{k}p_{\theta,t}(y_t\mid x,y_{<t})$.

The essential component of an LLM is an {\em attention block} that determines via linear projections, the relation between token pairs in $x$. Several attention blocks in parallel make a {\em transformer}, and several transformers in sequence form the main pipeline of an LLM producing the proabbility distribution. We investigate compatibility of LLMs using GPT family models~\cite{radford2019language,brown2020language,openai2023gpt4}.

\label{sec:problem}

\section{Computational independence and PoML-compatible models}
\label{subsec:model-class}

In our protocol, users submit model queries to a mempool, miners retrieve these requests and compete to produce a valid block by including outputs of model inference and generating zero-knowledge proofs about the fact that the inference has been computed correctly (we refer to App.~\ref{app:overview} for a schematic representation). We define here a class of ML models suitable for use in PoML and formally define the properties that models of such a class must satisfy.

Let us write stochastic ML models $M_\theta$ parameterised by weights $\theta$ that take a \emph{user input} $x$ (e.g., a text prompt) and an ordered randomness collection $\mathcal{R}=(z_1,\ldots,z_{T(x)})$. The public model specification fixes $T(x)$ and the interpretation of each $z_t$, and $\mathcal{R}$ contains all inference randomness. The model outputs $y=M_\theta(x,\mathcal{R})$ in a variable-length output space and is deterministic once $(x,\mathcal{R})$ is fixed.

For each $(x,\mathcal{R})$, let $\mathsf{Depth}_\theta(x,\mathcal{R})$ be the depth\footnote{Here, we use the standard notion of circuit depth: the length of the longest sequence of dependent gates, independently of any ML-specific terminology.} of the circuit executed by an honest evaluation. Informally our notion of \emph{Computational independence} requires that no adversary can compute a model inference output unless it has access to a circuit of a certain depth, regardless of the computation they have performed before knowing the randomness to be used.

\begin{definition}[$\varepsilon_i$-computational independence]
\label{def:delta-reuse}
Let $M_\theta$ be a model, $\varepsilon_i \in [0,1)$, and $\mathcal{A}=\{(\mathcal{A}_1,\mathcal{A}_2)_\kappa\}_{\kappa\in\mathbb{N}}$ a probabilistic polynomial-time adversary where $\mathcal{A}_1$ outputs arbitrary state $\mathsf{st}$. We say $M_\theta$ is \emph{$\varepsilon_i$-computationally independent} if, for every such $\mathcal{A}$ satisfying $\mathsf{Depth}_{\mathcal{A}_2}(\mathsf{st},x,\mathcal{R})\leq\allowbreak(1-\varepsilon_i)\,\mathsf{Depth}_\theta(x,\mathcal{R})$ for every $(x,\mathcal{R})$, there exists a negligible function $\mu(\cdot)$ such that for all $\kappa \in \mathbb{N}$:
\begin{equation}
    \Pr\!\left[\, y^* = M_\theta(x^*, \mathcal{R}^*) \;:\;
    \begin{array}{l}
        (x^*,\mathsf{st}) \leftarrow \mathcal{A}_1(M_\theta) \\[2pt]
         \mathcal{R}^* \leftarrow (\{0,1\}^{\kappa})^{T(x^*)} \\[2pt]
        y^* \leftarrow \mathcal{A}_2(\mathsf{st},x^*, \mathcal{R}^*)
    \end{array}
    \right] \;\leq\; \mu(\kappa).
    \label{eq:online-depth-bound}
\end{equation}
\end{definition}

\begin{definition}[PoML-compatible model]
\label{def:model}
$M_\theta$ is said to be PoML-compatible if $M_\theta$ is $\varepsilon_i$-computationally independent for
some fixed $\varepsilon_i$ in the PoML protocol. We denote $\mathcal{K}$ as
the class of all PoML-compatible models.
\end{definition}

Intuitively, PoML requires that prior work on any other inputs, including the
same user input $x$ with a different randomness
$\mathcal{R}'\neq\mathcal{R}$, does not yield a shortcut for computing the
target input $(x,\mathcal{R})$ in substantially less time.  The computational
independence property formalizes this requirement by forcing an online
adversary to incur nearly the same cost as an honest work (with no saved states) of the same evaluation, even after 
polynomial-time precomputation.

\subsection{PoML-Compatible Generative Models}
\label{sec:our-model}

Both diffusion models and LLMs interleave a  deterministic operation and a random one to generate samples of relaistic outputs. We argue that this structure makes them both compatible with PoML.

\subsubsection{Stochastic Diffusion Models.} The randomness at the initialization of diffusion models as well as noise at every step makes computational independence highly probable. The full analysis can be seen in Appendix~\ref{app:cia}. The analysis makes use of the fact that the space of operation of diffusion models (the latent space) is large. In this high dimensional space, the probability is small that two different generation processes $u$ and $v$ will ever coincide. However, if we assume that at some step processes $u$ and $v$ are both at location $x_t$, the two processes will add different high dimensional Gaussian noises $z_u$ and $z_v$ to obtain the next step. For a space of dimension $d$, the probability that $z_u$ and $z_v$ are within distance $\delta$ of each other is $\approx \delta^d$ for small values of $\delta$. So, for example, if $\delta$ is the quantization resolution of the cryptographic system, then with high probability (with increasing dimension), the two processes diverge. Theorem~\ref{thm:cia-theorem} shows the analysis for the $T$ step denoising process. 

The other concern here is that well separated processes may produce parallel denoising vector outputs and thus have some redundant computations within the neural network $f_\theta$. We argue that this is unlikely, given that these vectors need to have different directions in high dimensional space to approach the correct submanifold of realistic images. In experimental tests, two inputs that differ only by a small gaussian noise with standard deviation $0.001$, the minimal difference between any intermediate vector is $2.93\times10^{-3}$, hence no meaningful computational states can be cached and reused. See Appendix~\ref{app:cia} and Theorem~\ref{thm:amort-diff} for further details.

\subsubsection{Large Language Models.} In LLMs, the deterministic step consists of the transformer sequence, while the randomized step is the sampling form the distribution found by the transformers. This setup is somewhat different from diffusion models, as unlike diffusion models, LLMs do not have a random starting point, which makes the initial computations highly predictable. Since the distribution at the sampling step is not gaussian and not in our control. A concentrated distribution can cause different executions to produce the same next token. 

We address these by adding noise to the starting point. LLMs operate by embedding each token in the input into a high dimensional space and thus converting them into numerical inputs. To introduce randomness at the start, we can add a small gaussian noise to each embedding. This technique is used in RESTA~\cite{hase2025smoothedembeddingsrobustlanguage} and Soft Reasoning~\cite{pmlr-v267-zhu25ae} to improve the robustness of the model.

Once the input to the transformer is perturbed, all the intermediate activations, until the final output logits, will all be a little different. We validated this via experiment on some benchmark datasets on the GPT-2 model, as shown in figures \ref{fig:llm-layer-linf} and \ref{fig:llm-prefix-collisions} in App.~\ref{app:llm-perturbation}. 
Figure~\ref{fig:llm-layer-linf} shows between the inference of two exact same prompts that adding noise to the model leads to numerical differences of at least $2.03\times10^{-3}$ . Figure~\ref{fig:llm-prefix-collisions} further shows that the decoded output tokens differ after 5 tokens with >99\% probability. This difference is enough in PoML, because direct reuse of any earlier computation is prevented, and an adversary will need to redo the computation in order to generate the zero-knowledge proof. We formalize this as Assumption~\ref{ass:llm-no-reuse} in Appendix~\ref{app:llm-perturbation}.

\section{The PoML Protocol}
\label{sec:protocol}

This section specifies the Proof-of-ML-Inference (PoML) consensus protocol for any fixed, public, PoML-compatible model $M_\theta$. It uses the following cryptographic primitives:
\begin{itemize}[itemsep=0em,nosep]
  \item hash functions $G,H:\{0,1\}^{*}\to\{0,1\}^{\kappa}$, modelled as random oracles for block and lottery bindings;
  \item a ZKP-friendly collision-resistant hash $H_{\mathsf{zk}}:\{0,1\}^{*}\to\{0,1\}^{\kappa}$ for binding user queries;
  \item a public-key encryption scheme $(\mathsf{KeyGen}_{\mathsf{pke}},\mathsf{Enc},\mathsf{Dec})$ satisfying Definition~\ref{def:pke-app};
  \item an unbiasable VRF $(\mathsf{VRF.Gen},\mathsf{VRF.Eval},\mathsf{VRF.Vfy})$ satisfying Definition~\ref{def:vrf-app};
  \item a digital signature scheme $(\mathsf{KeyGen}_{\mathsf{sig}},\mathsf{Sign},\mathsf{Verify}_{\mathsf{sig}})$ satisfying Definition~\ref{def:sig-app}; and
  \item an NIZK proof-of-knowledge system $(\mathsf{Setup}_{\mathsf{zkp}},\prover,\verifier)$ satisfying Definition~\ref{def: NIZK systems} for the NP-language $\mathcal{L}_{\mathsf{PoML}}$.
\end{itemize}
 \[
\mathcal{L}_{\mathsf{PoML}}
=\left\{
\begin{array}{@{}l@{}}
 (\mathsf{qid},h,\theta,\mathcal{R},\mathsf{ct},\omega,\mathsf{pk}_u,
  \mathsf{vk}^{\mathsf{VRF}}_{\mathsf{enc}},r,
  \pi^{\mathsf{VRF}}_{\mathsf{enc}}):\\[0.2em]
 \exists\,(x,y,r_{\mathsf{enc}})\ \text{such that}\\
 M_\theta(x,\mathcal{R}) = y,\\
 H_{\mathsf{zk}}(x\parallel\mathsf{qid})=h,\\
 C_{\theta,\Pi}(x,\mathcal{R}) = \omega, \\
 \mathsf{VRF.Vfy}(\mathsf{vk}^{\mathsf{VRF}}_{\mathsf{enc}},
        r\parallel\mathsf{qid},r_{\mathsf{enc}},
        \pi^{\mathsf{VRF}}_{\mathsf{enc}})=1,\\
 \mathsf{ct}=\mathsf{Enc}(\mathsf{pk}_u,
        y\parallel\mathsf{qid};r_{\mathsf{enc}}).
\end{array}
\right\}.
\]

The model $M_\theta$ accepts a user input $x$ and an ordered collection $\mathcal{R}$ of inference randomness, with $\mathcal{R}=(z_1,\ldots,z_T)$, and produces deterministic output $M_\theta(x,\mathcal{R})$. The protocol derives $\mathcal{R}$ from a fresh base seed $r$: for each $t\in[T]$, the inference VRF on $r\parallel t$ yields $z_t$ and a proof of correct evaluation. Thus $\mathcal{L}_{\mathsf{PoML}}$ asserts correct use of $\mathcal{R}$ and encryption of the output under the user's public key.

We use two special transaction types, introduced below:
\begin{itemize}[itemsep=0em,nosep]
  \item $\mathsf{MINER\_REGISTRATION}$ binds a miner's three public keys; and
  \item $\mathsf{RESPONSE\_QUERY}$ carries a completed response for a query not included in a winning proof chain.
\end{itemize}

\paragraph{Query complexity.}
Since the output length is variable, different queries may require different computational effort. We assume a public, easily computable integer-valued complexity function $C_{\theta,\Pi}(x,\mathcal{R})\geq1$. It accounts for model inference, ZKP generation, and auxiliary work such as randomness generation and encryption. This complexity function determines both the fee a user need to pay for the query. 

\subsection{Initialization}
\label{sec:poml-system-def}

Let $\mathcal{M}=\{1,\ldots,n\}$ denote the active miners. Each miner $m\in\mathcal{M}$ holds an identical model $M_{\theta}\in\mathcal{K}$ and three independent key pairs: identity $(\mathsf{vk}^{\mathsf{sig}}_m,\mathsf{sk}^{\mathsf{sig}}_m)$, inference VRF $(\mathsf{vk}^{\mathsf{VRF}}_{\mathsf{inf},m},\mathsf{sk}^{\mathsf{VRF}}_{\mathsf{inf},m})$, and encryption VRF $(\mathsf{vk}^{\mathsf{VRF}}_{\mathsf{enc},m},\mathsf{sk}^{\mathsf{VRF}}_{\mathsf{enc},m})$.

Write $\mathsf{VK}_m=(\mathsf{vk}^{\mathsf{sig}}_m,\mathsf{vk}^{\mathsf{VRF}}_{\mathsf{inf},m},\mathsf{vk}^{\mathsf{VRF}}_{\mathsf{enc},m})$ for the miner's public-key triple. A $\mathsf{MINER\_REGISTRATION}$ transaction is $T_{\mathsf{MINER\_REGISTRATION}}=(\mathsf{VK}_m,\sigma_m)$. It carries $\sigma_m\leftarrow\mathsf{Sign}(\mathsf{sk}^{\mathsf{sig}}_m,\mathsf{VK}_m)$. The registration is valid only if the signature verifies, $\mathsf{Verify}_{\mathsf{sig}}(\mathsf{vk}^{\mathsf{sig}}_m,\mathsf{VK}_m,\sigma_m)=1$, all three keys are well formed, and none is already registered; it becomes active after the containing block is accepted.

Each user $u$ holds signing keys $(\mathsf{vk}^{\mathsf{sig}}_u,\mathsf{sk}^{\mathsf{sig}}_u)$ and encryption keys $(\mathsf{pk}_u,\mathsf{sk}^{\mathsf{pke}}_u)$. The CRS for the NIZK system is fixed before the protocol starts. The genesis block binds the model, its complexity and termination specification, fee parameters, and initial miner set: $B_0=(\mathsf{CRS},h_{\theta},\mathcal{PK}^{(0)})$, where $h_{\theta}=H_{\mathsf{zk}}(\theta)$ and $\mathcal{PK}^{(0)}$ is the initial set of valid $\mathsf{VK}_m$ values.

The blockchain also defines a public difficulty parameter $0<D\leq 2^{\kappa}$. For integer complexity $C\geq1$, define $\tau_D(C)=1-\left(1-D/2^\kappa\right)^C$ and $D_C=\left\lfloor 2^\kappa\tau_D(C)\right\rfloor$. Here $\tau_D(C)$ is the ideal winning probability for a candidate of combined complexity $C$, and $D_C$ is the corresponding hash threshold.

\subsection{Fee Structure}
For a query included at height $e$, the fixed per-unit fee is $\phi_e=\phi_{\mathsf{burn},e}+\phi_{\mathsf{solve},e}+\phi_{\mathsf{include},e}$, where the burn fee is burned, the solve fee goes to the solving miner, and the include fee to the block producer (which may differ; see Sec.~\ref{sec:response-transactions}). The total fee for a query is $\phi_e\,C_{\theta,\Pi}(x,\mathcal{R})$. The user supplies a maximum fee $F_Q^{\max}$, which caps the maximum deducted fee; a response is given only when the total work costs at most this amount.

\subsection{Query Submission}
\label{sec:query-registration}

We use the synchronous diffusion-network model of the Bitcoin Backbone protocol~\cite{GKL20}: in each round, any party may \emph{diffuse} a message received by all honest parties by that round's end. A query transaction $T_{\text{query}}$ diffused in round $r$ enters every honest miner's local \emph{mempool} by round $r+1$, from which miners select pending queries when constructing a block. The raw query $x_u$ is made available off-chain, for example through a content-addressed service such as IPFS; availability and retrieval are covered by the sufficient-query-availability assumption in Appendix~\ref{app:reduction}.

A query is $Q=(\mathsf{qid},\allowbreak\mathsf{vk}^{\mathsf{sig}}_u,\allowbreak\mathsf{pk}_u,\allowbreak\mathsf{taskID},h_u,h_{\mathsf{exp}},F_Q^{\max},\sigma_u)$, where $\mathsf{qid}=H(\mathsf{vk}^{\mathsf{sig}}_u\parallel\mathsf{taskID})$, $h_u=H_{\mathsf{zk}}(x_u\parallel\mathsf{qid})$, and $\sigma_u\leftarrow\mathsf{Sign}(\mathsf{sk}^{\mathsf{sig}}_u,Q\setminus\{\sigma_u\})$. Here $x_u$ is the raw query transmitted off-chain, $h_{\mathsf{exp}}$ is the block height the user last expects a response, and $F_Q^{\max}$ is the maximum fee. The user sets $\mathsf{taskID}$ to its number of submitted queries (1st=1, 2nd=2, etc.) so that $\mathsf{qid}$ is unique.

Relative to chain state $\mathcal S$ and candidate height $h$, a query is
\emph{eligible} to mine if and only if:
\begin{enumerate}[nosep]
    \item the $\mathsf{taskID}$ is unique amongst all queries the user submitted; 
  \item its fields and public keys are well formed,
  $\mathsf{qid}=H(\mathsf{vk}^{\mathsf{sig}}_u\parallel\mathsf{taskID})$
  and $h_u=H_{\mathsf{zk}}(x_u\parallel\mathsf{qid})$ hold, and
  $\mathsf{Verify}_{\mathsf{sig}}(\mathsf{vk}^{\mathsf{sig}}_u,
    (Q\setminus\{\sigma_u\}),\sigma_u)=1$;
  \item its $\mathsf{qid}$ is not completed in $\mathcal S$ and
  $h\leq h_{\mathsf{exp}}$;
  \item the account identified by
  $\mathsf{vk}^{\mathsf{sig}}_u$ has at least $F_Q^{\max}$ available.
\end{enumerate}

\subsection{Consensus Algorithm}
\label{sec:consensus-alg}

A block produced by miner $m$ is $B=(s,\mathsf{tx},\mathsf{VK}_m,\mathcal{X},\mathcal{Y},\Pi,\Pi^{\mathsf{VRF}})$, where $s=H(B_{\mathsf{prev}})$ and $\mathsf{tx}$ is a frozen list of standard coin, $\mathsf{MINER\_REGISTRATION}$, and $\mathsf{RESPONSE\_QUERY}$ transactions. $\mathcal{X} = (Q_1,\ldots,Q_k)$ is the ordered tuple of included query transactions; $\mathcal{Y} = (\mathsf{ct}_1, \ldots, \mathsf{ct}_k)$ is the corresponding tuple of encrypted inference outputs; $\Pi = (\pi_1, \ldots, \pi_k)$ is the tuple of ZK proofs; and $\Pi^{\mathsf{VRF}} = (\Pi^{\mathsf{VRF}}_1, \ldots, \Pi^{\mathsf{VRF}}_k)$ is the tuple of per-query VRF transcripts, where each $\Pi^{\mathsf{VRF}}_i = \bigl((z_{i,t}, \pi^{\mathsf{VRF}}_{i,t})_{t=1}^{T(x_i)},\, \pi^{\mathsf{VRF}}_{\mathsf{enc},i}\bigr)$ includes both the inference-noise transcript and the encryption-randomness VRF proof. 

The full block-production procedure is shown in Algorithm~\ref{alg:poml-main}. All transactions are fixed by the miner before the first inference, so the binding used by the inference seeds and lottery cannot change during construction.

\begingroup
\setlength{\intextsep}{4pt}
\setlength{\textfloatsep}{4pt}
\begin{algorithm}[tbp]
\caption{PoML Block Production}
\label{alg:poml-main}
\begin{algorithmic}[1]
\Require \textbf{Public:} $B_{\mathsf{prev}}$, difficulty $D$, $\mathsf{CRS}$,
  $M_\theta$, $\theta$, the fee schedule, $G$, $H$, $H_{\mathsf{zk}}$, and
  miner $m$'s registered public keys
\Require \textbf{Private:} $m$'s three secret keys
\State Construct transaction list
  $\mathsf{tx}=(\mathsf{tx}_{\mathsf{std}},\mathcal{T}_{\mathsf{miner}},
  \mathcal{T}_{\mathsf{resp}})$ from the local mempool
\State $s\gets H(B_{\mathsf{prev}})$; $h\gets\mathsf{height}(B_{\mathsf{prev}})+1$
\State $\mathcal{X}\gets()$; $\mathcal{Y}\gets()$;  $\Pi\gets()$;
  $\Pi^{\mathsf{VRF}}\gets()$; $\Omega\gets()$; $i\gets0$
\State $\mathcal{A}\gets$ eligible query objects in mempool whose $\mathsf{qid}$ do not occur in
  $\mathcal{T}_{\mathsf{resp}}$ 
\Repeat
  \State $i\gets i+1$
  \State Select $Q_i$ uniformly at random from $\mathcal{A}$ and remove it from $\mathcal{A}$
  \State $\mathsf{bind}_i\gets
    \begin{cases}G(s,\mathsf{tx})&i=1,\\H(G(s,\mathsf{tx})\parallel\mathcal{Y}_{i-1})&i\geq2\end{cases}$
  \State $r_i\gets H(\mathsf{bind}_i\parallel h_{u,i}\parallel
    \mathsf{qid}_i\parallel\mathsf{vk}^{\mathsf{sig}}_m)$
  \State Retrieve $x_u$ off-chain and verify $Q.h_u = H_\mathsf{zk}(x_u\parallel Q.\mathsf{qid})$
  \For{$t=1,\ldots,T(x_u)$}
    \State $(z_{i,t},\pi^{\mathsf{VRF}}_{i,t})\gets
      \mathsf{VRF.Eval}(\mathsf{sk}^{\mathsf{VRF}}_{\mathsf{inf},m},
        r_i\parallel t)$
  \EndFor
  \State $\mathcal{R}_i\gets(z_{i,1},\ldots,z_{i,T(x_u)})$
  \State $y_i \gets M_\theta(x_i,\mathcal{R}_i)$
  \State $\omega_i\gets C_{\theta,\Pi}(x_i,\mathcal{R}_i)$
  \State $(r_{\mathsf{enc},i},\pi^{\mathsf{VRF}}_{\mathsf{enc},i})\gets
    \mathsf{VRF.Eval}(\mathsf{sk}^{\mathsf{VRF}}_{\mathsf{enc},m},
      r_i\parallel\mathsf{qid}_i)$
  \State $\Pi_i^{\mathsf{VRF}}\gets
    ((z_{i,t},\pi^{\mathsf{VRF}}_{i,t})_{t=1}^{T_i},
      \pi^{\mathsf{VRF}}_{\mathsf{enc},i})$
  \State $\mathsf{ct}_i\gets\mathsf{Enc}(\mathsf{pk}_{u,i},
    y_i\parallel\mathsf{qid}_i;r_{\mathsf{enc},i})$
  \State $\mathsf{stmt}_i\gets(\mathsf{qid}_i,h_{u,i},\theta,\mathcal{R}_i,
    \mathsf{ct}_i,\omega_i,\mathsf{pk}_{u,i},
    \mathsf{vk}^{\mathsf{VRF}}_{\mathsf{enc},m},r_i,
    \pi^{\mathsf{VRF}}_{\mathsf{enc},i})$
  \State $\mathsf{wit}_i\gets(x_{i},y_i,r_{\mathsf{enc},i})$
  \State Append $\mathcal{X}\gets\mathcal{X}\|Q_i$; $\mathcal{Y}\gets\mathcal{Y}\|\mathsf{ct}_i$; $\Pi^{\mathsf{VRF}}\gets\Pi^{\mathsf{VRF}}\|\Pi_i^{\mathsf{VRF}}$
  \State $\mathsf{found}_i\gets
    [H(G(s,\mathsf{tx})\parallel\mathcal{Y})<D_{\omega_i}]$
\Until{$\mathsf{found}_i=\mathsf{true}$}
\For{$j=1,\ldots,i$}
  \State $\pi_j\gets\prover(\mathsf{CRS},\mathsf{stmt}_j,\mathsf{wit}_j)$; $\Pi\gets\Pi\|\pi_j$
\EndFor
\State \textbf{Publish} $B=\bigl(s,\mathsf{tx},\mathsf{VK}_m,\mathcal{X},\mathcal{Y},
  \Pi,\Pi^{\mathsf{VRF}}\bigr)$
\end{algorithmic}
\end{algorithm}
\endgroup

\subsection{Query Responses as Transactions}
\label{sec:response-transactions}

PoML requires miners to compete in producing valid inference outputs. When a miner loses this competition, their work will not be included in the block if there is not a mechanism to `salvage' these work. We therefore allow two routes for responding a query: by including it as part of the consensus work, or by submitting a $\mathsf{RESPONSE\_QUERY}$ transaction. A $\mathsf{RESPONSE\_QUERY}$ transaction simply allows any miner to post their completed work to the blockchain, without using it as a ticket to propose a block.

A response transaction is $T_{\mathsf{resp}}=(Q,r,\mathcal{R},\mathsf{ct},\pi,\Pi^{\mathsf{VRF}},\mathsf{vk}^{\mathsf{sig}}_s,\sigma_s)$, where $s$ is the solver, $\Pi^{\mathsf{VRF}}=((z_t,\pi^{\mathsf{VRF}}_t)_{t=1}^{T},\pi^{\mathsf{VRF}}_{\mathsf{enc}})$, and $\sigma_s\leftarrow\mathsf{Sign}(\mathsf{sk}^{\mathsf{sig}}_s,T_{\mathsf{resp}}\setminus\{\sigma_s\})$. Thus, the complete signed query is available to every validator. The solver may choose the base seed $r$; validators check the resulting inference and proof, but do not require $r$ to be uniform or derived from a previous block. The solver generates a ZK proof before posting the response ciphertext, proof, randomness collection, and VRF transcripts.

At height $h$, a response is valid if:
\begin{enumerate}[nosep]
  \item the query $Q$ included in the transaction is eligible relative to the blockchain state and height $h$, and
  $Q.\mathsf{qid}$ is not completed elsewhere in the current block;
  \item $\mathsf{vk}^{\mathsf{sig}}_s$ is a registered miner and $\sigma_s$ verifies;
  \item for every $t\in[T]$,
    $\mathsf{VRF.Vfy}(\mathsf{vk}^{\mathsf{VRF}}_{\mathsf{inf},s},
      r\parallel t,z_t,\pi^{\mathsf{VRF}}_t)=1$,
    with $\mathcal{R}=(z_1,\ldots,z_T)$;
  \item $\verifier(\mathsf{CRS},\mathsf{stmt}_{\mathsf{resp}},\pi)=1$, where $\mathsf{stmt}_{\mathsf{resp}}$ is the tuple $\bigl(Q.\mathsf{qid},Q.h_u,\theta,\mathcal{R},\mathsf{ct},\omega,\allowbreak Q.\mathsf{pk}_u,\mathsf{vk}^{\mathsf{VRF}}_{\mathsf{enc},s},r,\pi^{\mathsf{VRF}}_{\mathsf{enc}}\bigr)$;
  \item $\phi_h \omega
    \leq F_Q^{\max}$.
\end{enumerate}

Once such a transaction is included in the blockchain, $\mathsf{qid}$ is completed and the fee is debited from the user: the burn component is burned, the solve component is paid to $s$, and the include component is paid to the block producer.

\subsection{Block Validation}
\label{sec:block-validation}

Upon receiving a block $B$, each node checks the following conditions.

\begin{definition}[Valid Block]
\label{def:valid-block}
A block $B=(s,\mathsf{tx},\mathsf{VK}_m,\mathcal{X},\mathcal{Y},
\Pi,\Pi^{\mathsf{VRF}},\Omega)$
is valid with respect to parent $B_{\mathsf{prev}}$ and parent state
$\mathcal{S}_{\mathsf{prev}}$ at candidate height
$h=\mathsf{height}(B_{\mathsf{prev}})+1$ if and only if:
\begin{enumerate}[nosep]
  \item \textbf{Transaction validity:} $s=H(B_{\mathsf{prev}})$,
  $\mathsf{VK}_m$ is active in $\mathcal{S}_{\mathsf{prev}}$, and
  $\mathsf{tx}=(\mathsf{tx}_{\mathsf{std}},\mathcal{T}_{\mathsf{miner}},
  \mathcal{T}_{\mathsf{resp}})$ has the format specified above.  All 
  transactions are valid.
  \item \textbf{Non-empty proof chain:}
  $|\mathcal{X}|=|\mathcal{Y}|=|\Pi|=|\Pi^{\mathsf{VRF}}|=|\Omega|\geq1$;
  \item \textbf{Mining-query eligibility:} every $Q_i\in\mathcal{X}$ is
  eligible relative to $\mathcal{S}_{\mathsf{prev}}$ at height $h$;
  all $\mathsf{qid}$ in $\mathcal{X}$ are pairwise distinct and disjoint from
  those in $\mathcal{T}_{\mathsf{resp}}$;
  \item \textbf{Mining-seed binding:} for every position $i$, validators recompute $\mathsf{bind}_i=G(s,\mathsf{tx})$ for $i=1$ and $\mathsf{bind}_i=H(G(s,\mathsf{tx})\parallel\mathcal{Y}_{i-1})$ for $i\geq2$, then $r_i=H(\mathsf{bind}_i\parallel h_{u,i}\parallel\mathsf{qid}_i\parallel\mathsf{vk}^{\mathsf{sig}}_m)$;
  \item \textbf{Mining inference VRFs:} for every $i$ and $t\in[T_i]$,
   $\mathsf{VRF.Vfy}(\mathsf{vk}^{\mathsf{VRF}}_{\mathsf{inf},m},
     r_i\parallel t,z_{i,t},\pi^{\mathsf{VRF}}_{i,t})=1$,
   with $\mathcal{R}_i=(z_{i,1},\ldots,z_{i,T_i})$;
  \item \textbf{PoML proofs:} for every $i$, validators reconstruct $\mathsf{stmt}_i$ using $Q_i$ and the corresponding entry $\Omega_i$, and check $\verifier(\mathsf{CRS},\mathsf{stmt}_i,\pi_i)=1$;
  \item \textbf{Winning complexity-weighted lottery:} letting $i=|\mathcal{X}|$, the verifier uses the public complexity $\Omega_i$. It then checks $H(G(s,\mathsf{tx})\parallel\mathcal{Y})<D_{\Omega_i}$;
  \item \textbf{Fees:} For each included query, the fee
  $\phi_h\Omega_i$ does not exceed
  $F_{Q_i}^{\max}$ and the user's remaining balance.
\end{enumerate}
\end{definition}

The block producer receives both the solve and include components for every query in
the winning mining chain.  For a valid $\mathsf{RESPONSE\_QUERY}$ transaction,
the separate solver receives the solve component and the block producer receives
the include component.

\subsection{Implementation for Specific Models}
\label{subsec:model-specific-protocol}

We now specify the interpretation of $\mathcal{R}$ and the public complexity function $C_{\theta,\Pi}$ for the model types proposed under PoML.

For diffusion models, $\mathcal{R}$ supplies the inference noise schedule, and $C_{\theta,\Pi} = 1$ for all queries as once the model, dimensions and denoising step count is fixed, all input has the same cost.

For LLMs, $\mathcal{R}$ includes randomness that perturbs the initial prompt embedding and decodes logits into output tokens (Details of perturbation is in App.~\ref{app:llm-perturbation}). For a fixed model, proof system, and setup, an inference--proof cost can usually be approximated from input length $N$ and output length $K$. We therefore define public complexity as $C_{\theta,\Pi}(N,K)$ for both fees and lottery weighting. Appendix~\ref{app:llm-complexity} gives a concrete reference function with GPT-2 using an operation count model similar to gas costs in Ethereum.

At block height $e$, the maximum affordable output length is determined by
$K_{\max}=\max\{K\geq1:\phi_e C_{\theta,\Pi}(N,K)\leq F_Q^{\max}\}$.
If this set is empty, the request is rejected and miners do not select it.
The miner stops when the model emits an end-of-sequence token or after
$K_{\max}$ output tokens, whichever occurs first.

\subsection{Security Reduction: PoML to the Bitcoin Backbone.}
\label{sec:reduction}

We show that PoML inherits the Bitcoin-backbone guarantees of Garay, Kiayias, and Leonardos~\cite{GKL20}---\emph{common prefix}, \emph{chain quality}, and \emph{chain growth}, by showing that PoML similarly inherits the same structure as bitcoin's hash-based proof-of-work. 
The formal assumptions made and detailed argument are in Appendix~\ref{app:reduction}.

\section{Evaluation}
\label{sec:experiments}

We evaluate PoML through simulations using GPT-2 small~\cite{radford2019language} and the DeepProve proving scheme~\cite{lagrange_deepprove}. Synthetic queries arrive regularly (so miners always have work), and each miner runs Algorithm~\ref{alg:poml-main} as a separate process on the same host.
All experiments used one host with a 16-core Intel Core i9 CPU, 128~GB RAM, and an NVIDIA RTX A6000 GPU. Queries use WikiText-2 prompts~\cite{merity2017pointer} of lengths 8--32, and $C_{\theta,\Pi}(N,K)$ follows App.~\ref{subsec:gpt2-reference-operations}. Full implementation details are on GitHub. \footnote{The anonymized version is at \url{https://anonymous.4open.science/r/poml-simulation-730A}.}
We conduct two simulations: one measures block-time stability, and the other measures wasted work.

\subsection{Block-Time Stability}
\label{sec:llm-liveness-experiment}

We choose $300$s as a calibration target and tune PoML's difficulty to our environment, then run it with four honest miners executing the Algorithm~\ref{alg:poml-main} on a pool of 256 WikiText-2 queries. For comparison, we also implement Bitcoin's PoW with the same calibration target. Both chains run until 50 blocks; Fig.~\ref{fig:llm-liveness} reports block intervals and Fig.~\ref{fig:infpair-stats} the relationship between the complexity and inference-proof completion time.

\begin{figure*}[htb]
  \centering
  \begin{subfigure}[t]{0.4\textwidth}
    \centering
    \includegraphics[width=\linewidth]{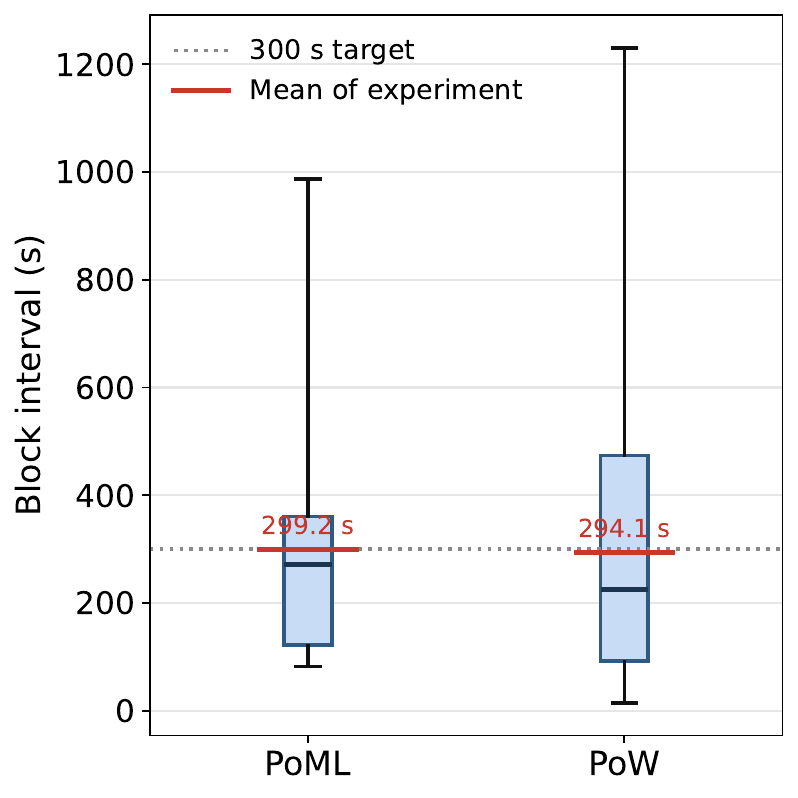}
    \caption{Block intervals.}
    \label{fig:llm-liveness}
  \end{subfigure}\hfill
  \begin{subfigure}[t]{0.4\textwidth}
    \centering
    \includegraphics[width=\linewidth]{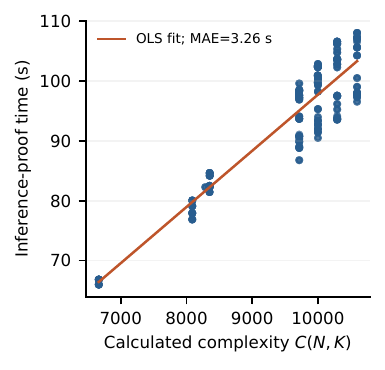}
    \caption{Inference-proof time vs. complexity.}
    \label{fig:infpair-stats}
  \end{subfigure}
  \caption{(a) PoML and PoW block intervals; (b) inference--proof completion time versus calculated query complexity for 546 attempts, with an ordinary least-squares fit. The results show that PoML achieves a stable block production rate and that the complexity function fits the actual runtime with a small error.}
  \Description{Two side-by-side plots show PoML and PoW block intervals, and the relationship between calculated complexity and measured pair completion time with a fitted regression line.}
\end{figure*}

The PoML chain produces all 50 blocks with mean interval 299.19s, close to the 300s target; PoW has a similar mean (294.06s) but larger variance (Fig.~\ref{fig:llm-liveness}). PoML's block intervals are also less variable (standard deviation 220s versus 266s for PoW), likely because inference--proof generation has stable runtimes and PoML's per-attempt lottery success probability is larger. Our complexity function predicts runtime well with mean error 3.26~s (Fig.~\ref{fig:infpair-stats}). Timing errors are unavoidable, but their relative impact should be smaller in real-world systems with higher propagation delays.

\subsection{Wasted Work Analysis}
\label{sec:useful-work-experiment}

Given that a major goal of PoML is to reduce wasteful compute in Bitcoin's PoW, we quantify the amount of wasted work in PoML and compare it with the Bitcoin baseline. Following Pass' formulation~\cite{pass2026pouw}, we write wasted computation as $W=M+D(1-1/\alpha)$, where $M$ is pure mining work, $D$ is duplex work (computation that both helps win the blockchain lottery and produces a useful inference--proof pair) and $\alpha\geq 1$ is the inefficiency ratio ($\alpha$ units of duplex work produce one unit of useful computation), so $D/\alpha$ is useful duplex computation. In Bitcoin, $D=0$ and $W=M$, since all work is mining work. In PoML, $M=0$ and all work belongs in the duplex computation component, hence the waste fraction of such work is $1-1/\alpha$.

To define $\alpha$ in PoML, we consider model inference, proof generation, and proof verification as useful work because a proof of correct execution has inherent value, as evidenced by the growing body of work on verifiable ML~\cite{chen2024zkml,sun2024zkllm,peng2025survey}. For response $i$, let $T_{U,i}$ denote the time spent on these three useful components and let $T_{A,i}$ denote all remaining work, including the VRF computation, the complexity function, and hash evaluations. Across a campaign, we define $T_U=\sum_i T_{U,i}$, $T_A=\sum_i T_{A,i}$, and $T=T_U+T_A$. We then compute $\alpha=T/T_U$, with useful fraction $1/\alpha$ and measured wasted fraction $1-1/\alpha$. This aggregate ratio weights each response by the work it performs rather than averaging ratios from responses with different prompt lengths.

Using the same setup as our first experiment, we measure 50 completed responses with prompt lengths ranging from 8 to 32 tokens. We report the average $\alpha=T/T_U$ ratio. The resulting $\alpha$ is $1.00745$, and the measured wasted fraction is $0.739\%$. Thus, under the same metric Bitcoin has useful fraction zero and wasted fraction one, while PoML's wasted work overhead is below one percent.

Beyond the inefficiency incurred while producing an inference--proof response, PoML can waste work when multiple miners independently process the same query. Miners select queries independently, and the protocol does not restrict which queries they may process. Consequently, two or more miners may produce responses for one query, but only one response can be included in the eventual chain; the remaining completed responses are wasted.

PoML partly reduces this by setting a fixed fee rate, which gives miners no incentive for strategic query selection, so the duplicate work only arises when miners process the same query under random query selection. We measure this duplicate work below.

For one race, let $\mathcal{A}$ contain all completed inference--proof pairs, including duplicates, until a miner wins the lottery, and let $\mathcal{F}\subseteq\mathcal{A}$ contain the first pair for each distinct query. Writing $C_a=C_{\theta,\Pi}(N_a,K_a)$, define wasted work as $W=\bigl(\sum_{a\in\mathcal{A}}C_a-\sum_{a\in\mathcal{F}}C_a\bigr)/\sum_{a\in\mathcal{A}}C_a$. Thus $W$ counts every completed pair beyond the first for a query; with equal complexity it reduces to $(N-U)/N$, where $N$ is the number of pairs and $U$ the number of distinct queries.

We simulate PoML with fixed $Q$ queries and $M$ miners; each miner samples a uniform random permutation and processes it without replacement. We evaluate $M=10$--10,000, $Q=20$--100,000, and target block times 300, 600, and 900~s, repeating each configuration 100 times. Appendix~\ref{app:wasted-work} reports the mean and standard deviation of $W$ across all three target times.

Figures~\ref{fig:uniform-fee-waste-300}--\ref{fig:uniform-fee-waste-900} show that $Q/M$ governs duplicate work: at 300~s, $Q/M=2$ averages 40--42\% wasted work, whereas $Q/M=100$ gives 1.4--1.7\%. At 600 and 900~s, waste increases as miners process more queries before finalisation but remains below 5\% for $Q/M=100$. Thus waste stays low when demand is sufficient and selection is random; and live deployments can control this waste by calibrating block interval to demand and miner supply.

\section{Discussion and conclusion}

\subsubsection{Production-Scale Deployments}
\label{sec:scalability-discussion}
While we only tested PoML on a small example, PoML is feasible on commodity hardware.
Many practical diffusion models have parameter counts in the range of $35$M--$120$M~\cite{ddpm-cifar10-checkpoint,ddpm-celebahq-checkpoint}, well within the operating range of current zkML provers.
As a concrete reference point, zkLLM~\cite{sun2024zkllm} produces an inference--proof pair for a $125$M-parameter model in $73.9$~s on a single A100 GPU.

With a round time of $\Delta = 73.9$~s, $q_{\mathsf{ML}} = 1$ and a per-round success probability $f = 0.10$, the difficulty relation in Section~\ref{subsec:main-thm} gives an expected block time of approximately $12.3$~minutes, roughly comparable to Bitcoin's $10$-minute block target, while tolerating an adversarial fraction up to $\sim 34.6\%$ (full derivation in Appendix~\ref{app:scalability}).

To obtain this block-time estimate, we use a larger honest-success probability $f$ than the value $0.02$ considered in Remark~3 of~\cite{GKL20}. Increasing $f$ shortens the spacing between honest-success rounds, but tightens the condition $3f+3\varepsilon<\gamma$ and increases the probability of multiple honest successes in one round, which can create forks and additional wasted work. Practical deployments must therefore account for the trade-off between $f$ and the adversarial-tolerance bound.

The range of models supported by PoML is also expected to widen substantially as more general and efficient zkML provers become available. Shorter ZKP generation times will allow shorter block times while maintaining sufficient adversarial tolerance. Our work establishes a general principle for PoML consensus that can be adapted to evolving technology.

\subsubsection{Multiple model systems} 
Our construction assumes all miners share a single model. A more interesting regime is a marketplace where miners host different, potentially proprietary, models, as pursued by systems such as Bittensor~\cite{bittensor2023} and Gensyn~\cite{gensyn2022}, though without backbone-style security guarantees. Extending PoML to this setting raises three coupled challenges: \emph{admission}, ensuring registered models satisfy Def.~\ref{def:model} and are not degenerate shortcuts; \emph{valuation}, preventing admissible but useless models from earning rewards, e.g., via peer validation or demand-driven Shapley-style contribution measures~\cite{liu2022gtg}; and \emph{anti-centralisation}, avoiding winner-takes-all query allocation while preserving user choice. Addressing these trustlessly is a substantial direction toward a practical decentralised ML marketplace.

\bibliographystyle{splncs04}
\bibliography{reference}

\appendix
\renewcommand{\theHsection}{app.\Alph{section}}
\renewcommand{\theHsubsection}{app.\Alph{section}.\arabic{subsection}}

\section{PoML System Overview}
\label{app:overview}

The systematic overview diagram is in Fig.~\ref{fig:blockchain-poml}.

\begin{figure*}
    \centering
    \includegraphics[width=0.7\textwidth]{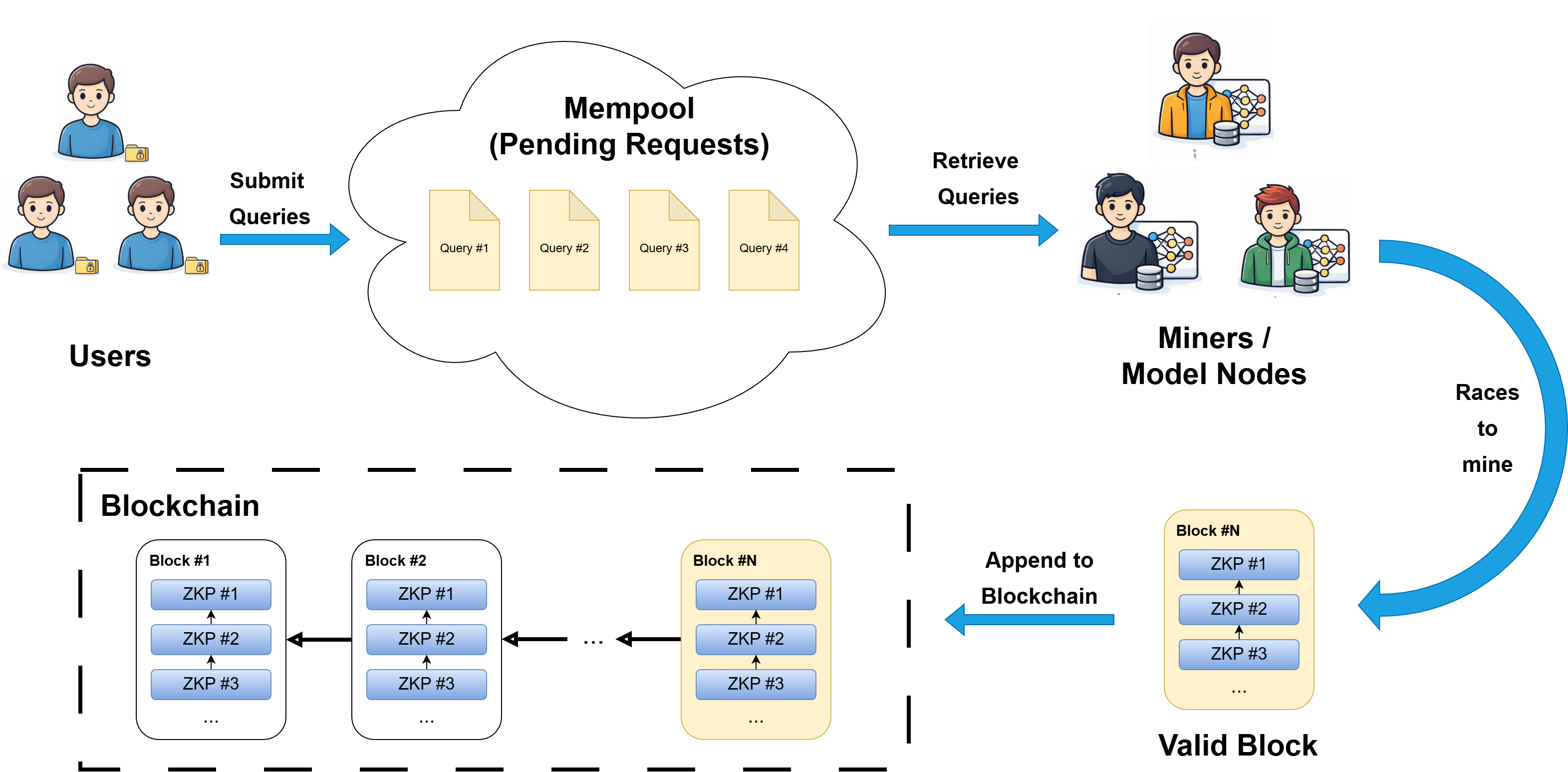}
    \caption{A high-level diagram of the proposed PoML blockchain. Users submit model queries to the mempool as pending requests. Model nodes retrieve these requests and compete to produce a valid block by model inference and generating zero-knowledge proofs. The first node to produce a valid block appends it to the blockchain and receives the fees.}
    \label{fig:blockchain-poml}
\end{figure*}

\section{Security Definitions}

\subsection{Zero-Knowledge Proofs}
\label{app:zk}

\begin{definition}[Non-Interactive Zero-Knowledge proof-of-knowledge Systems] \label{def: NIZK systems}
    A {\em non-interactive zero-knowledge proof-of-knowledge system (NIZKPoK)} for an \NP-language $L$ with the corresponding relation $\relation{L}$
    is a non-interactive protocol $\Pi = (\setup, \prover, \verifier)$, where:
    \begin{itemize}%
        \item $\setup( 1^{\lambda}) \rightarrow \crs$ takes as the input a security parameter $\lambda$. It outputs a common reference string $\crs$.
        \item $\prover(\crs, x, w) \rightarrow \pi$ takes as the input $\crs$, the statement $x$ and the witness $w$, s.t. $(x, w) \in \relation{L}$. It outputs the proof $\pi$.
        \item $\verifier(\crs, x, \pi) \rightarrow b \in \{0,1\}$ takes as the input $\crs$, $x$ and $\pi$. It outputs 1 to accept and 0 to reject.
    \end{itemize}

    $\Pi$ has the following properties:
    \begin{itemize}%
    \item {\bf Completeness.} For all $\lambda \in \NN$, and all $(x,w) \in \relation{L}$, it holds that:
    $$\condprob{\verifier(\crs, x, \prover(\crs, x, w)) = 1}{\crs \sample \setup(1^{|x|}, 1^\lambda)} = 1 - \negligible(\lambda)$$

    \item {\bf Soundness.} For all $\ppt$ provers $\prover^{\star}$, s.t. for all $\lambda \in \NN$, and all $x\notin L$, it holds that:
    $$\condprob{\verifier(\crs,x,\pi) = 1}{\crs \sample \setup(1^{|x|}, 1^{\lambda}); \pi \sample \prover^{\star}(\crs)} \leq \negligible(\lambda).$$

    \item {\bf Zero knowledge.} There exists a $\ppt$ simulator $\simu$ such that
    for every $(x,w)\in \relation{L}$, the distribution ensembles
    $\{(\crs,\pi): \crs \sample \setup(1^{|x|}, 1^{\lambda});$ $\pi \sample \prover(\crs, x, w)\}_{\lambda \in \NN}$
    and $\{\simu(1^{\lambda},x)\}_{\lambda\in\NN}$
    are computationally indistinguishable.

    \item {\bf Proof of Knowledge.} For all $\ppt$ provers $\prover$, there exists an extractor $\mathcal{E}$ such that
    $$\condprob{\mathcal{E}(x,\prover) = w}{(x,w) \in \relation{L}} = 1 - \negligible(\lambda)$$
    \end{itemize}

    \end{definition}

\subsection{Public-Key Encryption}
\label{app:pke}

\begin{definition}[Public-Key Encryption Scheme]
\label{def:pke-app}
    A \emph{public-key encryption (PKE) scheme} over a message space $\mathcal{X}$ is a tuple $\Sigma = (\mathsf{KeyGen}_{\mathsf{pke}}, \mathsf{Enc}, \mathsf{Dec})$ of $\ppt$ algorithms where:
    \begin{itemize}
        \item $\mathsf{KeyGen}_{\mathsf{pke}}(1^{\lambda}) \rightarrow (\mathsf{pk}, \mathsf{sk})$ takes the security parameter $\lambda$ and outputs a key pair consisting of a public key $\mathsf{pk}$ and secret key $\mathsf{sk}$.
        \item $\mathsf{Enc}(\mathsf{pk}, m; r) \rightarrow \mathsf{ct}$ encrypts a message $m \in \mathcal{X}$ under public key $\mathsf{pk}$ using randomness $r \in \{0,1\}^\kappa$, and outputs a ciphertext $\mathsf{ct}$.
        \item $\mathsf{Dec}(\mathsf{sk}, \mathsf{ct}) \rightarrow m$ decrypts ciphertext $\mathsf{ct}$ under secret key $\mathsf{sk}$ and returns the recovered message.
    \end{itemize}

    $\Sigma$ has the following properties:
    \begin{itemize}
    \item {\bf Correctness.} For all $\lambda \in \NN$, all $m \in \mathcal{X}$, and all $r \in \{0,1\}^\kappa$, it holds that:
    \[
        \Pr\!\left[
            \mathsf{Dec}(\mathsf{sk},\, \mathsf{Enc}(\mathsf{pk}, m; r)) = m :
            (\mathsf{pk},\mathsf{sk}) \leftarrow \mathsf{KeyGen}_{\mathsf{pke}}(1^{\lambda})
        \right] = 1.
    \]

    \item {\bf IND-CPA Security.} For all $\ppt$ adversaries $\mathcal{A} = (\mathcal{A}_1, \mathcal{A}_2)$, it holds that:
    \[
        \left|\Pr\!\left[
            b' = b :
            \begin{array}{l}
                (\mathsf{pk}, \mathsf{sk}) \leftarrow \mathsf{KeyGen}_{\mathsf{pke}}(1^{\lambda}) \\
                (m_0, m_1, \mathsf{st}) \leftarrow \mathcal{A}_1(\mathsf{pk}) \\
                b \overset{\$}{\leftarrow} \{0,1\} \\
                r \overset{\$}{\leftarrow} \{0,1\}^\kappa \\
                \mathsf{ct} \leftarrow \mathsf{Enc}(\mathsf{pk}, m_b; r) \\
                b' \leftarrow \mathcal{A}_2(\mathsf{pk}, \mathsf{ct}, \mathsf{st})
            \end{array}
        \right] - \frac{1}{2}\right| \leq \negligible(\lambda).
    \]
    \end{itemize}
\end{definition}

\subsection{Verifiable Random Functions and Unbiasability}
\label{app:vrf}

We recall the formal definition of a VRF~\cite{micali1999vrf} and the notion of unbiasability introduced by Giunta and Stewart~\cite{unbiasedVRF}. Brandt~\cite{unbiasedVRFForm} shows that unbiasable VRFs can be built from generic cryptographic assumptions, i.e.\ \emph{without} relying on a random oracle. For inputs $x \in X$ and outputs $y \in Y$ (over some codomain $Y$, e.g.\ a finite field $\mathbb{F}_q$), we use the following definitions.

\begin{definition}[Verifiable Random Function]
\label{def:vrf-app}
A \emph{Verifiable Random Function} is a triplet of $\ppt$ algorithms $(\mathsf{VRF.Gen},$ $\mathsf{VRF.Eval},$ $\mathsf{VRF.Vfy})$ satisfying:
\begin{itemize}
    \item \textbf{Correctness.} For any $(\vk, \sk)$ in the image of $\mathsf{VRF.Gen}(1^\lambda)$ and any $x \in X$,
    \[
        (y, \pi) \leftarrow \mathsf{VRF.Eval}(\sk, x) \;\Rightarrow\; \mathsf{VRF.Vfy}(\vk, x, y, \pi) \to 1.
    \]
    \item \textbf{Unique Provability.} For any $\vk$ (not necessarily in the range of $\mathsf{VRF.Gen}$), any input $x \in X$, any pair of outputs $y_0, y_1 \in Y$, and any pair of proofs $\pi_0, \pi_1$,
    \[
        \begin{aligned}
        &\mathsf{VRF.Vfy}(\vk, x, y_0, \pi_0) \to 1 \,\wedge\, \mathsf{VRF.Vfy}(\vk, x, y_1, \pi_1) \to 1 \\
        &\qquad\Rightarrow\; y_0 = y_1.
        \end{aligned}
    \]
    \item \textbf{Pseudorandomness.} For any $\ppt$ adversary $\adv$ playing the game in Fig.~\ref{fig:vrf-prand}, there exists a negligible function $\varepsilon$ such that
    \[
        \mathsf{Adv}(\adv) \;=\; \Bigl| \Pr\!\left[\mathsf{Exp}^{\mathsf{rnd}}(\adv) \to 1\right] - \tfrac{1}{2} \Bigr| \;\leq\; \varepsilon(\lambda).
    \]
\end{itemize}
\end{definition}

\begin{figure}[t]
\centering
\fbox{\parbox{0.92\columnwidth}{%
\footnotesize
\textbf{$\mathsf{Exp}^{\mathsf{rnd}}(\adv)$:}
\begin{enumerate}[leftmargin=*,itemsep=0em,nosep]
    \item Sample $b \sample \{0,1\}$.
    \item $(\vk, \sk) \sample \mathsf{VRF.Gen}(1^\lambda)$; set $x^\star := \perp$.
    \item $x^\star \leftarrow \adv^{\mathcal{O}_{\mathsf{eval}}}(\vk)$.
    \item If $x^\star$ was previously queried, return $0$.
    \item $y_0 \sample Y$.
    \item $(y_1, \pi) \leftarrow \mathsf{VRF.Eval}(\sk, x^\star)$.
    \item $b' \leftarrow \adv^{\mathcal{O}_{\mathsf{eval}}}(\vk, y_b)$.
    \item Return $\mathbf{1}[b = b']$.
\end{enumerate}
\vspace{0.3em}
\textbf{$\mathcal{O}_{\mathsf{eval}}(x)$:} \; if $x \neq x^\star$, return $\mathsf{VRF.Eval}(\sk, x)$; else return $\perp$.
}}
\caption{The pseudorandomness security game for VRFs.}
\label{fig:vrf-prand}
\end{figure}

\paragraph{Unbiasability.}
Unbiasability asks that even an adversary that controls some of the keys and selectively withholds evaluations cannot bias the joint output distribution on uniformly chosen challenge inputs.

\begin{definition}[Unbiasable VRF~\cite{unbiasedVRF}]
\label{def:unbiasable-app}
A VRF is \emph{unbiasable} if for any polynomially bounded integers $n, m$ and any $\ppt$ adversary $\adv$ there exists a negligible function $\varepsilon$ such that
\[
    \mathsf{Adv}(\adv) := \Pr\!\left[\mathsf{Exp}^{\mathsf{bias}}_{0,n,m}(\adv) = 1\right] - \Pr\!\left[\mathsf{Exp}^{\mathsf{bias}}_{1,n,m}(\adv) = 1\right] \leq \varepsilon(\lambda),
\]
where the experiment $\mathsf{Exp}^{\mathsf{bias}}_{b,n,m}$ is defined in Fig.~\ref{fig:vrf-bias}.
\end{definition}

\begin{figure}[t]
\centering
\fbox{\parbox{0.92\columnwidth}{%
\footnotesize
\textbf{$\mathsf{Exp}^{\mathsf{bias}}_{b,n,m}(\adv)$:}
\begin{enumerate}[leftmargin=*,itemsep=0em,nosep]
    \item $(\vk_1, \ldots, \vk_n, p) \leftarrow \adv$, where
    \begin{itemize}[leftmargin=*,itemsep=0em,nosep]
        \item $\vk_1, \ldots, \vk_n$ are all distinct;
        \item $p$ is a $\ppt$-computable monotone predicate.
    \end{itemize}
    \item $(x_1, \ldots, x_m) \sample X$.
    \item $\bigl((y_{1,1}, \pi_{1,1}), \ldots, (y_{n,m}, \pi_{n,m})\bigr) \leftarrow \adv(x_1, \ldots, x_m)$.
    \item For all $i \in [n]$, $j \in [m]$: if $\mathsf{VRF.Vfy}(\vk_i, x_j, y_{i,j}, \pi_{i,j}) \to 0$, set $y_{i,j} \leftarrow \perp$.
    \item Sample $(z_{1,1}, \ldots, z_{n,m}) \sample Y^{n \cdot m}$.
    \item If $b = 0$, return $p(y_{1,1}, \ldots, y_{n,m})$.
    \item If $b = 1$, return $p(z_{1,1}, \ldots, z_{n,m})$.
\end{enumerate}
}}
\caption{The unbiasability experiment, parametrized by $b \in \{0,1\}$.}
\label{fig:vrf-bias}
\end{figure}

Giunta and Stewart~\cite{unbiasedVRF} prove that several concrete constructions meet Definition~\ref{def:unbiasable-app}, and Brandt~\cite{unbiasedVRFForm} subsequently showed a construction from generic assumptions without random oracles. For our purposes it is therefore safe to assume the existence of an unbiasable VRF in the standard model.

\subsection{Digital Signature Schemes}
\label{app:sig}

\begin{definition}[Digital Signature Scheme]
\label{def:sig-app}
    A \emph{digital signature scheme} over a message space $\mathcal{M}$ is a tuple $\Sigma_{\mathsf{sig}} = (\mathsf{KeyGen}_{\mathsf{sig}}, \mathsf{Sign}, \mathsf{Verify}_{\mathsf{sig}})$ of $\ppt$ algorithms where:
    \begin{itemize}
        \item $\mathsf{KeyGen}_{\mathsf{sig}}(1^{\lambda}) \rightarrow (\mathsf{vk}, \mathsf{sk})$ takes the security parameter $\lambda$ and outputs a verification key $\mathsf{vk}$ and a signing key $\mathsf{sk}$.
        \item $\mathsf{Sign}(\mathsf{sk}, m) \rightarrow \sigma$ takes the signing key $\mathsf{sk}$ and a message $m \in \mathcal{M}$, and outputs a signature $\sigma$.
        \item $\mathsf{Verify}_{\mathsf{sig}}(\mathsf{vk}, m, \sigma) \rightarrow b \in \{0,1\}$ takes the verification key $\mathsf{vk}$, a message $m \in \mathcal{M}$, and a signature $\sigma$, and outputs $1$ to accept or $0$ to reject.
    \end{itemize}

    $\Sigma_{\mathsf{sig}}$ has the following properties:
    \begin{itemize}
    \item {\bf Correctness.} For all $\lambda \in \NN$, all $m \in \mathcal{M}$, it holds that:
    \[
        \Pr\!\left[
            \mathsf{Verify}_{\mathsf{sig}}(\mathsf{vk},\, m,\, \mathsf{Sign}(\mathsf{sk}, m)) = 1 :
            (\mathsf{vk}, \mathsf{sk}) \leftarrow \mathsf{KeyGen}_{\mathsf{sig}}(1^{\lambda})
        \right] = 1.
    \]

    \item {\bf EUF-CMA Security.} For all $\ppt$ adversaries $\mathcal{A}$, it holds that:
    \[
        \Pr\!\left[
            \begin{array}{l}
                \mathsf{Verify}_{\mathsf{sig}}(\mathsf{vk}, m^\star, \sigma^\star) = 1 \\
                \wedge\; m^\star \notin Q
            \end{array}
            :
            \begin{array}{l}
                (\mathsf{vk}, \mathsf{sk}) \leftarrow \mathsf{KeyGen}_{\mathsf{sig}}(1^{\lambda}) \\
                (m^\star, \sigma^\star) \leftarrow \mathcal{A}^{\mathcal{O}_{\mathsf{sign}}}(\mathsf{vk})
            \end{array}
        \right] \leq \negligible(\lambda),
    \]
    where $\mathcal{O}_{\mathsf{sign}}(m)$ returns $\mathsf{Sign}(\mathsf{sk}, m)$ and records each queried message in the set $Q$.
    \end{itemize}
\end{definition}

\section{LLM Compatibility: Formal Statements and Experiments}
\label{app:llm-perturbation}

We define the perturbed LLM construction and argue it satisfies computational independence. The experiments use GPT-2~\cite{radford2019language}, with 12 Transformer blocks and hidden width $d=768$, in deterministic evaluation mode with float32 arithmetic.

\subsection{Formal Description of the LLM Construction}
\label{app:llm-formal-description}

\begin{definition}[LLM with perturbed prompt embeddings]
\label{def:llm-perturbed}
A perturbed autoregressive LLM is specified by
\[
 M_\theta=(f_\theta,E_\theta,P,\mathcal V,d,L,s_{\max},\tau,k,p,m_{\max},\alpha).
\]
Here $\mathcal V$ is the vocabulary, $E_\theta\in\mathbb R^{|\mathcal V|\times d}$ is the token-embedding table, $P_i\in\mathbb R^d$ is the positional embedding at position $i$, and
\[
 f_\theta:\bigcup_{n=1}^{s_{\max}}\mathbb R^{n\times d}
       \longrightarrow\mathbb R^{|\mathcal V|}
\]
is the fixed causal Transformer, including its final normalisation and output head. It applies $L$ Transformer blocks to an embedded context and returns the next-token logits from its last position. Here $d$ is the embedding and hidden-vector width. The remaining parameters are the context limit $s_{\max}$, decoding settings (temperature $\tau>0$, top-$k$ and top-$p$), maximum output length $m_{\max}$, and noise parameter $\alpha>0$.

For a tokenised prompt $x=(x_1,\ldots,x_s)$ with $s+m_{\max}\leq s_{\max}$, form the perturbed prompt embeddings as follows:
\begin{equation}
\begin{aligned}
    \widetilde e_{r,i}&=E_\theta(x_i)+P_i+\eta_{r,i},\\
    \eta_{r,i}&=\alpha s_E z_{r,i},\qquad
    z_{r,i}\overset{\mathrm{iid}}{\sim}\mathcal N(0,I_d),\quad i\in[s].
\end{aligned}
\label{eq:llm-noise-law}
\end{equation}
The table standard deviation $s_E=\operatorname{std}\{(E_\theta)_{v,j}:v\in\mathcal V,\ j\in[d]\}$ is computed once from the public model, and $\sigma_{\mathrm{abs}}=\alpha s_E$.  Noise is added once to the $s$ prompt embeddings, giving $sd$ perturbed coordinates.

Let $C_0(r)=(\widetilde e_{r,1},\ldots,\widetilde e_{r,s})$ and let $y_{1:t}(r)$ denote the first $t$ generated tokens. The decoding recurrence is
\begin{equation}
\begin{aligned}
 a_t(r)&=f_\theta(C_t(r)),\\
 y_{t+1}(r)&=\mathsf{Sample}_{\tau,k,p}(a_t(r);u_t(r)),\\
 C_{t+1}(r)&=C_t(r)\mathbin\Vert
       \bigl(E_\theta(y_{t+1}(r))+P_{s+t+1}\bigr),
\end{aligned}
\label{eq:llm-decoding-recurrence}
\end{equation}
for $t=0,\ldots,m_{\max}-1$, stopping at EOS. Here $\Vert$ appends a row and $\mathsf{Sample}_{\tau,k,p}$ applies temperature scaling, top-$k$/top-$p$ filtering, and inverse-CDF sampling with uniform value $u_t(r)$.

For the computational independence argument, we consider the sequence of pre-decoding logit vectors as the output of $M_\theta$:
\[
 M_\theta(x,\mathcal R)=(a_0(r),\ldots,a_{m-1}(r)),\qquad m\leq m_{\max},
\]
The decoding of the result itself is a simple lookup operation, hence its cost is negligible compared to the rest of the operations.

The VRF-derived randomness collection $\mathcal{R}(r)$ reserves $s+m_{\max}$ indexed values: the first $s$ supply the vectors $z_{r,i}$ through a public expansion, and the remaining values supply $u_t(r)$. Values reserved for tokens after termination are unused. Given the prompt and $\mathcal{R}(r)$, inference is deterministic.
\end{definition}

\subsection{Preservation of Benchmark Utility}
\label{app:llm-utility}

To confirm that the noise perturbation does not affect generation quality, we compare the clean model (one without perturbation) with the perturbed model on WikiText-2~\cite{merity2017pointer}, HellaSwag~\cite{zellers2019hellaswag}, PIQA~\cite{bisk2020piqa}, and ARC-Easy~\cite{clark2018arc}, using 100 sampled examples per dataset and three perturbation seeds. WikiText-2 is evaluated by perplexity, computed by teacher-forced scoring. Other datasets are evaluated by multiple-choice accuracy.

Both experiments use $\alpha\in\{0.05,0.10,0.20,0.40\}$ and $\sigma_{\mathrm{abs}}=\alpha s_E$, with $s_E=0.14370$ measured from the frozen embedding table. 

\begin{table}[t]
\centering
\small
\caption{GPT-2 utility under Gaussian prompt-embedding noise $\sigma_{\mathrm{abs}}=\alpha s_E$. Perplexity (PPL) is lower-is-better; accuracy is in percent and higher-is-better. Each task uses 100 examples; perturbed results average three seeds.}
\label{tab:llm-embedding-utility}
\begin{tabular}{lrrrrr}
\toprule
 & Clean & \multicolumn{4}{c}{$\alpha$} \\
\cmidrule(lr){3-6}
Task / metric & & $0.05$ & $0.10$ & $0.20$ & $0.40$ \\
\midrule
WikiText-2 / PPL & 44.80 & 44.79 & 44.77 & 45.39 & 44.14 \\
HellaSwag / accuracy & 25.00 & 25.00 & 24.67 & 24.67 & 24.33 \\
PIQA / accuracy & 62.00 & 62.00 & 63.00 & 61.33 & 62.33 \\
ARC-Easy / accuracy & 36.00 & 36.67 & 36.67 & 36.33 & 36.33 \\
\bottomrule
\end{tabular}
\end{table}

Table~\ref{tab:llm-embedding-utility} shows the results. At $\alpha=0.05$, WikiText-2 perplexity changes from $44.80$ to $44.79$ ($-0.011\%$), HellaSwag and PIQA accuracy are unchanged, and ARC-Easy increases by $0.67$ percentage points. Across the tested scales, the largest perplexity increase is $1.31\%$ and the largest accuracy decrease is $0.67$ percentage points. These results support preservation of benchmark utility on the sampled examples under modest embedding noise.

\subsection{Computational Independence of LLMs}
\label{app:llm-trace-separation}

To establish the computational independence of the input-perturbed LLM, we adopt the following assumption:

\begin{assumption}[Computational Independence of $f_\theta$]
\label{ass:llm-no-reuse}
Fix the Transformer $f_\theta$ of Definition~\ref{def:llm-perturbed}. For a fixed prompt $x$, let $h_{\ell,t}(r)$ be the output matrix of layer $\ell$ during $f_\theta(C_t(r))$. For independent protocol challenges $r,r'$ and each step $t$ executed by both runs, assume
\begin{equation}
 \Pr\!\left[\min_{\ell\in[L]}
       \|h_{\ell,t}(r)-h_{\ell,t}(r')\|_\infty
       <\delta_q/2\right]\leq\mathsf{negl}(d),
 \label{eq:llm-block-separation}
\end{equation}
where $d$ is the model's hidden-vector width.
\end{assumption}

To stress-test this assumption, we use 100 prompts each from WikiText-2~\cite{merity2017pointer}, LAMBADA~\cite{paperno2016lambada}, HellaSwag~\cite{zellers2019hellaswag}, PIQA~\cite{bisk2020piqa}, ARC-Easy~\cite{clark2018arc}, and Resisting Correction~\cite{mckenzie2023inverse}. The Resisting Correction dataset contains prompts which require the model to repeat certain phrases, trying to deliberately skew the model's output logits. We use the same relative Gaussian noise as the utility experiment, with four independent challenge pairs per prompt, running GPT-2 in float32 on an NVIDIA RTX A6000.

For each pair of inferences with the same prompt and different randomness, we decode tokens as usual and measure the $L^\infty$ distance between the two output activations at intermediate layers of $f_\theta$.

\begin{figure}[t]
\centering
\begin{tikzpicture}
\begin{semilogyaxis}[width=\linewidth,height=5.2cm,
xlabel={Layer output (A: attention; F: feed-forward)},
ylabel={Minimum $L^\infty$ distance},
tick label style={font=\scriptsize},label style={font=\small},
xtick={0,1,2,3,4,5,6,7,8,9,10,11,12,13,14,15,16,17,18,19,20,21,22,23,24,25,26},
xticklabels={Emb.,1A,1F,2A,2F,3A,3F,4A,4F,5A,5F,6A,6F,7A,7F,8A,8F,9A,9F,10A,10F,11A,11F,12A,12F,Norm,Logits},
x tick label style={rotate=60,anchor=east},
grid=major,grid style={gray!15},
legend style={font=\scriptsize,at={(0.5,1.02)},anchor=south,
legend columns=3,draw=none}]
\addplot[blue,mark=*,mark size=1pt] coordinates {(0,0.02975048684) (1,0.1315028667) (2,0.1722182035) (3,0.06826353073) (4,0.08200213313) (5,0.008858084679) (6,0.06993815303) (7,0.002136945724) (8,0.06260681152) (9,0.002312481403) (10,0.01658630371) (11,0.002029299736) (12,0.01611328125) (13,0.002088546753) (14,0.009460449219) (15,0.002996891737) (16,0.006938934326) (17,0.00261053443) (18,0.006219863892) (19,0.004909187555) (20,0.005334854126) (21,0.005473516881) (22,0.006958961487) (23,0.0478515625) (24,0.09757894278) (25,0.1108818054) (26,0.01602172852)};
\addlegendentry{$\alpha=0.05$}
\addplot[orange,mark=square*,mark size=1pt] coordinates {(0,0.05904448032) (1,0.2998138666) (2,0.278876543) (3,0.08347073197) (4,0.1852707863) (5,0.01018296182) (6,0.1690934002) (7,0.005183987319) (8,0.1734703183) (9,0.003881677985) (10,0.1094207764) (11,0.008554577827) (12,0.08500671387) (13,0.009910583496) (14,0.05267715454) (15,0.005917727947) (16,0.03393936157) (17,0.00998544693) (18,0.01668930054) (19,0.01132512093) (20,0.01330852509) (21,0.01852703094) (22,0.02373886108) (23,0.5471191406) (24,0.2649250031) (25,0.1517615318) (26,0.03082275391)};
\addlegendentry{$\alpha=0.1$}
\addplot[green!50!black,mark=triangle*,mark size=1pt] coordinates {(0,0.1246514954) (1,0.6116398573) (2,0.6193690896) (3,0.1694792509) (4,0.4277038574) (5,0.02490875125) (6,0.1215820312) (7,0.00587439537) (8,0.3810467124) (9,0.007812231779) (10,0.1911468506) (11,0.008918553591) (12,0.1309509277) (13,0.01127564907) (14,0.07312774658) (15,0.01688092947) (16,0.05374145508) (17,0.01252047718) (18,0.03097343445) (19,0.02128940821) (20,0.0368013382) (21,0.02147084475) (22,0.04918050766) (23,0.9663085938) (24,0.3410863876) (25,0.3962593079) (26,0.06625366211)};
\addlegendentry{$\alpha=0.2$}
\addplot[red,mark=diamond*,mark size=1pt] coordinates {(0,0.2423552796) (1,1.374424338) (2,1.208605111) (3,0.2650289461) (4,0.9603271484) (5,0.07097357512) (6,0.1892089844) (7,0.01133602858) (8,0.4913482666) (9,0.01510684192) (10,0.1653289795) (11,0.01869440079) (12,0.1112213135) (13,0.03177942336) (14,0.109172821) (15,0.02454146743) (16,0.02603912354) (17,0.02328563482) (18,0.03371432424) (19,0.04424023628) (20,0.04525583982) (21,0.06023013592) (22,0.09832429886) (23,0.8720703125) (24,1.185741425) (25,0.7682647705) (26,0.1420593262)};
\addlegendentry{$\alpha=0.4$}
\addplot[black,dashed,no marks] coordinates {(0,0.001) (26,0.001)};
\addlegendentry{Quantization width $\Delta=10^{-3}$}
\end{semilogyaxis}
\end{tikzpicture}
\caption{Minimum $L^\infty$ distance at each GPT-2 layer output. Each point is the minimum of 14,400 comparisons (600 prompts, four pairs, six prefixes). All shown minima exceed the comparison quantization-bin width $\Delta=10^{-3}$.}
\label{fig:llm-layer-linf}
\end{figure}

Figure~\ref{fig:llm-layer-linf} shows the minimum recorded $L^\infty$ distance at each layer. At $\alpha=0.05$, the smallest distance is $2.0293\times10^{-3}$ (block 6 attention), exceeding both $\Delta$ and the proximity tolerance $\delta_q/2=5\times10^{-4}$ for $\delta_q=10^{-3}$. These observations provide limited evidence against direct reuse of cached activation tensors in the tested settings; they do not establish computational independence. A fuller theoretical and empirical justification of this assumption is left to future work.

\subsubsection{Generated-prefix agreement}
\label{subsec:llm-prefix-collisions}

We also measure whether independently sampled runs happen to agree on their generated tokens. We use 500 deduplicated 32-token WikiText-2 test prompts and temperatures $\tau\in\{0.7,1.0,1.3,1.5,2.0\}$. We perform the experiment 4 times per prompt, each time generating a pair of output sequences by sampling the output independently. Each pair uses VRF-derived sampling uniforms and full categorical sampling (no top-$k$, top-$p$, or repetition-penalty filtering), with a maximum of 32 generated tokens and early stopping at EOS. For a pair of decoded traces $Y^{(a)}$ and $Y^{(b)}$, we estimate
\[
  C_\ell = \Pr\!\left[Y^{(a)}_{1:\ell}=Y^{(b)}_{1:\ell}\right]
\]
from pairs for which both traces agree until position $\ell$. Thus $C_\ell$ is the probability that two traces share their first $\ell$ generated tokens, and $\sum_{\ell\geq 1} C_\ell$ is the expected shared-prefix length. This experiment complements the activation-distance experiment above and shows that even with the same prompt, it is highly unlikely for two independent inferences to produce the exact same result. 

\begin{figure*}[t]
  \centering
  \includegraphics[width=0.96\textwidth]{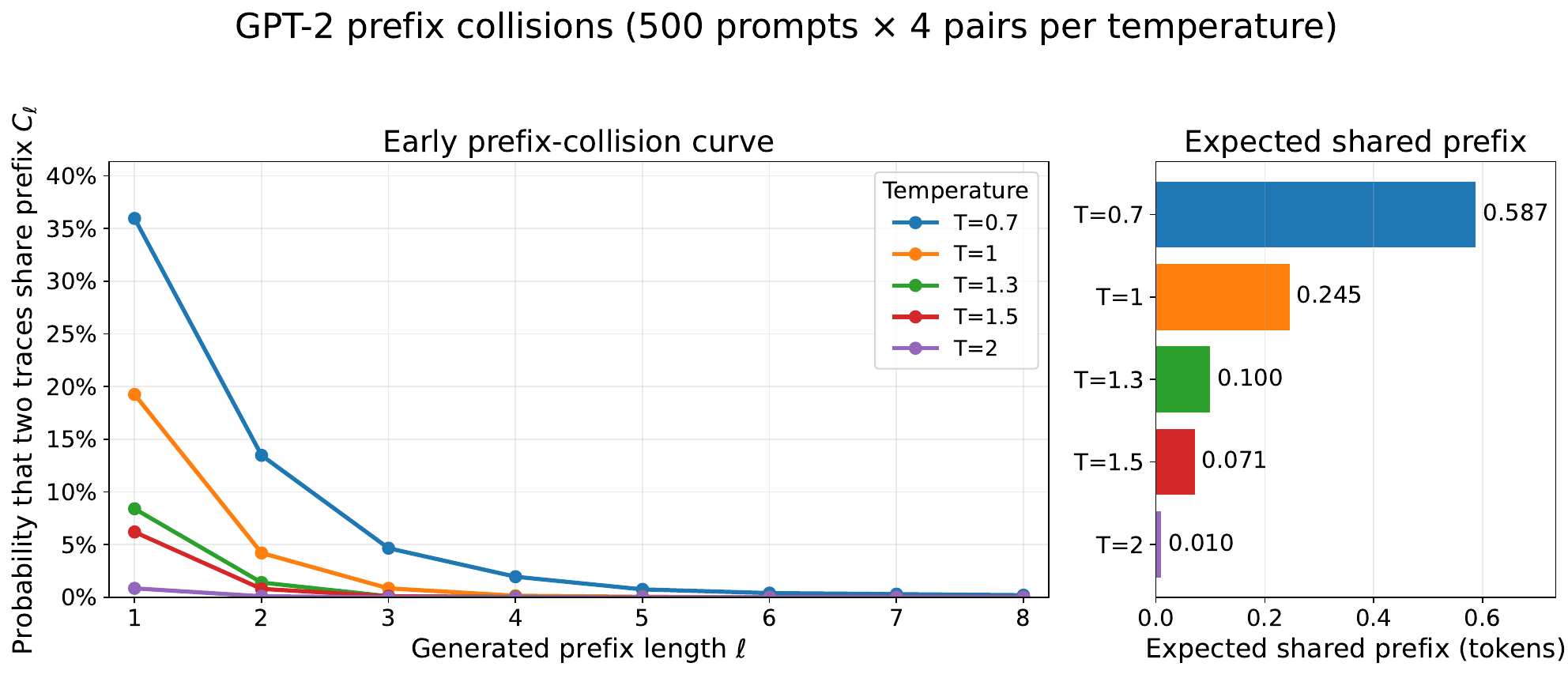}
  \caption{Exact generated-prefix collision probability for independent GPT-2 traces. The experiment uses 500 WikiText-2 prompts, four pairs per prompt, and a 32-token generation cap; pairs are censored at each prefix length when either trace stops at EOS. The right panel reports $\sum_{\ell\geq1}C_\ell$, the expected shared-prefix length.}
  \Description{Two panels show the probability that independently sampled GPT-2 traces share each generated prefix length at several temperatures, together with the expected shared-prefix length for each temperature.}
  \label{fig:llm-prefix-collisions}
\end{figure*}

\begin{theorem}[Computational independence of the logit output]
\label{thm:llm-computational-independence}
Under Assumption~\ref{ass:llm-no-reuse}, the construction of Definition~\ref{def:llm-perturbed} satisfies Definition~\ref{def:delta-reuse} with parameter $\varepsilon_i$ accounting only a small embedding and unembedding process.
\end{theorem}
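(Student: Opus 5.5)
We argue by contradiction. Suppose a PPT pair $(\mathcal{A}_1,\mathcal{A}_2)$ has online depth $\mathsf{Depth}_{\mathcal{A}_2}(\mathsf{st},x,\mathcal{R})\le(1-\varepsilon_i)\,\mathsf{Depth}_\theta(x,\mathcal{R})$ and outputs the logit sequence $M_\theta(x^*,\mathcal{R}^*)=(a_0(r),\ldots,a_{m-1}(r))$ with non-negligible probability. We then show that $\mathcal{A}_2$ must, with non-negligible probability, have reused a layer activation computed before $\mathcal{R}^*$ was known, contradicting Assumption~\ref{ass:llm-no-reuse}.

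\textbf{Step 1: decompose the honest depth.} For each step $t$, the depth of $f_\theta(C_t(r))$ splits as $D_{\mathrm{emb}}+\sum_{\ell\in[L]}D_\ell+D_{\mathrm{head}}$. Here $D_{\mathrm{emb}}$ is the embedding lookup plus noise addition of Definition~\ref{def:llm-perturbed} (and the expansion of $\mathcal{R}$ into $z_{r,i},u_t(r)$). $D_{\mathrm{head}}$ is the final normalisation, unembedding and the sampling lookup. Because the recurrence \eqref{eq:llm-decoding-recurrence} feeds $y_{t+1}$ into $C_{t+1}$, the steps are sequentially dependent, so $\mathsf{Depth}_\theta$ is the sum of these per-step depths. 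We set
\[
\varepsilon_i:=\frac{m\,(D_{\mathrm{emb}}+D_{\mathrm{head}})}{\mathsf{Depth}_\theta(x,\mathcal{R})},
\]
which is the ``small embedding and unembedding'' fraction named in the statement. With this choice, any adversary within the bound has strictly less depth than the full chain of Transformer blocks $\sum_t\sum_\ell D_\ell$.

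\textbf{Step 2: locate a skipped block.} We view the evaluation of $\mathcal{A}_2$ as a circuit. Since every $a_t$ depends on the fresh noise $\eta_{r,i}$ through all $L$ blocks, a circuit of smaller depth cannot recompute every block output $h_{\ell,t}(r)$ from $(x,\mathcal{R}^*)$. Hence there is some $(\ell,t)$ for which the value the adversary propagates to layer $\ell+1$ is not computed online. Correct output then requires that value to agree with $h_{\ell,t}(r)$ within the quantization tolerance $\delta_q/2$. The value must therefore be a function of $\mathsf{st}$ alone, or of $\mathsf{st}$ combined with a sub-circuit too shallow to depend on $\mathcal{R}^*$ through the earlier blocks.

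\textbf{Step 3: build the reduction.} The state $\mathsf{st}$ is produced before $\mathcal{R}^*$ is sampled, so any tensor it stores was derived from challenges $r'$ chosen independently of $r$, or from data independent of $r$. We build a reduction that guesses $(\ell,t)$ and the cached tensor index. This costs a polynomial loss, since $\mathsf{st}$ has polynomial size and there are $Lm$ positions. The reduction then outputs a pair $(r,r')$ with $\|h_{\ell,t}(r)-h_{\ell,t}(r')\|_\infty<\delta_q/2$ with non-negligible probability, contradicting \eqref{eq:llm-block-separation}. A union bound over the polynomially many cached tensors and positions keeps the total success probability negligible in $d$, and therefore in $\kappa$ when $d$ grows with $\kappa$.

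\textbf{Main obstacle.} Step 2 is the hardest part: turning ``depth below the block chain'' into ``some block output was taken from precomputation''. Depth alone does not rule out an adversary that computes a genuinely different but shallower function agreeing with $f_\theta$ on these inputs. Assumption~\ref{ass:llm-no-reuse} speaks only about reuse of cached activations, not about alternative algorithms. We would handle this by reading the assumption, as the paper intends, as asserting that $f_\theta$ admits no shortcut circuit, so the only depth savings possible are reused layer states. We would state this interpretation explicitly as part of the hypothesis. A further point is the assumption that the adversary's cached states come from independent challenges $r'$. This holds because of the pseudorandomness of the VRF-derived $\mathcal{R}$, which we would invoke via Definition~\ref{def:vrf-app}.
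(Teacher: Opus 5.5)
Your route is the paper's route. The paper's proof is your Steps 2--3 compressed into two sentences: ``the adversary cannot reuse any intermediate vector of earlier computation and will have to recompute,'' together with Assumption~\ref{ass:llm-no-reuse}. The obstacle you flag in Step 2 is real. The paper's proof silently relies on the same ``no shortcut circuit'' reading you propose to state explicitly, so making it an extra hypothesis is the honest version of the argument, not a departure from it. That hypothesis must also cover the second branch of your Step 2, ``$\mathsf{st}$ combined with a shallow sub-circuit.'' An example is a cached noise-free activation plus a precomputed first-order correction $J\eta$ in the embedding noise. Step 3 only treats values copied verbatim from $\mathsf{st}$, so as written that branch is dropped.

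Beyond that flagged obstacle, the reduction in Step 3 does not go through as written. Assumption~\ref{ass:llm-no-reuse} is a \emph{pairwise, average-case} statement: two independent \emph{protocol} challenges, on the same prompt, at the same step $t$ and layer $\ell$. A tensor $v$ stored in $\mathsf{st}$ is instead an adversarially chosen fixed target. It may come from an $r'$ that $\mathcal{A}_1$ picked non-uniformly, from another prompt, step or layer, or from no honest run at all. So there is in general no $r'$ for your reduction to output. Independence of $r'$ from $r$ also does not make $r'$ distributed like a protocol challenge. The VRF is irrelevant here, since in Definition~\ref{def:delta-reuse} the challenger samples $\mathcal{R}^*$ uniformly after $\mathsf{st}$ is fixed.

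The repair is for the reduction itself to do the sampling. Fix $(x^*,\mathsf{st})$, guess $(j,\ell,t)$, and sample two fresh independent challenges $r,r'$. If each lands within $\delta_q/2$ of $v_j$ with probability $p$, the triangle inequality gives $\|h_{\ell,t}(r)-h_{\ell,t}(r')\|_\infty<\delta_q$ with probability at least $p^2$. This contradicts the assumption only at tolerance $\delta_q$, not $\delta_q/2$. You therefore need one of two changes. Either restate Assumption~\ref{ass:llm-no-reuse} at tolerance $\delta_q$, or assume a fixed-target form: $\Pr_r[\|h_{\ell,t}(r)-v\|_\infty<\delta_q/2]$ is negligible for every prompt and every $v$ independent of $r$. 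Uniformity over prompts is needed because $x^*$ is chosen by $\mathcal{A}_1$.

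Smaller points:
\begin{itemize}
\item The depth bound in Definition~\ref{def:delta-reuse} is non-strict, so your $\varepsilon_i$ only gives depth \emph{at most} $\sum_t\sum_\ell D_\ell$, not strictly less. Take $\varepsilon_i$ slightly larger.
\item $\varepsilon_i$ must be a single constant, so take a supremum over $(x,\mathcal{R})$.
\item For a fixed model, $d$ is a constant. So $\mathsf{negl}(d)$ gives negligibility in $\kappa$ only under your added hypothesis of a model family with $d=d(\kappa)\to\infty$. The paper also leaves this implicit.
\end{itemize}
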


\begin{proof}
An adversary satisfying the depth bound in Definition~\ref{def:delta-reuse} must reproduce the exact logit sequence. Assumption~\ref{ass:llm-no-reuse} bounds its success by a negligible probability, as any adversary cannot reuse any intermediate vectors of earlier computation and will have to recompute the exact vectors based on the assigned randomness. 
\end{proof}

\section{DDPM Compatibility: Formal Statements and Experiments}
\label{app:cia}

\subsection{Formal Description of DDPM}
\label{subsec:ddpm-formal}

We instantiate the stochastic denoising procedure of Section~\ref{subsec:diffusion-models} as follows.

\begin{definition}[DDPM]
\label{def:stochastic-denoising}
A \emph{DDPM} is a tuple
\[
        M_\theta = \bigl(f_\theta, T, (\sigma_t)_{t=2}^{T}, d\bigr),
\]
where $f_\theta : \mathbb{R}^d \times \mathbb{R}^{d_\mathsf{u}} \times [T] \to \mathbb{R}^d$ is the learned denoising model, $T \geq 2$ is the number of denoising steps, each $\sigma_t>0$ is a fixed noise standard deviation, and $d$ is the latent dimension.
\end{definition}

The inputs are the conditioning embedding $x_\mathsf{u}=e(p)$ of the user's prompt and the noise schedule $\mathcal{R}=(z_1,\ldots,z_T)$. We set $x_T=z_1$, use $z_t$ at reverse step $t$ for $t=T,\ldots,2$, and compute the output $y=x_0$ using the DDPM recurrence from Section~\ref{subsec:diffusion-models}, followed by its noiseless final step. Given $(x_\mathsf{u},\mathcal{R})$, the evaluation is deterministic.

\subsection{Computational Independence}
\label{subsec:ddpm-properties}

The argument has two parts: in each denoising step, fresh noise is added which makes any input distinct with high probability, and an additional assumption bounds reuse within the denoising model. We use $\delta_q>0$ to specify the numerical comparison tolerance, calling two tensors close when their $L^\infty$ distance is below $\delta_q/2$. This is such that in the ZKP, if its quantization precision is at least $\delta_q$, the two tensors are considered different.

\subsubsection{Distinct Inputs at Every Step}
\label{app:input-divergence}

We now formalise the first stage of the $\varepsilon_i$-computational-independence argument: for two runs of $M_\theta$ with independent seeds, the inputs to $f_\theta$ at every denoising step differ by far more than $\delta_q/2$ in $L^\infty$ for some fixed $\delta_q$. The proof uses standard Gaussian anti-concentration.

\begin{lemma}[Gaussian anti-concentration]
\label{lem:anti-conc-app}
Let $W \sim \mathcal{N}(0, \sigma^2)$ with $\sigma > 0$. For any $a \in \mathbb{R}$ and any $\delta_q > 0$:
\[
    \Pr\!\bigl[|a + W| \leq \delta_q/2\bigr] \;\leq\; \frac{\delta_q}{\sigma\sqrt{2\pi}}.
\]
\end{lemma}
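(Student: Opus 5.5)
The plan is to bound the density of $a+W$ pointwise and then integrate over the short interval. Write $|a+W|\leq\delta_q/2$ as the event $W\in[-a-\delta_q/2,\,-a+\delta_q/2]$, which is an interval of length exactly $\delta_q$ regardless of the shift $a$. The density of $W$ is $\phi_\sigma(w)=\frac{1}{\sigma\sqrt{2\pi}}e^{-w^2/(2\sigma^2)}$. Its maximum is attained at $w=0$, so $\phi_\sigma(w)\leq\frac{1}{\sigma\sqrt{2\pi}}$ for every $w\in\mathbb{R}$.

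The probability of the event is the integral of $\phi_\sigma$ over that interval. It is therefore at most the interval length times the supremum of the density:
\[
\Pr\!\bigl[|a+W|\leq\delta_q/2\bigr]=\int_{-a-\delta_q/2}^{-a+\delta_q/2}\phi_\sigma(w)\,dw\;\leq\;\delta_q\cdot\frac{1}{\sigma\sqrt{2\pi}}.
\]
This is exactly the claimed bound. It holds uniformly in $a$, which is what the later step-by-step divergence argument needs, since there $a$ will be an arbitrary (possibly adversarially correlated but independent of the fresh noise) offset.

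There is no real obstacle. The only point worth stating explicitly is that the bound is uniform in $a$: centring the interval at $-a$ can only lower the mass relative to centring at $0$, but the crude sup-density estimate already suffices. The bound is vacuous when $\delta_q\geq\sigma\sqrt{2\pi}$, and that is harmless for the statement as given.
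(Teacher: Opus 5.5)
Your proof is correct and follows essentially the same route as the paper: bound the Gaussian density pointwise by $\frac{1}{\sigma\sqrt{2\pi}}$ and integrate over the length-$\delta_q$ interval $[-a-\delta_q/2,\,-a+\delta_q/2]$. You state the uniformity in $a$ more explicitly than the paper's two-line argument does, and that uniformity is exactly what the later conditional step in Theorem~\ref{thm:cia-theorem} relies on.
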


\begin{proof}
The density of $a + W$ is bounded above by $\frac{1}{\sigma\sqrt{2\pi}}$. Integrating over an interval of width $\delta_q$ gives the result.
\end{proof}

\begin{theorem}[Step-level input differences in stochastic denoising]
\label{thm:cia-theorem}
Let $M_\theta$ be a DDPM (Definition~\ref{def:stochastic-denoising}) with $T$ denoising steps, latent dimension~$d$, and noise standard deviations $\sigma_t > 0$ for $t = T, \ldots, 2$. Let $\sigma_{\min} = \min\{1,\sigma_2,\ldots,\sigma_T\}$ and suppose $\delta_q < 2\sigma_{\min}\sqrt{\pi}$. Let $x_t(r)$ denote the intermediate vector $x_t$ at time $t$ with the random seed $r$. For two independent seeds $r, r'$:
\[
    \Pr\!\bigl[\,\exists\, t \in \{1, \ldots, T\} :\; \|x_t(r) - x_t(r')\|_\infty < \delta_q/2 \bigr] \;\leq\; T \cdot \left(\frac{\delta_q}{2\sigma_{\min}\sqrt{\pi}}\right)^{\!d}.
\]
\end{theorem}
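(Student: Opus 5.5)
The plan is a union bound over the $T$ time indices, combined with a per-step bound that uses the fresh Gaussian noise injected at that step together with independence across coordinates. Fix $t$. I will condition on everything generated before the randomness that enters $x_t$, and show
\[
\Pr\!\bigl[\|x_t(r)-x_t(r')\|_\infty<\delta_q/2\bigr]\le\Bigl(\tfrac{\delta_q}{2\sigma_{\min}\sqrt{\pi}}\Bigr)^{d}.
\]
Summing over $t$ then gives the stated bound.

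\textbf{Step 1: the step $t=T$.} For the starting point $x_T=z_1$, the difference $x_T(r)-x_T(r')=z_1-z_1'$ has independent coordinates distributed as $\mathcal{N}(0,2)$. Applying Lemma~\ref{lem:anti-conc-app} coordinatewise with $a=0$ and $\sigma=\sqrt2$, each coordinate satisfies $|\cdot|\le\delta_q/2$ with probability at most $\delta_q/(\sqrt2\sqrt{2\pi})=\delta_q/(2\sqrt\pi)$. Independence across the $d$ coordinates gives the bound $(\delta_q/(2\sqrt\pi))^d\le(\delta_q/(2\sigma_{\min}\sqrt\pi))^d$.

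\textbf{Step 2: the noisy steps $x_{t-1}$.} Here $x_{t-1}(r)-x_{t-1}(r')=a+\sigma_t(z_t-z_t')$, where $a=f_\theta(x_t(r),x_{\mathsf u},t)-f_\theta(x_t(r'),x_{\mathsf u},t)$. The vector $a$ is measurable with respect to the earlier noise, so it is independent of the fresh pair $(z_t,z_t')$. Conditioning on $a$, each coordinate of the noise term is $\mathcal{N}(0,2\sigma_t^2)$. Lemma~\ref{lem:anti-conc-app} then bounds each coordinate's closeness probability by $\delta_q/(2\sigma_t\sqrt\pi)\le\delta_q/(2\sigma_{\min}\sqrt\pi)$, uniformly in $a$. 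Multiplying over coordinates and integrating out $a$ gives the per-step bound. The hypothesis $\delta_q<2\sigma_{\min}\sqrt\pi$ only ensures the base is below $1$, so that the bound is nontrivial.

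\textbf{Main obstacle: the final noiseless step and the indexing of $t$.} The statement ranges over $t\in\{1,\dots,T\}$, but $x_0$ is produced without fresh noise, so it is excluded from the range. The index $x_1$ is itself produced by a noisy step, namely with $\sigma_2$ at step $t=2$. The noisy steps $t=T,\dots,2$ yield $x_{T-1},\dots,x_1$, and together with $x_T$ these are exactly $T$ vectors, which matches the factor $T$. In the write-up I would fix this bookkeeping explicitly, mapping $z_1$ to $x_T$ and $z_t$ to $x_{t-1}$.

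A second point needs care. In the protocol the $z_t$ are VRF outputs rather than true Gaussians. I would state the theorem for ideal Gaussian noise, as written, and remark that pseudorandomness/unbiasability (Definition~\ref{def:unbiasable-app}) transfers the bound up to a negligible additive term.
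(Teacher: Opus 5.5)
Your proposal is correct and follows the paper's proof essentially step for step. The initial state $x_T$ is handled by Lemma~\ref{lem:anti-conc-app} with $a=0$ and $\sigma=\sqrt2$. Each noisy step is handled by conditioning on the previous pair of states, so the drift difference is a fixed shift and the fresh noise difference has per-coordinate variance $2\sigma_t^2$. Coordinate independence then gives the $d$-th power, and a union bound over the $T$ states $x_T,\ldots,x_1$ finishes. Your explicit bookkeeping of which noise vector produces which state, and your remark about VRF-derived rather than ideal Gaussian noise, go slightly beyond what the paper writes but do not change the argument.
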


\begin{proof}
We bound the probability that the $L^\infty$ distance between two single-step input states falls below $\delta_q/2$, then apply a union bound over $T$ steps.

\paragraph{Step $T$ (initial noise).}
The states $x_T(r)$ and $x_T(r')$ are independent samples from $\mathcal{N}(0, I_d)$. For each coordinate~$j$, $x_{T,j}(r) - x_{T,j}(r') \sim \mathcal{N}(0, 2)$. By Lemma~\ref{lem:anti-conc-app} with $a = 0$ and $\sigma = \sqrt{2}$:
\[
    \Pr\!\bigl[|x_{T,j}(r) - x_{T,j}(r')| \leq \delta_q/2\bigr] \;\leq\; \frac{\delta_q}{2\sqrt{\pi}}.
\]
Since coordinates are independent, the probability that all $d$ coordinates simultaneously lie within $\delta_q/2$ is at most $\bigl(\delta_q/(2\sqrt{\pi})\bigr)^d$.

\paragraph{Steps $t = T{-}1, \ldots, 1$.}
From the DDPM recurrence in Section~\ref{subsec:diffusion-models}, $x_t = f_\theta(x_{t+1}, x_\mathsf{u}, t{+}1) + \sigma_{t+1}\, z_{t+1}$, where $z_{t+1}(r), z_{t+1}(r')$ are independent $\mathcal{N}(0, I_d)$ vectors. For each coordinate~$j$, let $a_j = f_\theta(x_{t+1}(r), x_\mathsf{u}, t{+}1)_j - f_\theta(x_{t+1}(r'), x_\mathsf{u}', t{+}1)_j$. Then
\[
x_{t,j}(r) - x_{t,j}(r') \;=\; a_j
    +\; \sigma_{t+1}\bigl[z_{t+1,j}(r) - z_{t+1,j}(r')\bigr].
\]
Conditioned on $(x_{t+1}(r), x_{t+1}(r'))$, $a_j$ is deterministic and $z_{t+1,j}(r) - z_{t+1,j}(r') \sim \mathcal{N}(0, 2)$. Lemma~\ref{lem:anti-conc-app} with $a = a_j$ and $\sigma = \sigma_{t+1}\sqrt{2}$ gives
\[
    \Pr\!\bigl[|x_{t,j}(r) - x_{t,j}(r')| \leq \delta_q/2 \;\big|\; x_{t+1}(r), x_{t+1}(r')\bigr] \;\leq\; \frac{\delta_q}{2\sigma_{t+1}\sqrt{\pi}}\,.
\]
Independence of the noise coordinates yields a per-step bound of $\bigl(\delta_q/(2\sigma_{t+1}\sqrt{\pi})\bigr)^d$.

\paragraph{Union bound.}
Using $\sigma_{t+1} \geq \sigma_{\min}$ and the initial noise standard deviation $1 \geq \sigma_{\min}$,
\[
    \Pr\bigl[\exists\, t : \|x_t(r) - x_t(r')\|_\infty < \delta_q/2 \bigr] \;\leq\; T \cdot \left(\frac{\delta_q}{2\sigma_{\min}\sqrt{\pi}}\right)^{\!d}. \qedhere
\]
\end{proof}

\paragraph{Numerical instantiation.}
For Stable Diffusion v1-4 ($d = 16{,}384$, $T = 50$, $\sigma_{\min} \approx 0.01$, $\delta_q = 2^{-10} \approx 10^{-3}$), the per-coordinate probability is $\delta_q/(2\sigma_{\min}\sqrt{\pi}) \approx 0.028$, the per-step bound is $0.028^{16{,}384} < 10^{-25{,}000}$, and the union bound over $T = 50$ steps leaves the probability astronomically small.

\subsubsection{No Reuse Inside $f_\theta$}
\label{app:amortizability}

Theorem~\ref{thm:cia-theorem} controls the input to $f_\theta$ at every step. To rule out reuse \emph{inside} $f_\theta$, we adopt the following assumption.

\begin{assumption}[Computational independence of $f_\theta$ under distinct inputs]
\label{def:ftheta-assumption}
Let $f_\theta : \mathbb{R}^d \times \mathbb{R}^{d_\mathsf{u}} \times [T] \to \mathbb{R}^d$ be the denoising network with $L$ layers and layer-$\ell$ intermediate activations $h_\ell \in \mathbb{R}^{d_\ell}$ ($d_\ell$ being the dimension of the $\ell$-th layer). For two inputs $(x_t, x_\mathsf{u}, t)$ and $(x_t', x_\mathsf{u}', t')$ to $f_\theta$ where $x_t - x_t' \sim \mathcal{N}(0,\, 2\sigma_{\min}^2 I_d)$ (i.e. assuming they differ only by the gaussian noise induced by the diffusion model),
\[
    \Pr\!\bigl[\exists\, \ell \in [L] : \|h_\ell(x_t,x_\mathsf{u},t) - h_\ell(x_t',x_\mathsf{u}',t')\|_\infty < \delta_q/2 \bigr] = \mathsf{negl}(d).
\]
We additionally assume that the cost of evaluating $f_\theta(x_t,x_\mathsf{u},t)$ given oracle access to the full computation trace of any prior evaluation of~$f_\theta$ is at least $(1 - \mathsf{negl}(d)) \cdot \mathsf{Depth}(f_\theta)$.
\end{assumption}

\paragraph{Geometric intuition.}
The denoising model's updates guide it towards regions where the target image manifold lies. Figure~\ref{fig:manifold-diagram} illustrates this: when two input positions are different, $f_\theta$ needs to point in different directions so that they advance towards the target manifold.. 

\begin{figure}[t]
\centering
\begin{tikzpicture}[>=Stealth, font=\small]
  \draw[dashed, gray, fill=gray!8] (-3.6,-2) rectangle (3.8,2);
  \node[gray, anchor=north west] at (-3.55, 1.95) {\scriptsize input space $\mathbb{R}^d$};
  \draw[thick, fill=green!15] (2.0, -0.2) ellipse (1.1 and 0.65);
  \node[align=center, font=\scriptsize] at (2.0, -0.2) {image\\manifold};
  \draw[dashed, blue!40, thick, ->] (-1.9, 0.85) -- (1.60, 0.20);
  \draw[dashed, blue!40, thick, ->] (-2.2,-1.10) -- (1.60,-0.50);
  \filldraw[red!70!black] (-1.9, 0.85) circle (1.8pt);
  \node[anchor=south, red!70!black] at (-1.9, 0.90) {\small $x_t$};
  \filldraw[red!70!black] (-2.2,-1.10) circle (1.8pt);
  \node[anchor=north, red!70!black] at (-2.2,-1.15) {\small $x_t'$};
  \filldraw[red!60!black] (0.45, 0.41) circle (1.8pt);
  \node[anchor=south, red!60!black] at (0.45, 0.46) {\small $\bar{x}_{t-1}$};
  \filldraw[red!60!black] (0.20,-0.72) circle (1.8pt);
  \node[anchor=north, red!60!black] at (0.20,-0.77) {\small $\bar{x}_{t-1}'$};
  \draw[thick, ->, blue] (-1.9, 0.85) -- (0.45, 0.41)
      node[midway, above, font=\scriptsize] {$f_\theta$};
  \draw[thick, ->, blue] (-2.2,-1.10) -- (0.20,-0.72)
      node[midway, below, font=\scriptsize] {$f_\theta$};
\end{tikzpicture}
\caption{Geometric intuition for denoising from distinct states. Solid arrows indicate the reverse-step means $\bar{x}_{t-1}=f_\theta(x_t,x_\mathsf{u},t)$ before fresh noise is added; dashed arrows illustrate the overall directions towards the image manifold. }
\label{fig:manifold-diagram}
\end{figure}

\subsection{Empirical Activation Differences}
\label{app:cia-experiments}

We examine evidence for activation separation on Stable Diffusion v1-4~\cite{rombach2022ldm} ($\sim$860M parameters, $d = 16{,}384$, $T = 50$), using a workstation with a 32-core Intel Core~i9 CPU and an NVIDIA RTX~A6000 GPU.

\paragraph{Setup.}
We simulate an adversary attempting to reuse computation from a prior inference on a similar query. We fix a text prompt~$p$ and a base seed~$r$ producing $x_T \sim \mathcal{N}(0, I_d)$, then construct perturbed inputs $x_T + \eta$ with $\eta \sim \mathcal{N}(0, \sigma^2 I_d)$ for $\sigma \in \{0.001, 0.01, 0.1, 0.5, 1.0\}$, representing starting points ranging from near-identical to moderately different. We also include a fully independent-seed baseline. For $N = 1{,}000$ base inputs we record every intermediate noise vector $x_t$ at every denoising step $t$ and report (i)~per-step cosine similarity between base and perturbed vectors, and (ii)~the minimum $L^\infty$ distance across all samples and steps.

Figure~\ref{fig:cosine_sim_app} shows the per-step cosine similarity between base and perturbed vectors. Even for the smallest perturbation, trajectories show notable differences over time. This confirms that $f_\theta$ steers different noise vectors towards different directions, causing denoising paths to differ as predicted.

\begin{figure}[t]
\centering
\includegraphics[width=\columnwidth]{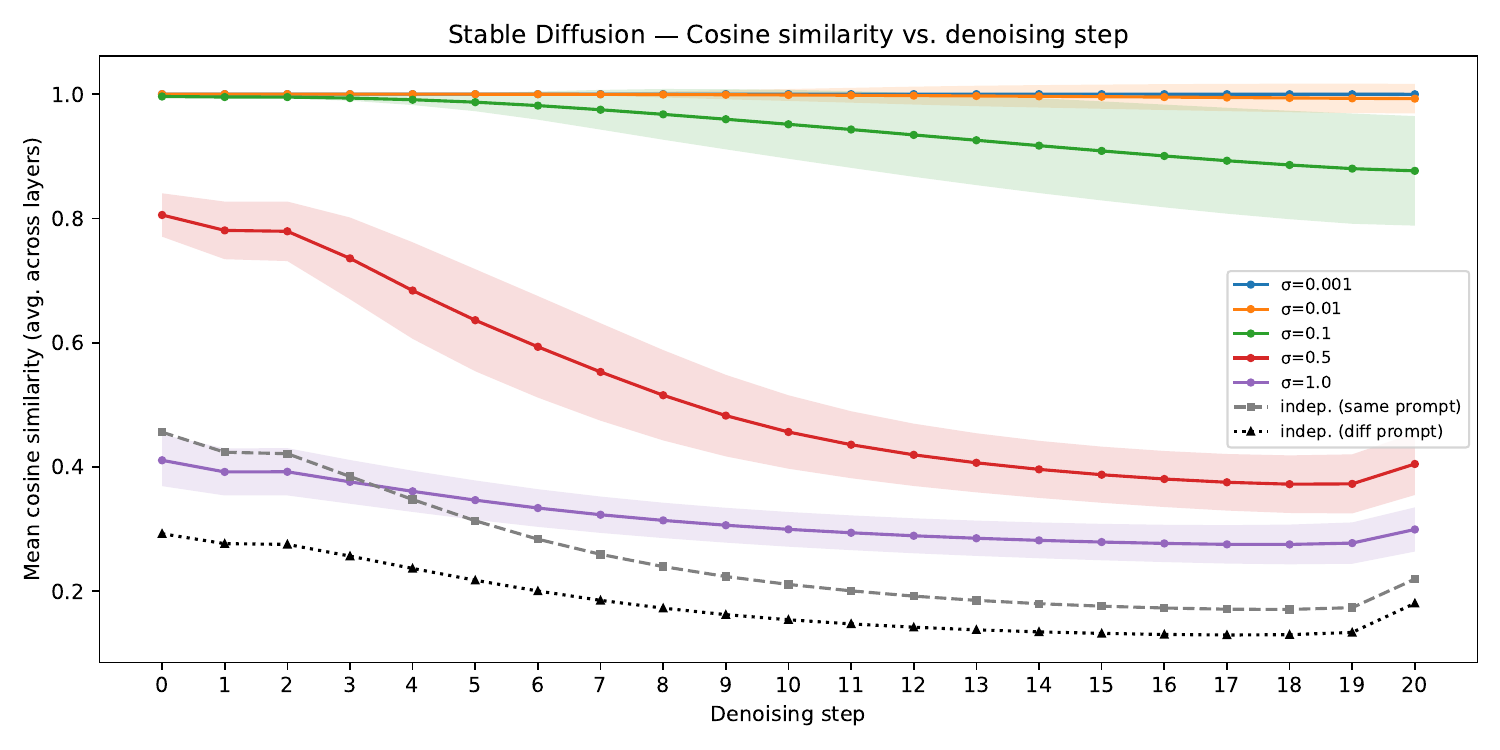}
\caption{Cosine similarity between base and perturbed activations in Stable Diffusion across denoising steps. Even for small perturbations, trajectories differ increasingly. Shaded bands: $\pm 1$ std.\ dev.\ over $N = 1{,}000$ samples.}
\label{fig:cosine_sim_app}
\end{figure}

Table~\ref{tab:min-linf-app} reports the minimum $L^\infty$ distance between base and perturbed noise vectors across all $N = 1{,}000$ samples and all denoising steps---the best case for an adversary attempting reuse. Reuse requires the minimum distance to fall below $\delta_q/2$. For the smallest perturbation ($\sigma = 0.001$), the minimum distance is $2.93 \times 10^{-3}$, exceeding $\delta_q/2$ for all standard ZKML quantization ranges $\delta_q\in [2^{-18}, 2^{-10}]$. In PoML the seeds are fully independent VRF outputs, placing real deployments well beyond the largest perturbation shown.

\begin{table}[t]
  \centering
  \caption{Minimum $L^\infty$ distance between base and perturbed activations across $N = 1{,}000$ samples and all denoising steps, showing a value greater than what most practical deployments require.}
  \label{tab:min-linf-app}
  \begin{tabular}{lr}
    \toprule
    $\boldsymbol{\sigma}$ & \textbf{Min $L^\infty$ distance} \\
    \midrule
    $0.001$ & $0.00293$ \\
    $0.010$ & $0.02832$ \\
    $0.100$ & $0.28076$ \\
    $0.500$ & $0.75488$ \\
    $1.000$ & $2.57617$ \\
    \bottomrule
  \end{tabular}
\end{table}

\subsection{Combined Theorem}
\label{app:final-theorem}

\begin{theorem}[DDPM are $\varepsilon_i$-computationally independent]
\label{thm:amort-diff}
Let $M_\theta$ be a DDPM whose denoising network $f_\theta$ satisfies Assumption~\ref{def:ftheta-assumption}. Then $M_\theta$ satisfies Definition~\ref{def:delta-reuse} for any
\[
    \varepsilon_i \;\geq\; T \cdot \left(\frac{\delta_q}{2\sigma_{\min}\sqrt{\pi}}\right)^{\!d} \;+\; \mathsf{negl}(d).
\]
Since the right-hand side is exponentially small in the latent dimension~$d$, the definition holds for any non-negligible $\varepsilon_i$.
\end{theorem}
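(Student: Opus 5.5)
The plan is a hybrid argument that splits the adversary's success in Definition~\ref{def:delta-reuse} into two events and bounds each by an earlier result. Fix an adversary $(\mathcal{A}_1,\mathcal{A}_2)$ obeying the depth bound $\mathsf{Depth}_{\mathcal{A}_2}\le(1-\varepsilon_i)\mathsf{Depth}_\theta$. The state $\mathsf{st}$ has polynomial size, so it encodes at most polynomially many prior computation traces of $f_\theta$. Each trace is on some input $(x_t',x_\mathsf{u}',t')$ produced from randomness $\mathcal{R}'$, and $\mathcal{R}'$ is fixed before the challenge $\mathcal{R}^*$ is sampled.

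\textbf{Step 1: bad event of a close input.} Let $E_1$ be the event that, for some denoising step $t$ of the challenge evaluation $M_\theta(x^*,\mathcal{R}^*)$, the step input $x_t(\mathcal{R}^*)$ is within $\delta_q/2$ in $L^\infty$ of some step input stored in $\mathsf{st}$. Because $\mathcal{R}^*$ is independent of $\mathsf{st}$, the anti-concentration argument of Theorem~\ref{thm:cia-theorem} applies directly. Its proof only needs the fresh noise $z_{t+1}(\mathcal{R}^*)$ to be independent of the other trajectory, so conditioning on that trajectory being fixed is fine. This gives a per-pair bound of $T(\delta_q/(2\sigma_{\min}\sqrt\pi))^d$. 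A union bound over the polynomially many stored traces keeps $\Pr[E_1]$ of order $\mathrm{poly}\cdot T(\delta_q/(2\sigma_{\min}\sqrt\pi))^d$, which is exponentially small in $d$.

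\textbf{Step 2: on $\neg E_1$, no reuse inside $f_\theta$.} Condition on $\neg E_1$, so every step input of the challenge run is far from all cached inputs. Assumption~\ref{def:ftheta-assumption} then gives two things. First, every intermediate activation $h_\ell$ is separated by at least $\delta_q/2$ from the cached ones, except with $\mathsf{negl}(d)$ probability. Second, each evaluation of $f_\theta$ costs at least $(1-\mathsf{negl}(d))\mathsf{Depth}(f_\theta)$ even with oracle access to prior traces.

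The $T$ denoising steps are sequentially dependent: $x_{t-1}$ needs $f_\theta(x_t,\cdot)$ plus the fresh noise $z_t$, which is known only online. So the depths add up. Any circuit producing the exact output $y^*$ therefore needs depth at least $(1-\mathsf{negl}(d))\mathsf{Depth}_\theta(x^*,\mathcal{R}^*)$, up to the negligible additive cost of the noise additions.

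If $\varepsilon_i$ exceeds the combined slack, a depth-$(1-\varepsilon_i)$ circuit cannot compute the correct chain except on the bad events. Its remaining chance is guessing a $d$-dimensional quantized output, which happens with negligible probability.

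\textbf{Step 3: combine.} Summing gives success probability at most $\Pr[E_1]+\mathsf{negl}(d)+\mathsf{negl}$, which is negligible. Tying $d$ and the quantized noise precision to $\kappa$ via the VRF-derived seeds turns this into the $\mu(\kappa)$ required by Definition~\ref{def:delta-reuse}. The stated lower bound on $\varepsilon_i$ is the sum of the Step 1 and Step 2 slacks.

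\textbf{Main obstacle.} Step 2 is the hardest: turning the per-evaluation depth lower bound of the assumption into a lower bound on the whole sequential composition. An adversary might in principle compute the $T$ steps in some non-sequential way. I would first argue that each $z_t$ enters only at step $t$, so step $t$'s computation cannot begin before step $t+1$'s output exists; a hybrid over $t$ then lets the assumption be applied one step at a time. The mismatch between the probabilistic quantity $\varepsilon_i$ and the negligible bounds is then only a matter of choosing constants.
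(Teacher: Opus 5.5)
Your proposal is essentially the paper's proof. Theorem~\ref{thm:cia-theorem} separates every step input of the challenge run from earlier computations, Assumption~\ref{def:ftheta-assumption} rules out reuse inside each call to $f_\theta$, and the per-call depths are summed over the $T$ sequential steps to contradict the $(1-\varepsilon_i)$ depth bound. Your explicit union bound over the traces in $\mathsf{st}$ is a refinement the paper leaves implicit. One correction there: with a fixed rather than independent comparison trajectory, Lemma~\ref{lem:anti-conc-app} only gives the per-coordinate factor $\delta_q/(\sigma_{\min}\sqrt{2\pi})$, not $\delta_q/(2\sigma_{\min}\sqrt{\pi})$. This is still exponentially small once $\delta_q<\sigma_{\min}\sqrt{2\pi}$ and is absorbed into the $\mathsf{negl}(d)$ term.

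You rightly flag additivity of depth across the $T$ steps as the main obstacle, and the paper likewise only asserts it. Your hybrid sketch would not settle it on its own, since an arbitrary low-depth circuit need not materialise the intermediate $x_t$. In both versions it is effectively an extra assumption beyond the per-call bound of Assumption~\ref{def:ftheta-assumption}.
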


\begin{proof}
Consider the computational-independence game of Definition~\ref{def:delta-reuse}. We show the contrapositive: any PPT adversary $\mathcal{A} = (\mathcal{A}_1, \mathcal{A}_2)$ that succeeds with non-negligible probability must incur cost at least $(1 - \mathsf{negl}(d)) \cdot \mathsf{Depth}(M_\theta(x_\mathsf{u}^*, r^*))$ for $\mathcal{A}_2$.

Suppose $\mathcal{A}_2$ outputs the correct $y^* = M_\theta(x_\mathsf{u}^*, r^*)$. By Theorem~\ref{thm:cia-theorem}, for every query made by $\mathcal{A}_1$ with seed $r \neq r^*$, the inputs to $f_\theta$ at each denoising step of the fresh inference differ from those of the evaluation of $M_\theta(x_\mathsf{u}^*, r^*)$ by at least $\delta_q/2$ in $L^\infty$, except with probability $T \cdot (\delta_q/(2\sigma_{\min}\sqrt{\pi}))^d$ -- exponentially small in~$d$. Hence the input-difference condition of Assumption~\ref{def:ftheta-assumption} is satisfied, except with negligible probability in~$d$.

By Assumption~\ref{def:ftheta-assumption}, no intermediate layer activation within $f_\theta$ falls within $\delta_q/2$ of any cached layer activation, so each of the $T$ evaluations of $f_\theta$ required to produce $y^*$ costs at least $(1 - \mathsf{negl}(d)) \cdot \mathsf{Depth}(f_\theta)$. The $(T-1)$ noise additions cannot be skipped either: each uses a distinct noise vector $z_t$ derived from the fresh seed, contributing cost $(T-1) \cdot \Theta(d)$. Therefore, whenever $\mathcal{A}_2$ succeeds:
\begin{align*}
    \mathsf{Depth}(\mathcal{A}_2)
        &\geq T \cdot (1 - \mathsf{negl}(d)) \cdot \mathsf{Depth}(f_\theta) + (T-1)\cdot\Theta(d) \\
        &= (1 - \mathsf{negl}(d)) \cdot \mathsf{Depth}(M_\theta(x_\mathsf{u}^*, r^*)),
\end{align*}
using $\Theta(d) \ll \mathsf{Depth}(f_\theta)$. Equivalently, any $\mathcal{A}_2$ whose cost is below $(1 - \mathsf{negl}(d)) \cdot \mathsf{Depth}(M_\theta(x_\mathsf{u}^*, r^*))$ succeeds with at most negligible probability in~$d$.
\end{proof}

\section{A Reference Complexity Function for GPT-2 and DeepProve}
\label{app:llm-complexity}

We instantiate the combined complexity function of
Section~\ref{subsec:model-specific-protocol} for GPT-2~\cite{radford2019language} and DeepProve~\cite{lagrange_deepprove}. In this definition, we define complexity using operation counts, which depend only on the model and proving
configuration, rather than the device executing them.

\subsection{Model and Proving Configuration}

We use GPT-2 small~\cite{radford2019language}, with 12 Transformer blocks,
hidden width 768, 12 attention heads, feed-forward width 3072, and vocabulary
size 50,257. DeepProve~\cite{lagrange_deepprove} evaluates the quantized model
and proves its execution using sumcheck, LogUp lookup arguments and HyperKZG
polynomial commitments. 

We ran the cryptographic setup phase of the ZKP for a maximum total context of
64 tokens. This setup is reused over multiple proofs and is excluded from the
complexity function. We define the function over two variables $N$ and $K$,
denoting the numbers of input and output tokens, respectively. The function
counts the online inference and ZKP components during the mining.

\subsection{Reference Operations and Complexity Function}
\label{subsec:gpt2-reference-operations}

For the fixed GPT-2 model and DeepProve configuration, we define
\begin{equation}
\boxed{
C_{\theta,\Pi;\mathbf w}(N,K)
=\alpha_{\mathbf w}S+\beta_{\mathbf w}U+\gamma_{\mathbf w}R
 +w_{\rm axis}D(S)+G_{\mathbf w}(P).
}
\label{eq:gpt2-reference-gas}
\end{equation}
The first three terms account for inference and trace reconstruction; the last two account for proof
generation. Each entry of $\mathbf w$ is the weight assigned to one kind of
operation (analogous to how Ethereum assign gas to different operations). This combined reference function is used for fees and lottery weighting. Each component of the complexity function is explained below:

\begin{description}
\item[Linear Transformer operations]
$S=N+K$ counts the total sequence length of input and output tokens. For a fixed transformer, each token
is a vector of the same width and passes through the same layers. The embeddings, linear projections, LayerNorm, residual additions,
activations and output logits therefore require a fixed amount of work per
processed token. Rescaling the newly computed values into an integer
representation for the ZKP, called \emph{requantization}, also contributes per-token work.
The term $\alpha_{\mathbf w}S$ collects these operations. 

\item[Attention work.]
Attention compares each query vector with key vectors from the token
positions it attends to, then combines the corresponding values. Its matrix
multiplications, masking and softmax therefore grow with the number of
query--key pairs. Across the whole decoding pass, this operations needed for the attention mechanism is
\begin{equation}
\begin{aligned}
U&=\underbrace{N^2}_{\text{prefill}}
 +\underbrace{\sum_{j=1}^{K}(N+j)}_{\text{decoding}}
 +\underbrace{S^2}_{\text{trace reconstruction}}\\
 &=N^2+NK+\frac{K(K+1)}2+S^2.
\end{aligned}
\label{eq:gpt2-attention-work}
\end{equation}
The first term processes the prompt, the second generates the output tokens,
and the third rebuilds the complete execution trace for the proof. 

\item[Cache Building]
The key/value cache stores vectors for earlier tokens so decoding can reuse
them. In our implementation, each decoding pass appends the old cache with the
new vectors and requantizes the combined array. Consequently, the
requantization step revisits $N$ cached positions on the first pass, $N+1$ on
the second, and so on. The total number of cache operations is
\[
R=\sum_{j=1}^{K}(N+j-1)=NK+\frac{K(K-1)}2.
\]

\item[Proof reductions.]
When proving the model's matrix operations, DeepProve forms weighted sums
along rows, columns or other dimensions of the arrays involved. These sums
reduce large arrays to smaller quantities used in the proof. This operation is called multiply--accumulate, and we count the number of these operations as
\begin{equation}
D(S)=123863808+111360S+144S^2.
\label{eq:gpt2-axis-work}
\end{equation}
We obtained this expression by enumerating the reductions for GPT-2's layers including
query/key/value projections, attention products, 
feed-forward layers and final vocabulary decoding. 

\item[ZKP Work.]
For many remaining steps within the ZKP, DeepProve arranges its numerical data in
arrays whose sequence dimension is a power of two, adding zero entries when
needed. Hence we define
\[
P=2^{\lceil\log_2 S\rceil}.
\]
For example, a sequence of length 33 uses $P=64$, while a sequence of length
32 uses $P=32$. The work required for the ZKP scales directly with $P$. The operations include
\emph{arithmetic operations};
\emph{lookup operations} that checks if computed value matches some predefined tables;
\emph{commitment operations} which cryptographically bind the prover to its data;
and \emph{challenge-related operations} which record proof transcripts and derive
unpredictable challenges from them. Their combined work effort is
$G_{\mathbf w}(P)=\mathbf w^{\mathsf T}\mathbf B_P$, where $\mathbf B_P$ is
obtained by counting the operations with a fixed $P$.
\end{description}

Table~\ref{tab:gpt2-operation-summary} lists the operations contributing to
these terms. Each operation has its own dedicated weight, which is combined linearly to the coefficient that each operation scales with: $\alpha_{\mathbf w}$ is the weighted
sum of the table's $S$ coefficients, $\beta_{\mathbf w}$ of its $U$
coefficients, and $\gamma_{\mathbf w}$ of its $R$ coefficients.
For example, the three attention operations give
\[
\beta_{\mathbf w}=18432w_{\rm MAC}
 +144w_{\rm mask}+144w_{\rm softmax}.
\]

\begin{table}[tbp]
\centering
\caption{Counted operations and their dependence on sequence lengths.
Numerical inference coefficients and $D(S)$ are derived counts.
For each proof group $g$, $\mathbf B_g(P)$ contains the measured reference
counts at the five padded lengths. Gas weights remain unspecified.}
\label{tab:gpt2-operation-summary}
\footnotesize
\setlength{\tabcolsep}{3pt}
\renewcommand{\arraystretch}{1.15}
\begin{tabularx}{\columnwidth}{@{}l>{\raggedright\arraybackslash}X c >{\raggedright\arraybackslash}p{0.35\columnwidth}@{}}
\toprule
Stage & Counted operations & Depend on & Count or measured profile\\
\midrule
\multirow{10}{*}{Inference}
 & Integer multiply--accumulates & $S,U$ & $247064064S+18432U$\\
 & Token embeddings & $S$ & $1536S$\\
 & Positional embeddings & $S$ & $1536S$\\
 & LayerNorm elements & $S$ & $38400S$\\
 & Element additions & $S$ & $36864S$\\
 & Activation elements & $S$ & $73728S$\\
 & Logit elements & $S$ & $100514S$\\
 & Attention-mask elements & $U$ & $144U$\\
 & Softmax elements & $U$ & $144U$\\
 & Requantization elements & $S,R$ & $150528S+18432R$\\
\midrule
\multirow{5}{*}{ZKP}
 & Arithmetic on actual arrays & $D(S)$ & $D(S)$ in Eq.~\eqref{eq:gpt2-axis-work}\\
 & Arithmetic on padded data & $P$ & $\mathbf B_{\rm arith}(P)$\\
 & Lookup operations & $P$ & $\mathbf B_{\rm lookup}(P)$\\
 & Commitment operations & $P$ & $\mathbf B_{\rm commit}(P)$\\
 & Challenge-related operations & $P$ & $\mathbf B_{\rm challenge}(P)$\\
\bottomrule
\end{tabularx}
\end{table}

We fix the matrix $\mathbf B_P$ by profiling the operations done by DeepProve with different prompt pairs that yields different $(N,K)$ values. The exact definition of this matrix can be found on our GitHub.

\subsection{Experimental Validation}

We tested Eq.~\eqref{eq:gpt2-reference-gas} using synthetic prompts whose
tokens were sampled uniformly from the GPT-2 vocabulary. We covered 40 distinct
$(N,K)$ pairs and 24 total $N+K$ lengths, up to a 64-token context limit. These
included short and longer prompts, short contexts, varied input/output
splits, and lengths around power-of-two boundaries. 

For each execution, we recorded the operation counts during inference and
proof generation and compared them with the counts predicted by the fixed
equation. Let $W_j$ and $\widehat W_j$ denote the recorded and predicted
counts for operation type $j$. We measure the largest relative discrepancy
across all counted inference and proof operations:
\begin{equation}
\epsilon_{\rm trial}
=\max_j\frac{|\widehat W_j-W_j|}{W_j}.
\label{eq:gpt2-count-error}
\end{equation}

Our results show the inference portion of the complexity function
matched exactly. There are however discrepancies in the number of proof operations, as some of them depend on the exact values of the vectors. Across all sample, the maximum error was 3.1088\%,
and the mean of the per-execution maxium error was 0.1468\%. All These
results support the equation as an approximation of the counted work for
the tested  GPT-2/DeepProve configuration. Note that this function will need to be adjusted based and scaled based on the specific PoML deployment.

\section{Formal Treatment of the Backbone Reduction}
\label{app:reduction}

This appendix gives the detailed arguments for the three steps in Section~\ref{sec:reduction}: bounding total work, showing that each completed inference--proof pair yields one valid hash input, and representing the PoML lotteries by independent Bernoulli trials. Work is counted using the integer-valued complexity $C_{\theta,\Pi}$.

\subsection{The Reduction}
\label{subsec:reduction-idea}

We show that PoML inherits the Bitcoin-backbone guarantees of Garay, Kiayias, and Leonardos~\cite{GKL20}---\emph{common prefix}, \emph{chain quality}, and \emph{chain growth}---by satisfying the per-round success bounds used in their analysis.

The Bitcoin backbone analysis~\cite{GKL20} assumes that candidate blocks require a random-oracle hash evaluation and that each (honest or corrupted) party issues at most $q$ queries per round.\footnote{$q$ is a flat-model bound assigning every party the same per-round computation budget; heterogeneous miners are modelled as clusters of flat-model parties.} A query succeeds when its output is below public threshold $T$, independently with probability $T/2^\kappa$, where $\kappa$ is the hash output length. The three guarantees then follow from the properties of independent Bernoulli trials.

Our reduction places PoML in the same template: its lottery (hash evaluation) is equivalent to a $q$-bounded random oracle. A lottery requires honest inference of $M_\theta$ using the prescribed randomness, then hashing the resultant tuple. Define a PoML-oracle query as such hash evaluation $H(\cdot)$, successful when $H(\cdot)<D_1$ for a query with complexity 1.

Suppose each flat-model party has per-round budget $q_{\mathsf{ML}}$, so $\sum_i C_i\leq q_{\mathsf{ML}}$ where $C_i$ are the complexity of the queries. The lottery succeeds when its hash is below the complexity-dependent target $D_{C_i}$ defined in Section~\ref{sec:poml-system-def}, which we will show is equivalent to $C_i$ PoML-oracle queries.

To formalize the analogy, we rule out amortizing inference or proof cost across queries (Step~1) and extra lottery inputs from reordering or re-encrypting outputs (Step~2), then show that PoML is equivalent to at most $q_{\mathsf{ML}}$ independent Bernoulli trials of probability $D/2^\kappa$ (Step~3). Specifically, Step~1 bounds total work by $q_{\mathsf{ML}}$ (as $q$ bounds Bitcoin queries), Step~2 shows that each completed pair yields exactly one hash evaluation, and Step~3 represents the resulting PoML-oracle queries as at most $q_{\mathsf{ML}}$ independent Bernoulli trials.

\subsection{Computational Independence of NIZK Proof Systems}
\label{app:nizk-cia}

Before we present the reduction, since the total work budget includes both inference and generation of a corresponding ZKP, we must also establish a computational independence of the underlying proof system.

\begin{definition}[$\varepsilon_p$-computationally independent NIZK on a distribution $\mathcal{D}$]
\label{def:snark-cia}
Let $\Pi = (\setup, \prover, \verifier)$ be an NIZKPoK system (Definition~\ref{def: NIZK systems}) for an NP-language $L$ with relation $\relation{L}$, let $\mathcal{D} = \{\mathcal{D}_\kappa\}_{\kappa \in \NN}$ be an efficiently sampleable distribution over statement--witness pairs $(x,w) \in \relation{L}$, and let $\varepsilon_p \in [0,1)$. We say $\Pi$ is \emph{$\varepsilon_p$-computationally independent on $\mathcal{D}$} if, for every $\ppt$ adversary $\mathcal{A} = (\mathcal{A}_1,\mathcal{A}_2) = \{(\mathcal{A}_1,\mathcal{A}_2)_\kappa\}_{\kappa \in \NN}$, with $\mathsf{st}$ the state output by $\mathcal{A}_1$, such that the depth of $\mathcal{A}_2$ on each $(x,w)$ is bounded from above by $(1-\varepsilon_p) \cdot \mathsf{Depth}(\prover(\crs,x,w))$, there exists a negligible function $\mu(\cdot)$ such that for all $\kappa \in \NN$:
\begin{equation}
    \Pr\!\left[\, \verifier(\crs, x^*, \pi^*) = 1 \;:\;
    \begin{array}{l}
        \crs \leftarrow \setup(1^\kappa) \\[2pt]
        \mathsf{st} \leftarrow \mathcal{A}_1(\crs) \\[2pt]
        (x^*, w^*) \leftarrow \mathcal{D}_\kappa \\[2pt]
        \pi^* \leftarrow \mathcal{A}_2(\mathsf{st},\, x^*,\, w^*)
    \end{array}
    \right] \;\leq\; \mu(\kappa).
\end{equation}
\end{definition}

\subsubsection{The PoML Distribution $\mathcal{D}_{\text{PoML}}$}

We now specify the distribution under which we require the NIZK in PoML to be computationally independent. Recall that the PoML language $\mathcal{L}_{\text{PoML}}$ (Section~\ref{sec:protocol}) has statements
$
    \mathsf{stmt} =$\\$ (\mathsf{qid},\, h_u,\, \theta,\, \mathcal{R},\, \mathsf{ct},\, \omega,\, \mathsf{pk}_u,\, \mathsf{vk}^{\mathsf{VRF}}_{\mathsf{enc},m},\, r,\, \pi^{\mathsf{VRF}}_{\mathsf{enc}})
$
and witnesses $\mathsf{wit} = (x_\mathsf{u},\, y,\, r_{\mathsf{enc}})$, with $\omega=C_{\theta,\Pi}(x_\mathsf{u},\mathcal{R})$. In an honest execution, every component of $(\mathsf{stmt},\mathsf{wit})$ is determined by the public parameters, the user query, the miner's registered keys, and the seed $r=H(\mathsf{bind}\parallel h_u\parallel\mathsf{qid}\parallel\mathsf{vk}^{\mathsf{sig}}_m)$. This seed determines the inference randomness $\mathcal{R}$ and the encryption randomness $r_\mathsf{enc}$.

\begin{definition}[Distribution $\mathcal{D}_{\text{PoML}}$]
\label{def:poml-distribution}
Fix the public model and complexity function, a query $(\mathsf{pk}_u,\mathsf{qid},h_u,x_\mathsf{u})$ satisfying $h_u=H_{\mathsf{zk}}(x_\mathsf{u}\parallel\mathsf{qid})$, and a miner $m$ with registered key triple $\mathsf{VK}_m$ and corresponding secret keys.
The distribution $\mathcal{D}_{\text{PoML},\kappa}$ over statement--witness pairs of $\mathcal{L}_{\text{PoML}}$ is sampled as follows:
\begin{enumerate}[itemsep=0em,nosep]
    \item Sample $r \leftarrow \{0,1\}^\kappa$ uniformly, modelling the hash output at a fresh mining-seed input.
    \item For $t = 1, \ldots, T(x_\mathsf{u})$, compute $(z_t, \pi^{\mathsf{VRF}}_t) \leftarrow \mathsf{VRF.Eval}(\mathsf{sk}^{\mathsf{VRF}}_{\mathsf{inf},m},\, r \parallel t)$ and set $\mathcal{R} \leftarrow (z_1, \ldots, z_{T(x_\mathsf{u})})$.
    \item Compute $(r_{\mathsf{enc}}, \pi^{\mathsf{VRF}}_{\mathsf{enc}}) \leftarrow \mathsf{VRF.Eval}(\mathsf{sk}^{\mathsf{VRF}}_{\mathsf{enc},m},\, r \parallel \mathsf{qid})$.
    \item Compute $y \leftarrow M_\theta(x_\mathsf{u},\, \mathcal{R})$, $\omega \leftarrow C_{\theta,\Pi}(x_\mathsf{u},\, \mathcal{R})$, and $\mathsf{ct} \leftarrow \mathsf{Enc}(\mathsf{pk}_u,\, y \parallel \mathsf{qid};\, r_{\mathsf{enc}})$.
    \item Output $(\mathsf{stmt}, \mathsf{wit})$ with $\mathsf{stmt} = (\mathsf{qid}, h_u, \theta, \mathcal{R}, \mathsf{ct}, \omega, \mathsf{pk}_u, \mathsf{vk}^{\mathsf{VRF}}_{\mathsf{enc},m}, r, \pi^{\mathsf{VRF}}_{\mathsf{enc}})$ and $\mathsf{wit} = (x_\mathsf{u}, y, r_{\mathsf{enc}})$.
\end{enumerate}
\end{definition}

By construction $(\mathsf{stmt},\mathsf{wit}) \in \relation{\mathcal{L}_{\text{PoML}}}$, and the only source of randomness in the sampler is $r$, which simultaneously seeds the inference VRF (producing $\mathcal{R}$) and the encryption VRF (producing $r_{\mathsf{enc}}$). The induced distribution over statement--witness pairs is the distribution of a single mining inference--proof step at a fresh mining-seed hash input, for the fixed query and registered miner.

\begin{assumption}[Computational independence of Fiat--Shamir zk-SNARKs on $\mathcal{D}_{\text{PoML}}$]
\label{ass:snark-cia-poml}
Let $\Pi = (\setup, \prover, \verifier)$ be a non-interactive zk-SNARK satisfying Def.~\ref{def: NIZK systems} which applies the Fiat--Shamir transform, modelled as a random oracle. Let $M_\theta \in \mathcal{K}$ be a PoML-compatible model. Then $\Pi$ satisfies Definition~\ref{def:snark-cia} on the PoML statement--witness distribution $\mathcal{D}_{\text{PoML}}$ (Definition~\ref{def:poml-distribution}) with $\varepsilon_p = \mathsf{negl}(\kappa)$, where $\kappa$ is the Fiat--Shamir hash output length.
\end{assumption}

\paragraph{Justification.}
We give an informal argument supporting Assumption~\ref{ass:snark-cia-poml}. The Fiat--Shamir transform replaces each verifier message with a hash evaluation over the transcript so far, including the statement~$\mathsf{stmt}$~\cite{CiampiPSSV16,BernhardPW12}; modelling that hash as a random oracle, the challenges are unpredictable functions of~$\mathsf{stmt}$. By the structure of $\mathcal{D}_{\text{PoML}}$ (Definition~\ref{def:poml-distribution}), two independent draws yield, with overwhelming probability, statements that differ in
the randomness collection~$\mathcal{R}$, the seed~$r$, and the ciphertext~$\mathsf{ct}$.

The polynomial commitments---binding functions of the witness coefficients---and the Fiat--Shamir challenges are therefore both fresh and unrelated to any precomputation $\mathcal{A}_1$ may have performed. Saving more than a $\mathsf{negl}(\kappa)$-fraction of the prover's depth would amount to predicting a random-oracle output, contradicting the random-oracle assumption on the Fiat--Shamir hash.

\myparagraph{Remark}: We note that this is also an implicit assumption made implicitly by the paper \cite{pouw} defining PoUW using zk-SNARKs. In their paper, independence property is assumed when the inputs statements differ.

\subsection{Step~1: Total Units of Work Are $q_\mathsf{ML}$-Bounded}
\label{subsec:step1}
\label{app:proof-amort-resist}

The first step replaces the $q$-bound on random-oracle queries with the work spent producing inference outputs. Write $C_i=C_{\theta,\Pi}(x_{\mathsf{u},i},\mathcal{R}_i)$. A pair of complexity $C_i$ requires $C_i$ work units, while a flat-model party has at most $q_{\mathsf{ML}}$ per round.

Definition~\ref{def:model} requires $M_\theta$ to satisfy $\epsilon_i$-computational independence relative to a fresh evaluation of $(x,\mathcal{R})$, forbidding reuse across distinct inferences. Inference therefore costs at least $(1-\epsilon_i)$ of the honest work with no saved states.

We also assume \emph{$\varepsilon_p$-computational independence} for an NIZK proof system over a distribution $\mathcal{D}_\text{PoML}$ of statement--witness pairs: precomputation saves at most an $\varepsilon_p$ fraction of prover computation cost. The formalization is in App.~\ref{app:nizk-cia}.

\begin{theorem}[$q_{\mathsf{ML}}$-bounded work]
\label{thm:amort-resist}
Let $M_\theta\in\mathcal{K}$ be PoML-compatible (Definition~\ref{def:model}) with parameter $\varepsilon_i$. Let the proof system be a non-interactive zk-SNARK, and assume it is $\varepsilon_p$-computationally independent on $\mathcal{D}_{\text{PoML}}$ (Definitions~\ref{def:snark-cia} and~\ref{def:poml-distribution}). If $\max(\varepsilon_i,\varepsilon_p)<1/(q_{\mathsf{ML}}+1)$, then each flat-model party completes total complexity at most $q_{\mathsf{ML}}$ per round, except with negligible probability in $\kappa$.
\end{theorem}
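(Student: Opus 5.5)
We argue by contradiction, using a counting argument over the inference--proof pairs a single flat-model party completes in one round. Suppose some party completes pairs $(x_{\mathsf{u},1},\mathcal{R}_1),\ldots,(x_{\mathsf{u},k},\mathcal{R}_k)$ with $\sum_i C_i > q_{\mathsf{ML}}$ in one round, with non-negligible probability. Since complexities are integers with $C_i\geq 1$, we may assume $\sum_i C_i \geq q_{\mathsf{ML}}+1$. The budget is $q_{\mathsf{ML}}$ units, so on average each unit of complexity was produced with at most a $q_{\mathsf{ML}}/(q_{\mathsf{ML}}+1)$ fraction of its honest cost. The plan is to show that some pair must then have been produced with less than a $1-\max(\varepsilon_i,\varepsilon_p)$ fraction of its honest cost. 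That pair is then a success for an adversary forbidden by Definition~\ref{def:delta-reuse} or Definition~\ref{def:snark-cia}.

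\textbf{Step 1: per-pair saving.} Write the honest cost of pair $i$ as $C_i = I_i + P_i$, where $I_i$ is the inference part (including VRF and encryption) and $P_i$ is the proving part. Let $I'_i$ and $P'_i$ be the work the party actually spends on each part. Then
\[
\sum_i (I'_i+P'_i)\leq q_{\mathsf{ML}} \leq \tfrac{q_{\mathsf{ML}}}{q_{\mathsf{ML}}+1}\sum_i C_i .
\]
By averaging, there is an index $i$ and a component, say the inference one, with
\[
I'_i \leq \bigl(1-\tfrac{1}{q_{\mathsf{ML}}+1}\bigr) I_i < (1-\varepsilon_i)\, I_i ,
\]
and symmetrically with $\varepsilon_p$ for the proving component. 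This is where the hypothesis $\max(\varepsilon_i,\varepsilon_p)<1/(q_{\mathsf{ML}}+1)$ enters.

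\textbf{Step 2: reductions.} We turn this pair into an adversary $(\mathcal{A}_1,\mathcal{A}_2)$.
\begin{itemize}
\item $\mathcal{A}_1$ simulates the party's whole execution except the computation on pair $i$, including all work on the other pairs and any precomputation from earlier rounds. It outputs that as $\mathsf{st}$, together with $x^*=x_{\mathsf{u},i}$.
\item $\mathcal{A}_2$ receives the challenge randomness and runs the party's remaining work on pair $i$.
\end{itemize}
The fact that lets us embed the challenge: the mining seed $r_i$ is a random-oracle output on a fresh input $\mathsf{bind}_i\parallel h_{u,i}\parallel\mathsf{qid}_i\parallel\mathsf{vk}^{\mathsf{sig}}_m$. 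Here $\mathsf{bind}_i$ depends on $\mathcal{Y}_{i-1}$, so $r_i$ cannot be known before the previous pairs are finished. By VRF pseudorandomness and unbiasability, $\mathcal{R}_i$ is then indistinguishable from uniform, so the challenge $\mathcal{R}^*$ can be programmed in. The same holds for a fresh $\mathcal{D}_{\text{PoML}}$ sample in the proving case (Definition~\ref{def:poml-distribution}). Since $\mathcal{A}_2$ has depth below $(1-\varepsilon_i)\mathsf{Depth}_\theta$ and succeeds non-negligibly, this contradicts $\varepsilon_i$-computational independence. The proving case contradicts Assumption~\ref{ass:snark-cia-poml} in the same way. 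A union bound over the polynomially many pairs and rounds keeps the total failure probability negligible in $\kappa$.

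\textbf{Main obstacle: the mismatch between work and depth.} The budget $q_{\mathsf{ML}}$ counts work, but the definitions bound circuit depth. Parallel or interleaved computation across pairs could let total work exceed what any single pair's depth reveals. My first attempt would be to adopt the flat-model convention that complexity units and depth coincide, i.e.\ $C_{\theta,\Pi}$ is calibrated to the sequential honest depth. I would then charge each unit of the party's round budget to the pair whose random challenge it causally depends on. Work done before a pair's challenge is known is attributed to $\mathsf{st}$, which the definitions already allow to be arbitrary. The delicate point is the attribution of work shared between the inference and the proof of the same pair. I would handle it by applying the two independence properties sequentially: first argue that $y_i$ is computed freshly, then treat the resulting $(\mathsf{stmt}_i,\mathsf{wit}_i)$ as the $\mathcal{D}_{\text{PoML}}$ sample given to the prover.
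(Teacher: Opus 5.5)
Your argument is the paper's proof in contrapositive form. The paper first rules out cross-round reuse via the freshness of $G(s,\mathsf{tx})$. It then uses Definitions~\ref{def:delta-reuse} and~\ref{def:snark-cia} to lower-bound each completed pair's cost by $(1-\varepsilon^\star)C_iT_M$, sums to get $\sum_i C_i\le q_{\mathsf{ML}}/(1-\varepsilon^\star)<q_{\mathsf{ML}}+1$, and concludes by integrality; your averaging step is exactly this inequality inverted. It handles your work-versus-depth obstacle only implicitly, by identifying cost with $\mathsf{Depth}$ as your proposed convention does.

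One caution on the extra detail you add: $\mathcal{R}^*$ cannot be programmed into $\mathsf{VRF.Eval}(\mathsf{sk}^{\mathsf{VRF}}_{\mathsf{inf},m},r_i\parallel t)$. The party holds that key, so pseudorandomness does not apply and unbiasability only controls monotone predicates, and the output is fixed once $r_i$ is. Only $r_i$ can be programmed; that matches $\mathcal{D}_{\text{PoML}}$ but not the uniform $\mathcal{R}^*$ of Definition~\ref{def:delta-reuse}. So on the inference side you are left, like the paper, applying that definition directly to VRF-derived randomness.
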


\begin{proof}
Let $T_M$ denote the cost of one unit of work, and let the per-round computation budget of a flat-model party be $T_{\mathsf{round}}=q_{\mathsf{ML}}T_M$, so $q_{\mathsf{ML}}$ is its honest work-unit budget. For an admissible inference--proof pair $i$, write $C_i=C_{\theta,\Pi}(x_i,\mathcal{R}_i)$; the pair has cost $C_iT_M$ (Note that although miners are only required to use the inference result in the lottery, they are essentially committed to generating the proof before the block publishes).

\paragraph{Precomputation resistance.}
We first rule out carrying an exact completed pair across rounds. Observe that all inference randomness is ultimately seeded by $\mathsf{bind}_1 = G(s, \mathsf{tx})$, where $(s, \mathsf{tx})$ encodes the hash of the previous block and the current transaction set. Because $\mathsf{tx}$ is not fixed until the beginning of the round, the adversary cannot determine $\mathsf{bind}_1$, and hence any of $r_1, r_2, \ldots$, before the round starts. Any inference--proof pair computed in a prior round used a different seed $G(s', \mathsf{tx}')$ and will fail the binding check in the current round.

\paragraph{Inference cost.}
By the contrapositive of Definition~\ref{def:delta-reuse}, any PPT adversary that produces a correct inference output $M_\theta(x_\mathsf{u}, \mathcal{R})$ with non-negligible probability must have incurred cost at least $(1-\varepsilon_i) \cdot \mathsf{Depth}(M_\theta(x_\mathsf{u}, \mathcal{R}))$ on that evaluation; otherwise its success probability is negligible. Hence the adversary's inference cost is at least $(1-\varepsilon_i)$ times the reference inference component of its work, except with negligible probability.

\paragraph{Proof generation cost.}
By the contrapositive of Definition~\ref{def:snark-cia}, any $\ppt$ adversary that, on a fresh statement--witness pair $(x^*,w^*) \leftarrow \mathcal{D}_{\text{PoML}}$, outputs a verifying proof $\pi^*$ with non-negligible probability must incur cost at least $(1-\varepsilon_p) \cdot \mathsf{Depth}(\prover(\crs,x^*,w^*))$. Since the honest prover on inputs from $\mathcal{D}_{\text{PoML}}$ supplies the calibrated proof component of the pair's work, this is at least $(1-\varepsilon_p)$ times that component.

\paragraph{Combined bound.}
Set $\varepsilon^\star := \max(\varepsilon_i, \varepsilon_p)$. Our assumption requires completed pairs with total complexity $\sum_{i\in S} C_i$ jointly to cost at least $(1-\varepsilon^\star)\left(\sum_{i\in S} C_i\right)T_M$, yielding at most
\[
    \frac{q_{\mathsf{ML}}}{1-\varepsilon^\star}
\]
work units. Under $\varepsilon^\star < 1/(q_{\mathsf{ML}}+1)$, $q_{\mathsf{ML}}/(1-\varepsilon^\star)<q_{\mathsf{ML}}+1$, so the number of completed integer work units is at most $q_{\mathsf{ML}}$.
\end{proof}

Theorem~\ref{thm:amort-diff} establishes $\varepsilon_i = \mathsf{negl}(d)$ for DDPM (with $d$ the latent dimension), and Assumption~\ref{ass:snark-cia-poml} of Appendix~\ref{app:nizk-cia} posits $\varepsilon_p = \mathsf{negl}(\kappa)$ for any Fiat--Shamir-based zk-SNARK on $\mathcal{D}_{\text{PoML}}$ (with $\kappa$ the hash output length).

\subsection{Step~2: One Query Equates to One Hash Evaluation}
\label{subsec:step2}
\label{app:proof-bijection}

Here one query yields at most one hash evaluation, so an adversary cannot gain additional valid hash inputs.

\begin{lemma}[Uniqueness of Hash Evaluation Inputs]
\label{lem:bijection}
Suppose a party completes $n$ inference--proof pairs of total complexity at most $q_{\mathsf{ML}}$ per round, and all randomness is derived through an unbiasable VRF (\S\ref{def:vrf-app}). Then, each party obtains at most $n$ valid inputs to the hash function $H$.
\end{lemma}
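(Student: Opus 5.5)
The plan is to show that a valid lottery input $H(G(s,\mathsf{tx})\parallel\mathcal{Y}_{1..i})$ is essentially a deterministic function of the sequence of completed inference--proof pairs. Hence each completed pair can contribute at most one hash input, namely the one whose last entry is that pair. I would fix a round, the parent hash $s$, and the transaction list $\mathsf{tx}$. All randomness then descends from $\mathsf{bind}_1=G(s,\mathsf{tx})$, and for $i\ge 2$ from $\mathsf{bind}_i=H(G(s,\mathsf{tx})\parallel\mathcal{Y}_{i-1})$. The argument then traces which degrees of freedom remain to the adversary once the validity conditions of Definition~\ref{def:valid-block} are imposed.

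The first step is to show that, given $\mathsf{bind}_i$ and the selected query $Q_i$, the ciphertext $\mathsf{ct}_i$ is uniquely determined. The seed $r_i=H(\mathsf{bind}_i\parallel h_{u,i}\parallel\mathsf{qid}_i\parallel\mathsf{vk}^{\mathsf{sig}}_m)$ is fixed by the mining-seed binding check. Unique provability of the VRF then fixes each $z_{i,t}$ under the registered key $\mathsf{vk}^{\mathsf{VRF}}_{\mathsf{inf},m}$, and so fixes $\mathcal{R}_i$. Determinism of $M_\theta$ fixes $y_i$, and the committed hash $h_{u,i}$, via collision resistance of $H_{\mathsf{zk}}$, fixes $x_i$. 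Unique provability of the encryption VRF fixes $r_{\mathsf{enc},i}$. Finally, soundness and proof of knowledge of the NIZK force $\mathsf{ct}_i=\mathsf{Enc}(\mathsf{pk}_u,y_i\parallel\mathsf{qid}_i;r_{\mathsf{enc},i})$. This rules out re-encryption as a way to obtain fresh lottery inputs.

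Unbiasability (Definition~\ref{def:unbiasable-app}) is what prevents a miner from choosing among registered key pairs to steer outputs. Key registration is fixed before $\mathsf{bind}_1$ is known (as argued in the precomputation part of Theorem~\ref{thm:amort-resist}), so the outputs are distributed as uniform values.

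Next I would argue by induction along the chain that the prefix $\mathcal{Y}_{1..i}$ is fixed by the ordered choice of queries $(Q_1,\dots,Q_i)$. A hash input at position $i$ therefore requires a completed pair at position $i$ computed from seed $\mathsf{bind}_i$. Reordering or inserting a query at position $j$ changes $\mathcal{Y}_{j}$, and with it $\mathsf{bind}_{j+1}$ and every later seed. Each later pair is then a new inference with new randomness, and it must be paid for afresh. By Theorem~\ref{thm:amort-resist}, together with the computational independence of $M_\theta$ and of the NIZK, such a pair cannot be obtained from previously computed ones except with negligible probability. The result is an injective map from valid hash inputs to completed pairs, each input sent to its final pair, which gives at most $n$ inputs.

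The main obstacle is the encryption and binding step. The hash is taken over ciphertexts rather than plaintexts, so I must make sure that no alternative valid $\mathsf{ct}$ passes verification. This depends on the NIZK statement pinning $r_{\mathsf{enc}}$ through a VRF proof that is itself bound to $r_i$. I would handle it by extracting a witness with the proof-of-knowledge extractor and reducing any second ciphertext to a violation of unique provability or of soundness. Random-oracle collisions in $H$ and $G$ are absorbed into the negligible failure probability.
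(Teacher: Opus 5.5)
Your proposal is correct and takes essentially the same route as the paper's proof. Both fix $\mathsf{ct}_i$ per position via the seed binding, VRF unique provability, and determinism of $M_\theta$ and of $\mathsf{Enc}$, with the NIZK ruling out forged ciphertexts; both then argue prefix uniqueness through the hash chain $\mathsf{bind}_i=H(G(s,\mathsf{tx})\parallel\mathcal{Y}_{i-1})$, so each completed pair ends exactly one lottery input. Your extra details (collision resistance of $H_{\mathsf{zk}}$ to pin $x_i$, proof-of-knowledge extraction for the ciphertext) tighten steps the paper states informally, while your appeals to unbiasability and to Theorem~\ref{thm:amort-resist} are not needed for the count itself.
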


\begin{proof}
\red{Fix a miner with registered key triple $\mathsf{VK}_m=(\mathsf{vk}^{\mathsf{sig}}_m,\mathsf{vk}^{\mathsf{VRF}}_{\mathsf{inf},m},\mathsf{vk}^{\mathsf{VRF}}_{\mathsf{enc},m})$ and block header $(s,\mathsf{tx})$. We first establish two uniqueness properties.}

\paragraph{Inference uniqueness.}
Fix a user query $(x_{\mathsf{u},j}, h_{u,j}, \mathsf{qid}_j)$ and a position index~$i$. The ZK proof attached to each query prevents any forging of $\mathsf{ct}_i$. For an honestly computed inference, the seed
{
\[
    r_i \;=\; H(\mathsf{bind}_i \,\|\, h_{u,j} \,\|\, \mathsf{qid}_j \,\|\, \mathsf{vk}^{\mathsf{sig}}_m)
\]
}
is deterministic in its inputs. Registration and Definition~\ref{def:valid-block} require the same $\mathsf{VK}_m$ at every position. Unique provability under $\mathsf{vk}^{\mathsf{VRF}}_{\mathsf{inf},m}$ therefore forces a single noise schedule $\mathcal{R}_i=(z_{i,1},\ldots,z_{i,T(x_{\mathsf{u},j})})$ for which the published inference transcript can pass $\mathsf{VRF.Vfy}$; any alternative $\mathcal{R}_i'\neq\mathcal{R}_i$ would contradict unique provability. Since $M_\theta$ is deterministic once $(x_\mathsf{u},\mathcal{R})$ is fixed (Section~\ref{subsec:model-class}), $y_i$ is uniquely determined. The encryption VRF under $\mathsf{vk}^{\mathsf{VRF}}_{\mathsf{enc},m}$ likewise fixes $r_{\mathsf{enc},i}$ on input $r_i\parallel\mathsf{qid}_j$; hence determinism of encryption fixes $\mathsf{ct}_i$.

\paragraph{Prefix uniqueness of $\mathcal{Y}_i$ for fixed final element.}
\red{For $i \geq 2$, the binding $\mathsf{bind}_i = H(\cdot\parallel\mathcal{Y}_{i-1})$ ties $\mathsf{ct}_i$ to its predecessor inference chain. By collision resistance of $H$, $\mathsf{bind}_i$ uniquely identifies $(\mathsf{ct}_1, \ldots, \mathsf{ct}_{i-1})$ up to negligible probability. This shows that any given valid $\mathsf{ct}_i$ is bound to its entire predecessor chain $(\mathsf{ct}_1, \ldots, \mathsf{ct}_{i-1})$. Definition~\ref{def:valid-block} additionally rejects any purported prefix that changes $\mathsf{VK}_m$ between positions. Hence a completed $\mathsf{ct}_i$ can be appended to exactly one valid prefix, yielding a single lottery input $\mathcal{Y}_i = (\mathsf{ct}_1, \ldots, \mathsf{ct}_i)$.}

For a set $S$ of $n$ completed pairs, prefix uniqueness gives one lottery input $G(s,\mathsf{tx})\parallel\mathcal{Y}_i$ for each output $i$. Repeated evaluations of the same input do not yield different results. The party therefore obtains at most $n$ valid hash inputs.

\end{proof}

\subsection{Step~3: PoML Lotteries Are $q_\mathsf{ML}$-Bounded Independent Bernoulli Trials}
\label{subsec:step3}
\label{app:proof-lottery-equiv}

Combining Steps 1 and 2, each party submits evaluations $H(G(s,\mathsf{tx})\parallel\mathcal{Y}_i)$, one per inference--proof pair, with total work at most $q_\mathsf{ML}$. This gives the Bitcoin $q$-bounded random-oracle structure.

\begin{lemma}[PoML Lotteries Are Independent Bernoulli Trials]
\label{lem:lottery-equiv}
Under the random-oracle model for $H$, each party gets at most $q_\mathsf{ML}$ queries to the PoML-oracle, where each query to the PoML-oracle is an independent Bernoulli trial with success probability $D/2^\kappa$. Hence, the PoML lottery mechanism is equivalent to at most $q_\mathsf{ML}$ independent Bernoulli($D/2^\kappa$) trials per round.
\end{lemma}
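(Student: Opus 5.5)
We combine Theorem~\ref{thm:amort-resist} and Lemma~\ref{lem:bijection}, and then analyse the complexity-weighted threshold $D_C$. The plan has three parts: count the queries, establish freshness and independence of each hash input, and show that a lottery of complexity $C$ is equivalent to $C$ unit-complexity trials.

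\textbf{Counting.} Fix a round and a flat-model party. By Theorem~\ref{thm:amort-resist}, except with negligible probability in $\kappa$, the party completes inference--proof pairs with total complexity $\sum_i C_i\le q_{\mathsf{ML}}$. By Lemma~\ref{lem:bijection}, these $n$ pairs yield at most $n$ distinct valid inputs $G(s,\mathsf{tx})\parallel\mathcal{Y}_i$ to $H$. We define the PoML-oracle cost of the lottery on $\mathcal{Y}_i$ to be $C_i$ unit queries. The total number of PoML-oracle queries is then at most $\sum_i C_i\le q_{\mathsf{ML}}$.

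\textbf{Freshness and independence.} Each valid input $G(s,\mathsf{tx})\parallel\mathcal{Y}_i$ is distinct across positions and across parties. The prefix $\mathcal{Y}_i$ differs by position, and the seeds $r_i$ include $\mathsf{vk}^{\mathsf{sig}}_m$, so different parties produce different ciphertexts except with negligible probability. Encryption randomness comes from the unbiasable VRF, so the adversary cannot bias these inputs. Hence, in the random-oracle model, each $H$-output on a valid input is a fresh uniform $\kappa$-bit string, independent of all other queries and of the adversary's view when it chose the query. Repeated queries on the same input return the same value, and Lemma~\ref{lem:bijection} already rules these out as extra trials.

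\textbf{Splitting a weighted lottery.} The probability that $H(\cdot)<D_{C_i}$ is $D_{C_i}/2^\kappa=\lfloor 2^\kappa\tau_D(C_i)\rfloor/2^\kappa$. This equals $1-(1-D/2^\kappa)^{C_i}$ up to an additive error of at most $2^{-\kappa}$. The main obstacle is to make this an exact equivalence rather than an approximation. We would couple the single uniform output $u=H(\cdot)$ with $C_i$ independent Bernoulli$(D/2^\kappa)$ trials. Concretely, partition $\{0,1\}^\kappa$ so that the event $u<D_{C_i}$ is exactly the event that at least one of the $C_i$ virtual trials succeeds. Conditioned on this, only the indicator ``some trial succeeded'' matters to the backbone analysis; GKL only uses per-round success indicators and their counts.

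Two cases must then be checked. Success never exceeds that of $C_i$ independent trials, which is what the adversarial bounds need. The honest success probability is at least that of $C_i$ trials up to a $2^{-\kappa}$ slack per lottery, which is absorbed into the negligible terms of the GKL typical-execution argument. Summing over at most $q_{\mathsf{ML}}$ unit queries gives the claim, with the negligible failure events of Theorem~\ref{thm:amort-resist} and Lemma~\ref{lem:bijection} added through a union bound.
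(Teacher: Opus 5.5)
Your proposal is correct and follows essentially the same route as the paper: Theorem~\ref{thm:amort-resist} and Lemma~\ref{lem:bijection} give distinct lottery inputs of total complexity at most $q_{\mathsf{ML}}$, the random oracle makes their outputs uniform and independent, and a complexity-$C_i$ lottery succeeds with probability $\tau_D(C_i)=1-(1-D/2^\kappa)^{C_i}$ up to the $2^{-\kappa}$ rounding in $D_{C_i}$, which is exactly the OR of $C_i$ Bernoulli$(D/2^\kappa)$ trials. Your one-sided treatment of the floor (adversarial success dominated, honest slack negligible) is a slightly more careful version of the paper's ``up to negligible error $2^{-\kappa}$''; note, however, that no exact partition of $\{0,1\}^\kappa$ can realize $C_i$ independent trials, so the coupling needs auxiliary randomness, which your final paragraph in effect already supplies.
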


\begin{proof}
Because $H$ is a random oracle and the inputs are distinct, these evaluations are uniform and independent. Let $p=D/2^\kappa=\tau_D(1)$. A PoML-oracle query corresponds to a Bernoulli$(p)$ trial, the lottery of an inference--proof pair of complexity $1$. We first show that one completed inference-proof pair of complexity $C_i$ is equivalent to $C_i$ oracle queries.

Note that for $C_i$ independent Bernoulli$(p)$ trials $B_{i,1},\ldots,B_{i,C_i}$,
\[
  \Pr\!\left[\bigcup_{j=1}^{C_i}\{B_{i,j}=1\}\right]
  =1-(1-p)^{C_i}=\tau_D(C_i).
\]

Thus, each PoML hash evaluation with complexity $C_i$ has the same distribution as $\mathbf{1}\{\sum_{j=1}^{C_i}B_{i,j}>0\}$, equivalent to running the PoML oracle $C_i$ times, up to negligible error $2^{-\kappa}$. Since total complexity is at most $q_\mathsf{ML}$, the PoML lottery is therefore equivalent to at most $q_\mathsf{ML}$ independent Bernoulli($D/2^\kappa$) trials per round.
\end{proof}

\subsection{Typicality of PoML Executions}
\label{app:reduction-typicality}

The preceding steps establish the per-round Bernoulli representation. We now verify that this representation satisfies the typicality conditions used in the backbone security proof~\cite{GKL20}. Let $X_r$ indicate that at least one honest party succeeds in round $r$, and let $Y_r$ indicate that exactly one honest party succeeds. Let $Z_r$ count successful adversarial Bernoulli trials. Under the honest-work condition used below---each honest party either wins or completes $q_{\mathsf{ML}}$ units of work in each round---the backbone bounds give
\begin{align*}
  \mathbb{E}[X_r]&=f=1-(1-p)^{(n-t)q_{\mathsf{ML}}},\\
  \mathbb{E}[Y_r]&\geq f(1-f),\\
  \mathbb{E}[Z_r]&=tq_{\mathsf{ML}}p\leq\frac{t}{n-t}\frac{f}{1-f}.
\end{align*}
Write $X(S)=\sum_{r\in S}X_r$ and analogously for $Y(S)$ and $Z(S)$. The following typicality event is defined on this representation.

\begin{definition}[$(\varepsilon,\lambda)$-typical execution~{\cite[Def.~9]{GKL20}}]
\label{def:typical-app}
An execution is \emph{$(\varepsilon,\lambda)$-typical} for $\varepsilon \in (0,1)$ and integer $\lambda \geq 2/f$ if, for every set $S$ of at least $\lambda$ consecutive rounds:
\begin{enumerate}[label=(\alph*),itemsep=0.2em]
    \item $(1-\varepsilon)\mathbb{E}[X(S)] < X(S) < (1+\varepsilon)\mathbb{E}[X(S)]$ \;\text{and}\; $(1-\varepsilon)\mathbb{E}[Y(S)] < Y(S)$;
    \item $Z(S) < \mathbb{E}[Z(S)] + \varepsilon\,\mathbb{E}[X(S)]$;
    \item no \emph{insertions}, \emph{copies}, or \emph{predictions} occurred.
\end{enumerate}
\end{definition}

By Theorem~10 of~\cite{GKL20}, an execution is $(\varepsilon,\lambda)$-typical except with probability $e^{-\Omega(\varepsilon^2 f \lambda + \kappa - \log L)}$.

The resulting concentration statement for PoML is as follows.

\begin{lemma}[Typicality under PoML]
\label{lem:typicality}
PoML executions are $(\varepsilon, \lambda)$-typical with probability at least $1 - e^{-\Omega(\varepsilon^2 f \lambda \,+\, \kappa \,-\, \log L)}$.
\end{lemma}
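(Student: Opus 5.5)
The plan is to reduce Lemma~\ref{lem:typicality} to Theorem~10 of~\cite{GKL20}. That theorem only uses three facts about the execution. First, the per-round success indicators are built from independent Bernoulli trials with the stated means. Second, each party has a fixed per-round budget. Third, there are the hash-collision style bad events behind insertions, copies and predictions. So we will first check that the random variables $X_r,Y_r,Z_r$ defined for PoML have exactly the distributional structure assumed there, and then check that condition~(c) fails only with the same $e^{-\Omega(\kappa)}$ probability.

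\textbf{Step 1: conditions (a) and (b).} By Theorem~\ref{thm:amort-resist}, Lemma~\ref{lem:bijection} and Lemma~\ref{lem:lottery-equiv}, each party contributes per round at most $q_{\mathsf{ML}}$ independent Bernoulli$(p)$ trials. Apart from a negligible failure event in $\kappa$, which we add to the final error term, these trials are realised as fresh random-oracle evaluations $H(G(s,\mathsf{tx})\parallel\mathcal{Y}_i)$ on distinct inputs. Under the honest-work condition, each honest party performs exactly $q_{\mathsf{ML}}$ trials or wins, so $X_r$ and $Y_r$ are functions of $(n-t)q_{\mathsf{ML}}$ i.i.d.\ trials of round $r$. $Z_r$ is dominated by a sum of at most $tq_{\mathsf{ML}}$ such trials. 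Trials in different rounds use distinct oracle inputs, because $\mathsf{bind}_1=G(s,\mathsf{tx})$ changes across rounds, so the rounds are independent. We then apply the Chernoff bounds over any $\lambda$ consecutive rounds exactly as in~\cite{GKL20}, followed by a union bound over the at most $L^2$ windows. This gives failure probability $e^{-\Omega(\varepsilon^2 f\lambda-\log L)}$ for (a) and (b). For (b) we use $\mathbb{E}[Z_r]\le \frac{t}{n-t}\frac{f}{1-f}$ as stated.

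\textbf{Step 2: condition (c).} Insertions and copies reduce to collisions of $H$ and $G$ in the random-oracle model. The chain binding $s=H(B_{\mathsf{prev}})$, together with collision resistance of $H$ (already used in Lemma~\ref{lem:bijection}), gives probability $\mathrm{poly}(L)\cdot 2^{-\kappa}$. Predictions require a block extending a chain whose tip hash the adversary had not queried. Here we must argue that the seeds $r_i$ and the randomness $\mathcal{R}_i$ depend on $s$ through $H$, and that the VRF outputs are unbiasable. Then no valid lottery input can be prepared before $s$ is known, except by guessing an oracle output.

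\textbf{Main obstacle.} The hardest part is the prediction and insertion case of Step~2. This is where PoML departs from Bitcoin: a block now also carries ZK proofs, VRF transcripts and ciphertexts, and these must not create extra lottery inputs or allow precomputation. I would reuse the precomputation-resistance paragraph of Theorem~\ref{thm:amort-resist} and the uniqueness argument of Lemma~\ref{lem:bijection}, so that every valid block corresponds one-to-one to an oracle query made after its parent's hash became known. With that in place, summing the failure probabilities gives $e^{-\Omega(\varepsilon^2 f\lambda+\kappa-\log L)}$.
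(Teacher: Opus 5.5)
Your proposal is correct and follows the paper's proof. Lemma~\ref{lem:lottery-equiv}, resting on Theorem~\ref{thm:amort-resist} and Lemma~\ref{lem:bijection}, puts each party in the backbone form of at most $q_{\mathsf{ML}}$ independent Bernoulli$(D/2^\kappa)$ trials per round, after which the proof of Theorem~10 of~\cite{GKL20} (Chernoff plus a union bound over windows for (a)--(b), random-oracle collision/prediction bounds for (c)) applies verbatim with $q=q_{\mathsf{ML}}$, $T=D$; you simply spell out more of it than the paper does. One small correction: Algorithm~\ref{alg:poml-main} freezes $\mathsf{tx}$ for the whole mining attempt, so $\mathsf{bind}_1=G(s,\mathsf{tx})$ need not change between rounds, and freshness of the lottery inputs across rounds comes instead from the growing prefix $\mathcal{Y}_i$ (as in Lemma~\ref{lem:bijection}), which still gives the independence you need.
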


\begin{proof}
By Lemma~\ref{lem:lottery-equiv}, each party performs lottery evaluations per round equivalent to at most $q_\mathsf{ML}$ independent Bernoulli trials with success probability $D/2^\kappa$. This is exactly the distributional structure assumed of the backbone's per-round oracle, with parameters $q = q_{\mathsf{ML}}$ and $T = D$. Therefore the random variables $X_r, Y_r, Z_r$ in PoML satisfy the same independence and distributional properties as in the backbone, and the proof of Theorem~10 in~\cite{GKL20}---Chernoff bounds for parts~(a) and~(b) of Definition~\ref{def:typical-app}, and standard collision/prediction bounds for part~(c)---applies verbatim with $q = q_{\mathsf{ML}}$ and $T = D$.
\end{proof}

The next lemma collects the backbone bounds needed below. We reproduce them
from~\cite{GKL20} for self-containment, with the Bernoulli trials represented
by baseline parameters $q = q_{\mathsf{ML}}$, $T = D$, and $\delta = \gamma$.

\begin{lemma}[\cite{GKL20}, Lemma~11, restated]
\label{lem:gkl-lemma11-app}
For any set $S$ of at least $\lambda$ consecutive rounds in a typical execution:
\begin{enumerate}
    \item[(a)] $(1 - \varepsilon)f|S| < X(S) < (1 + \varepsilon)f|S|$ \;and\; $\bigl(1 - \tfrac{\gamma}{3}\bigr)f|S| < (1 - \varepsilon)f(1 - f)|S| < Y(S)$.
    \item[(b)] $Z(S) < \tfrac{t}{n-t}\cdot\tfrac{f}{1-f}\cdot|S| + \varepsilon f|S| < \bigl(1 - \tfrac{2\gamma}{3}\bigr)f|S|$.
    \item[(c)] $Z(S) < \bigl(1 + \tfrac{\gamma}{2}\bigr)\cdot\tfrac{t}{n-t}\cdot X(S) + \varepsilon f|S| < (1 - \tfrac{\gamma}{2})X(S)$ \;and\; $Z(S) < Y(S)$.
\end{enumerate}
\end{lemma}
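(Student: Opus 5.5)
The plan is to prove Lemma~\ref{lem:gkl-lemma11-app} as a purely arithmetic consequence of typicality (Definition~\ref{def:typical-app}) and the expectation bounds for $X_r,Y_r,Z_r$ stated just before it. No new probabilistic argument is needed: Lemma~\ref{lem:typicality} already places us in a typical execution. The backbone's honest-majority condition, in the form $t\le(1-\delta)(n-t)$ with $\delta=\gamma$, together with the parameter constraint $3f+3\varepsilon<\gamma$ (quoted in Section~\ref{sec:scalability-discussion}), supplies all the slack. Throughout I use $\mathbb{E}[X(S)]=f|S|$, $\mathbb{E}[Y(S)]\ge f(1-f)|S|$ and $\mathbb{E}[Z(S)]\le\frac{t}{n-t}\frac{f}{1-f}|S|$, obtained by summing the per-round bounds over the $|S|$ rounds.

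For part (a), the two outer inequalities on $X(S)$ are typicality (a) with $\mathbb{E}[X(S)]=f|S|$ substituted. The inequality $(1-\varepsilon)f(1-f)|S|<Y(S)$ is typicality (a) for $Y$ combined with the lower bound on $\mathbb{E}[Y(S)]$. The remaining inequality $(1-\gamma/3)f<(1-\varepsilon)(1-f)f$ reduces to $(1-\varepsilon)(1-f)\ge 1-\varepsilon-f>1-\gamma/3$, which follows from $\varepsilon+f<\gamma/3$.

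For part (b), typicality (b) gives $Z(S)<\mathbb{E}[Z(S)]+\varepsilon f|S|$, and substituting the bound on $\mathbb{E}[Z(S)]$ yields the first inequality. For the second, I bound $\frac{t}{n-t}\le 1-\gamma$ and $\frac{1}{1-f}\le 1+2f$ (valid for small $f$). This reduces the claim to $(1-\gamma)(1+2f)+\varepsilon<1-2\gamma/3$, which is $2f-2\gamma f+\varepsilon<\gamma/3$ and follows from $3f+3\varepsilon<\gamma$.

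For part (c), I rewrite $f|S|$ in terms of $X(S)$ using part (a): $f|S|<X(S)/(1-\varepsilon)$. Plugging this into (b) gives $Z(S)<\frac{t}{n-t}\frac{1}{(1-f)(1-\varepsilon)}X(S)+\varepsilon f|S|$. I then absorb $\frac{1}{(1-f)(1-\varepsilon)}\le 1+\gamma/2$ using $f+\varepsilon<\gamma/3$ and $\gamma$ small. For the final comparison with $(1-\gamma/2)X(S)$, I use $\frac{t}{n-t}\le 1-\gamma$ and $\varepsilon f|S|<\frac{\varepsilon}{1-\varepsilon}X(S)$, reducing to $(1+\gamma/2)(1-\gamma)+2\varepsilon<1-\gamma/2$, which again holds under the constraint. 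The claim $Z(S)<Y(S)$ follows by chaining $Z(S)<(1-2\gamma/3)f|S|<(1-\gamma/3)f|S|<Y(S)$ from (a) and (b).

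The main obstacle is bookkeeping: the constants in the displayed chain must actually follow from the exact conditions $3f+3\varepsilon<\gamma$ and $t\le(1-\gamma)(n-t)$, with the approximations $1/(1-f)\le1+2f$ and $(1+\gamma/2)(1-\gamma)\le 1-\gamma/2$ justified for $\gamma\le 1$ and small $f$. If a constant is off, I will follow~\cite{GKL20} and keep $\delta$ as the honest-majority slack, setting $\delta=\gamma$ only at the end.
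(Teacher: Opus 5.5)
The paper gives no proof of its own here; it restates Lemma~11 of \cite{GKL20} for self-containment. Deriving the lemma arithmetically from typicality, $3f+3\varepsilon<\gamma$ and $t\le(1-\gamma)(n-t)$, as you propose, is the right approach. Part (a) and the first inequalities in (b) and (c) are fine. Your absorption $\frac{1}{(1-f)(1-\varepsilon)}<\frac{1}{1-\gamma/3}\le 1+\tfrac{\gamma}{2}$ holds for every $\gamma\le 1$, so no smallness of $\gamma$ is needed. The chain giving $Z(S)<Y(S)$ is also fine once (b) is repaired.

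The second inequality in (b) does not go through as written. After $\frac{1}{1-f}\le 1+2f$, your condition $2f-2\gamma f+\varepsilon<\gamma/3$ is \emph{not} implied by $3f+3\varepsilon<\gamma$ when $\gamma<1/2$. For example, $\gamma=0.46$, $f=0.1$, $\varepsilon=0.05$ (essentially the paper's own instantiation) satisfies the hypothesis but gives $0.158>0.153$. The estimate is too lossy. Use instead $\frac{1-\gamma}{1-f}\le 1-\gamma+f$, which is valid because $f\le\gamma$: multiplying out, it reduces to $0\le f(\gamma-f)$. The bound is then at most $(1-\gamma+f+\varepsilon)f|S|<(1-\tfrac{2\gamma}{3})f|S|$.

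The second inequality of (c) is a genuine gap. Your reduced condition $(1+\tfrac{\gamma}{2})(1-\gamma)+2\varepsilon<1-\tfrac{\gamma}{2}$ is equivalent to $\varepsilon<\gamma^2/4$, whereas $3f+3\varepsilon<\gamma$ only gives $\varepsilon<\gamma/3$. For example, $\gamma=0.3$, $\varepsilon=0.09$, $f=0.001$ satisfies the hypothesis and violates your condition. This is not an artefact of your estimates. When $t/(n-t)=1-\gamma$, the quantity $(1-\tfrac{\gamma}{2})X(S)$ minus the middle term equals $\tfrac{\gamma^2}{2}X(S)-\varepsilon f|S|$. Typicality allows $X(S)$ just above $(1-\varepsilon)f|S|$, so the inequality fails whenever $\varepsilon/(1-\varepsilon)>\gamma^2/2$. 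Hence it cannot be derived from the stated conditions. You need an extra hypothesis such as $\varepsilon/(1-\varepsilon)<\gamma^2/2$, which the paper's numbers $\varepsilon=0.05$, $\gamma=0.47$ do satisfy. Keeping $\delta$ as slack and setting $\delta=\gamma$ late, as in your fallback, does not help, because the obstruction is the relation between $\varepsilon$ and $\gamma^2$.

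The damage is contained. The chain-quality parameter $\mu$ in Theorem~\ref{thm:main-reduction-app} only uses the first inequality of (c) together with $\varepsilon f|S|<\frac{\varepsilon}{1-\varepsilon}X(S)$. And $Z(S)<Y(S)$ follows from (a) and (b). So the missing piece is confined to that one comparison.
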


\subsection{Main Theorem: PoML Guarantees}
\label{subsec:main-thm}

Combining the work bound, the uniqueness lemma, the Bernoulli representation,
and the typicality result above gives the main reduction. Here $n$ is the number
of miners, $t$ the number corrupted, $f$ the probability that an honest party
wins in a round, $\gamma$ the honest-majority margin defined by
$t/(n-t)\leq1-\gamma$, and $\varepsilon\in(0,1)$ a concentration parameter
bounding deviation from the expected number of successful honest rounds.

\begin{theorem}[PoML inherits backbone guarantees]
\label{thm:main-reduction}
Consider the PoML protocol (Alg.~\ref{alg:poml-main}) with $n$ flat-model parties, at most $t$ corrupted, security parameter $\kappa$, and difficulty $D$. Suppose:
\begin{itemize}[itemsep=0em,nosep]
    \item the model is PoML-compatible (Definition~\ref{def:model}) and the proof system is a non-interactive zk-SNARK that is $\varepsilon_p$-computationally independent on $\mathcal{D}_{\text{PoML}}$ (Definitions~\ref{def:snark-cia} and~\ref{def:poml-distribution}), with $\max(\varepsilon_i,\varepsilon_p)<1/(q_{\mathsf{ML}}+1)$;
    \item a corresponding, easily computable complexity function $C_{\theta,\Pi}$ accurately reflects the unit of work for each input;
    \item there is always enough queries for miners to work on (thus filling their $q_\mathsf{ML}$ work budget per round);
    \item difficulty $D$ gives per-round success probability $f$ satisfying $t\leq(1-\gamma)(n-t)$ and $3f+3\varepsilon<\gamma$ for some $\varepsilon,\gamma\in(0,1)$.
\end{itemize}
Then, except with probability negligible in $\kappa$, the PoML protocol satisfies the \emph{common prefix}, \emph{chain quality} and \emph{chain growth} properties of~\cite{GKL20}, with the same parameter formulas as for Bitcoin under the substitutions $q \mapsto q_{\mathsf{ML}}$ and $T \mapsto D$. The corresponding explicit parameter instantiation follows in the next subsection.
\end{theorem}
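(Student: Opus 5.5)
The plan is to reduce Theorem~\ref{thm:main-reduction} to the backbone analysis of~\cite{GKL20}. We exhibit, for every PoML execution, a coupled execution in the $q$-bounded flat random-oracle model with $q=q_{\mathsf{ML}}$ and threshold $T=D$. The coupling preserves, up to negligible error, the per-round success random variables $X_r,Y_r,Z_r$ and the chain structure. Once this is established, the common prefix, chain quality and chain growth proofs of~\cite{GKL20} use the protocol only through these variables and the typicality event. They therefore transfer verbatim.

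\textbf{Step 1: the per-round Bernoulli representation.} First, Theorem~\ref{thm:amort-resist} applies, using the first hypothesis $\max(\varepsilon_i,\varepsilon_p)<1/(q_{\mathsf{ML}}+1)$. It shows that every flat-model party, honest or corrupted, completes inference--proof pairs of total complexity at most $q_{\mathsf{ML}}$ per round, except with negligible probability. The second hypothesis, that $C_{\theta,\Pi}$ reflects the true work, is what makes ``complexity'' coincide with ``work units'' there. Lemma~\ref{lem:bijection} then shows that these $n$ pairs yield at most $n$ distinct lottery inputs $G(s,\mathsf{tx})\parallel\mathcal{Y}_i$. Lemma~\ref{lem:lottery-equiv} converts each input of complexity $C_i$, tested against $D_{C_i}$, into $C_i$ independent Bernoulli$(D/2^\kappa)$ trials. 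The total is at most $q_{\mathsf{ML}}$ trials per party per round, up to the $2^{-\kappa}$ rounding error from $D_C=\lfloor 2^\kappa\tau_D(C)\rfloor$. A union bound over the polynomially many parties and rounds keeps the total failure probability negligible in $\kappa$.

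\textbf{Step 2: the honest rate and typicality.} The third hypothesis, sufficient query availability, guarantees that every honest party actually spends its full budget. Each honest party thus performs exactly $q_{\mathsf{ML}}$ trials per round unless it wins. This gives $\mathbb{E}[X_r]=f$ and $\mathbb{E}[Y_r]\ge f(1-f)$, together with the adversarial bound on $\mathbb{E}[Z_r]$ from Appendix~\ref{app:reduction-typicality}. Lemma~\ref{lem:typicality} then makes the execution $(\varepsilon,\lambda)$-typical with probability $1-e^{-\Omega(\varepsilon^2 f\lambda+\kappa-\log L)}$. Under the fourth hypothesis, $t\le(1-\gamma)(n-t)$ and $3f+3\varepsilon<\gamma$, Lemma~\ref{lem:gkl-lemma11-app} supplies the inequalities (a)--(c). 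These are exactly the inputs to the chain growth lemma, the chain quality theorem and the common prefix theorem of~\cite{GKL20}. We invoke those results with $q\mapsto q_{\mathsf{ML}}$ and $T\mapsto D$.

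\textbf{Step 3: the main obstacle.} The hardest step is checking part (c) of Definition~\ref{def:typical-app} (no insertions, copies or predictions), together with the claim that the chain-level arguments of~\cite{GKL20} use nothing beyond the hash structure. In PoML a block's validity involves more than one hash query: VRF proofs, NIZK proofs, and the binding chain $\mathsf{bind}_i$. We must argue three things.
\begin{itemize}
\item \emph{Predictions:} the adversary cannot extend a block before its parent exists. Every $r_i$ depends on $G(s,\mathsf{tx})$ with $s=H(B_{\mathsf{prev}})$, and the random oracle makes these unpredictable.
\item \emph{Insertions and copies:} these reduce to collision resistance of $H$ and $G$, as in the prefix-uniqueness argument of Lemma~\ref{lem:bijection}.
\item \emph{Forged proofs:} a valid-looking block with an incorrect inference must be excluded. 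This follows from NIZK soundness and proof of knowledge, together with VRF unique provability.
\end{itemize}
Each of these failure events is negligible in $\kappa$. Adding them to the typicality failure probability yields the theorem.
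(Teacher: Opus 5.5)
Your proposal is correct and takes essentially the same route as the paper. Theorem~\ref{thm:amort-resist} and Lemma~\ref{lem:bijection} bound each party to at most $q_{\mathsf{ML}}$ tickets per round, Lemma~\ref{lem:lottery-equiv} turns these into independent Bernoulli$(D/2^\kappa)$ trials, and Lemma~\ref{lem:typicality} with the restated Lemma~11 then lets Theorems~12, 15 and~16 of~\cite{GKL20} go through under $q\mapsto q_{\mathsf{ML}}$, $T\mapsto D$. Your additions (the explicit union bound, using query availability for the honest rate, and the Step~3 checks on predictions, insertions/copies and forged proofs) only spell out what the paper leaves to Lemma~\ref{lem:typicality}'s ``applies verbatim'' and to Lemma~\ref{lem:bijection}'s appeal to the ZK proofs.
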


\begin{proof}
By Theorem~\ref{thm:amort-resist} and Lemma~\ref{lem:bijection}, each party obtains at most $q_{\mathsf{ML}}$ virtual tickets per round. Lemma~\ref{lem:lottery-equiv} represents these tickets as independent Bernoulli($D/2^\kappa$) trials. The typicality lemma and the backbone bounds above therefore apply with $q\leftarrow q_{\mathsf{ML}}$ and $T\leftarrow D$; the proofs of Theorems~12, 15, and~16 of~\cite{GKL20} then yield the stated guarantees.
\end{proof}

\paragraph{Parameter calibration.}
Because ML inference is slower than hashing, $q_{\mathsf{ML}}\ll q$ and equal difficulty would make PoML's $f$ much smaller than PoW's. PoML therefore adjusts $f$ through $D$. For target average block time $\tau_b$ and round duration $\Delta$, let $f=\Delta/\tau_b$. Ignoring integer rounding, the $(n-t)$ honest parties evaluate $(n-t)q_{\mathsf{ML}}$ virtual tickets per round, giving a difficulty of $D=2^\kappa\!\left(1-(1-f)^{1/((n-t)q_{\mathsf{ML}})}\right)$, which is approximately $f2^\kappa/((n-t)q_{\mathsf{ML}})$ when $f$ is small.

\subsection{Explicit Parameters for PoML Guarantees}
\label{app:reduction-main-thm}

We now instantiate the backbone bounds for Theorem~\ref{thm:main-reduction}.
The substitutions are $q\mapsto q_{\mathsf{ML}}$ and $T\mapsto D$, as established
by the Bernoulli-trial representation and the typicality lemma above.

\begin{theorem}[Explicit backbone parameters]
\label{thm:main-reduction-app}
Under the assumptions of Theorem~\ref{thm:main-reduction}, the following hold for any $(\varepsilon, \lambda)$-typical execution (whose failure probability, including threshold rounding, is bounded in Lemma~\ref{lem:typicality}):
\begin{enumerate}[itemsep=0em,nosep]
    \item \textbf{Common Prefix} (\cite{GKL20}, Theorem~15) with parameter $k \geq 2\lambda f$: for any two honest parties with chains $\mathcal{C}_1, \mathcal{C}_2$ adopted at rounds $r_1 \leq r_2$, $\mathcal{C}_1^{\lceil k} \preceq \mathcal{C}_2$.
    \item \textbf{Chain Quality} (\cite{GKL20}, Theorem~16) with parameters $\mu$ and $\ell \geq 2\lambda f$: in any $\ell$ consecutive blocks of any honest party's chain, the fraction of honest blocks is at least $\mu$, where $\mu = 1 - \bigl(1 + \tfrac{\gamma}{2}\bigr)\cdot \tfrac{t}{n-t} - \tfrac{\varepsilon}{1-\varepsilon}.$
    \item \textbf{Chain Growth} (\cite{GKL20}, Theorem~12) with parameters $\tau$ and $s \geq \lambda$: for any $s$ consecutive rounds, the chain of any honest party grows by at least $\tau \cdot s$ blocks, with $\tau = (1 - \varepsilon)f$.
\end{enumerate}
\end{theorem}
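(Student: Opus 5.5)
The plan is to reduce Theorem~\ref{thm:main-reduction-app} entirely to the GKL20 backbone theorems. We apply them to the per-round random variables $X_r$, $Y_r$, $Z_r$ defined in the typicality subsection, instantiated with $q=q_{\mathsf{ML}}$ and $T=D$. We first fix an $(\varepsilon,\lambda)$-typical execution; by Lemma~\ref{lem:typicality} this event fails only with probability $e^{-\Omega(\varepsilon^2 f\lambda+\kappa-\log L)}$. On this event we then check that every hypothesis used in the proofs of Theorems~12, 15 and~16 of~\cite{GKL20} holds for PoML.

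\textbf{Step 1: inherit the backbone bounds.} By Theorem~\ref{thm:amort-resist} and Lemma~\ref{lem:bijection}, each party obtains at most $q_{\mathsf{ML}}$ virtual tickets per round. Lemma~\ref{lem:lottery-equiv} makes these independent Bernoulli$(D/2^\kappa)$ trials. The sufficient-query assumption ensures that honest parties fill their budget, so the expectations $\mathbb{E}[X_r]=f$, $\mathbb{E}[Y_r]\ge f(1-f)$ and $\mathbb{E}[Z_r]\le \frac{t}{n-t}\frac{f}{1-f}$ hold exactly as stated. Combining these with $t\le(1-\gamma)(n-t)$ and $3f+3\varepsilon<\gamma$ yields Lemma~\ref{lem:gkl-lemma11-app}(a)--(c) on the typical execution.

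\textbf{Step 2: derive the three properties, each from one part of Lemma~\ref{lem:gkl-lemma11-app}.}
\begin{itemize}
\item \emph{Chain growth.} Whenever an honest success occurs, every honest chain length increases by at least one in the following round. Hence over $s\ge\lambda$ rounds the growth is at least $X(S)>(1-\varepsilon)fs$ by part~(a), which gives $\tau=(1-\varepsilon)f$.
\item \emph{Chain quality.} We take $\ell\ge2\lambda f$ consecutive blocks and extend them to the maximal round interval $S$ in which they were produced, so that $|S|\ge\lambda$. The adversarial blocks number at most $Z(S)$, while the total is at least $X(S)$. Part~(c) then bounds the adversarial fraction by $(1+\frac{\gamma}{2})\frac{t}{n-t}+\frac{\varepsilon}{1-\varepsilon}$, which gives the stated $\mu$.
\item \emph{Common prefix.} We follow GKL's argument via uniquely successful rounds. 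If two honest chains diverged by $k\ge2\lambda f$ blocks, then each uniquely successful round in the relevant interval would have to be matched by an adversarial block at the same height. This would force $Z(S)\ge Y(S)$, contradicting part~(c).
\end{itemize}
Condition~(c) of typicality (no insertions, copies or predictions) transfers because $H$ and $G$ are random oracles. It also relies on the binding $\mathsf{bind}_1=G(s,\mathsf{tx})$, which links each block to its parent, as in the precomputation-resistance argument.

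\textbf{Main obstacle.} The delicate point is Step~1's claim that the combinatorial arguments of GKL apply verbatim. GKL count hash queries, whereas PoML counts weighted tickets, and a single block produced from a complexity-$C_i$ pair corresponds to up to $C_i$ virtual successes. This could let one real block absorb several Bernoulli successes, which is harmless for chain growth but must be checked for the other properties. I would argue that collapsing several successes into one block only decreases $Z$ as well as $X$ and $Y$. Concretely, I would redefine $Z_r$ and $Y_r$ as indicators at the block level and show that the inequalities of Lemma~\ref{lem:gkl-lemma11-app} still hold by monotonicity: $Z$ can only shrink, and for $Y$ a uniquely successful round at the ticket level remains unique at the block level. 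With this in place, the GKL proofs go through under the substitutions $q\mapsto q_{\mathsf{ML}}$ and $T\mapsto D$.
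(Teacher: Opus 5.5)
Your proposal is correct and takes the same route as the paper, whose proof chains Theorem~\ref{thm:amort-resist}, Lemmas~\ref{lem:bijection} and~\ref{lem:lottery-equiv}, and Lemma~\ref{lem:typicality} to obtain Lemma~\ref{lem:gkl-lemma11-app}, and then cites Theorems~12, 15 and~16 of GKL directly rather than re-deriving them as your Step~2 does. Your block-versus-ticket coupling is a genuine refinement the paper skips, since its Bernoulli lemma only matches the probability of \emph{at least one} success while a complexity-$C_i$ evaluation yields at most one block; phrase it as: under the OR-coupling $X_r$ is preserved exactly, $Y_r$ can only gain uniquely successful rounds, and $Z_r$ can only shrink. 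Do not say that $X$ and $Y$ decrease, because a decrease would break the lower bounds in part~(a) and the inequality $Z(S)<Y(S)$.
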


\begin{proof}
Theorem~\ref{thm:amort-resist} and Lemmas~\ref{lem:bijection} and~\ref{lem:lottery-equiv} represent each party's lotteries using at most $q_{\mathsf{ML}}$ independent Bernoulli trials per round. By Lemma~\ref{lem:typicality}, PoML executions are $(\epsilon,\lambda)-$typical with the stated probability. Therefore, Lemma~\ref{lem:gkl-lemma11-app} applies in PoML and Theorems~12, 15 and~16 of~\cite{GKL20} follow directly.
\end{proof}

\section{Additional Wasted-Work Results}
\label{app:wasted-work}

The corresponding heatmaps for all
three target block times are shown here. Each panel reports the mean percentage
of completed wasted complexity over the same $(M,Q)$ configurations described in Section~\ref{sec:useful-work-experiment}.

\begin{figure*}[p]
  \centering
  \begin{subfigure}{0.68\linewidth}
    \centering
    \includegraphics[width=\linewidth]{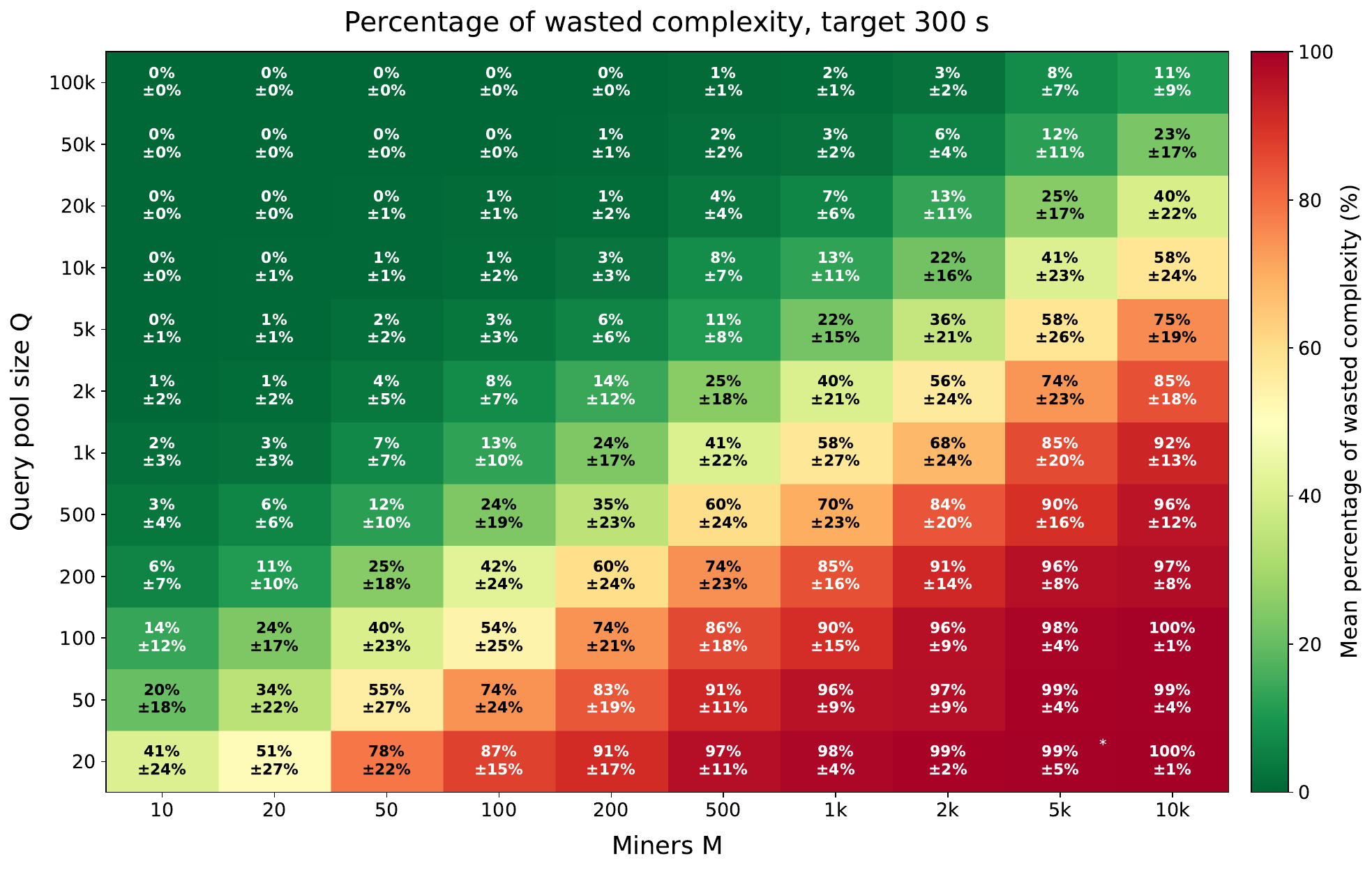}
    \caption{Mean percentage of completed wasted complexity at a target block time of $300$~s.}
    \label{fig:uniform-fee-waste-300}
    \label{fig:uniform-fee-waste}
  \end{subfigure}

  \medskip
  \begin{subfigure}{0.68\linewidth}
    \centering
    \includegraphics[width=\linewidth]{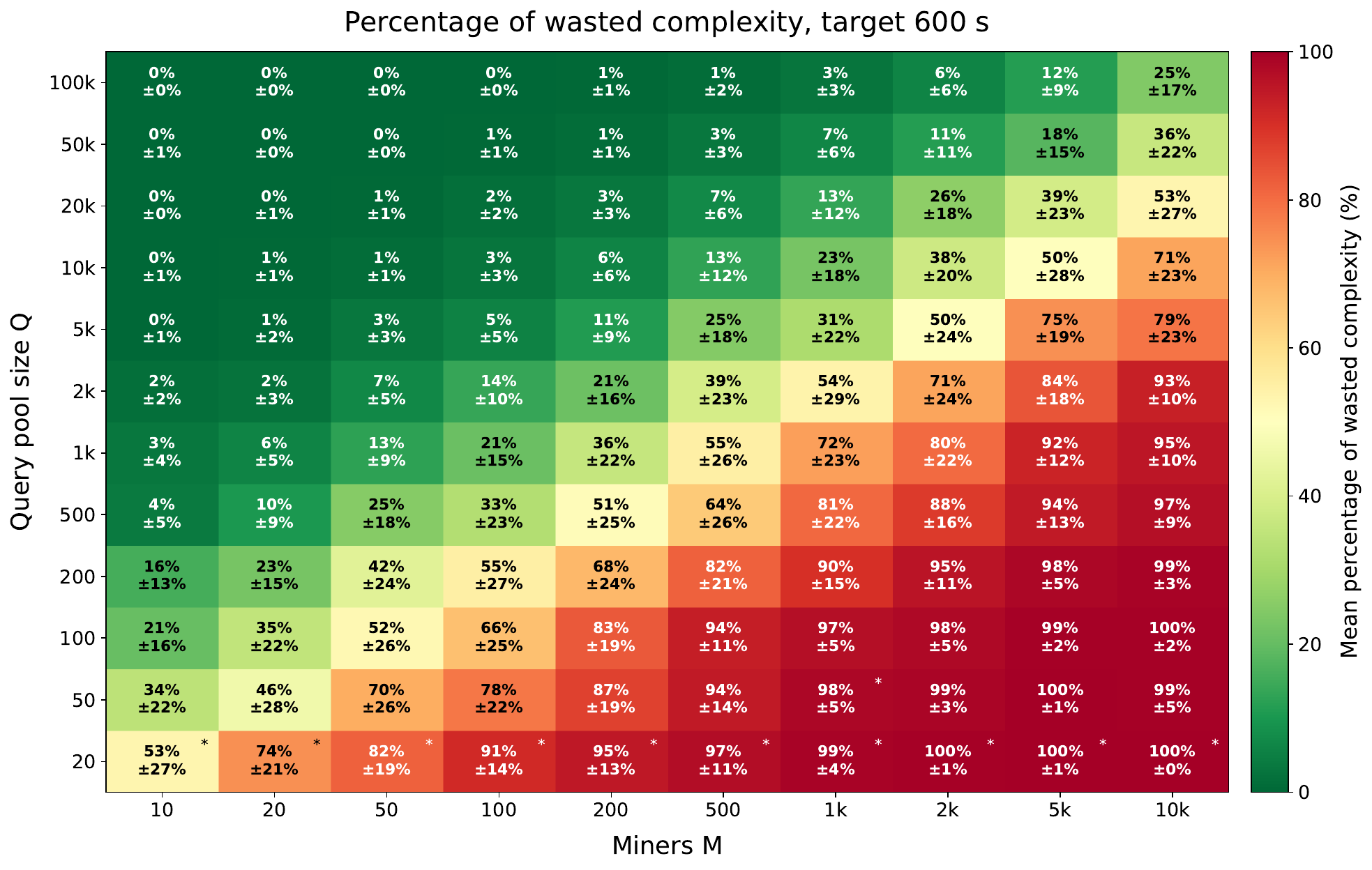}
    \caption{Mean percentage of completed wasted complexity at a target block time of $600$~s.}
    \label{fig:uniform-fee-waste-600}
  \end{subfigure}

  \medskip
  \begin{subfigure}{0.68\linewidth}
    \centering
    \includegraphics[width=\linewidth]{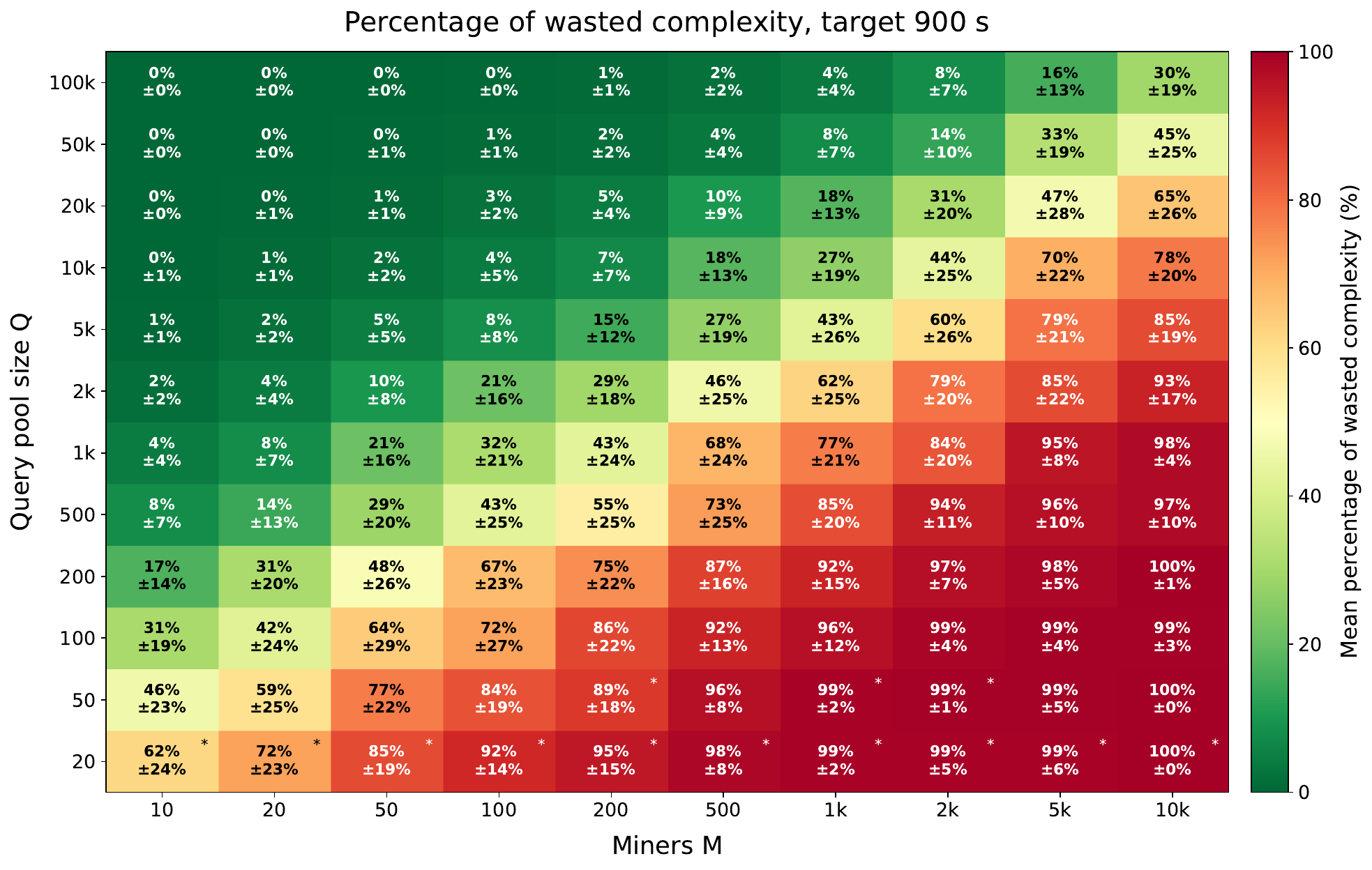}
    \caption{Mean percentage of completed wasted complexity at a target block time of $900$~s.}
    \label{fig:uniform-fee-waste-900}
  \end{subfigure}
\end{figure*}
\clearpage

\section{Block-Time Calculation}
\label{app:scalability}

This appendix provides the full derivation of the block-time cited
in Section~\ref{sec:scalability-discussion}.

We use as reference the zkLLM~\cite{sun2024zkllm} prover benchmark: an inference--proof pair for a $125$M-parameter model completes in $73.9$~s on a single A100 GPU. We match the round time $\Delta = 73.9$~s with one pair per round so $q_{\mathsf{ML}} = 1$.
We fix $\varepsilon = 0.05$ (standard in Chernoff-bound).

The per-round success probability $f$ governs the security--liveness trade-off in the Bitcoin Backbone framework.
$f$ must be small enough that the honest-majority condition (Equation~\ref{eq:honest-majority} below) is satisfied with a meaningful margin, yet large enough that block production remains practical (Equation~\ref{eq:block-time}).
With $\Delta = 73.9$~s, setting $f = 0.10$ yields a block time of approximately $12.3$ minutes, roughly comparable to Bitcoin's $10$-minute block target, with an adversarial-tolerance bound of $\sim 34.6\%$. The derivation is as follows.

\paragraph{Block time.}
Using the difficulty relation in Section~\ref{subsec:main-thm}, the expected block time satisfies
\begin{equation}
    \tau_b \;=\; \frac{\Delta}{f}.
    \label{eq:block-time}
\end{equation}
Substituting $\Delta = 73.9$~s and $f = 0.10$:
\begin{equation*}
    \tau_b \;=\; \frac{73.9}{0.10} \;=\; 739 \text{ s} \;\approx\; 12.3 \text{ minutes}.
\end{equation*}

\paragraph{Honest-majority margin.}
Theorem~\ref{thm:main-reduction} requires the honest-majority condition
\begin{equation}
    3f + 3\varepsilon \;<\; \gamma,
    \label{eq:honest-majority}
\end{equation}
where $\gamma \in (0, 1]$ is defined via $t / (n - t) \le 1 - \gamma$
\cite[\S 4.1]{GKL20}. Substituting $f = 0.10$ and
$\varepsilon = 0.05$, and adding a $0.02$ safety margin so that the
inequality is strict in practice:
\begin{equation*}
    \gamma \;>\; 3(0.10) + 3(0.05) + 0.02 \;=\; 0.47.
\end{equation*}
Choosing $\gamma = 0.47$ at the lower bound, the honest-majority condition
$t \le (1 - \gamma)(n - t)$ rearranges to
\begin{equation}
    \frac{t}{n} \;\le\; \frac{1 - \gamma}{2 - \gamma},
    \label{eq:adv-fraction}
\end{equation}
yielding
\begin{equation*}
    \frac{t}{n} \;\le\; \frac{1 - 0.47}{2 - 0.47} \;=\; \frac{0.53}{1.53} \;\approx\; 34.6\%.
\end{equation*}

\end{document}